\theoremstyle{plain}
\newtheorem{lemma}{Lemma}[section]
\newtheorem{proposition}[lemma]{Proposition}
\newtheorem{corollary}[lemma]{Corollary}
\newtheorem{theorem}[lemma]{Theorem}
\newtheorem{question}[lemma]{Question}
\theoremstyle{nonumberplain}
\theoremstyle{plain}
\newtheorem{definition}[lemma]{Definition}
\newtheorem{example}[lemma]{Example}
\newtheorem{remark}[lemma]{Remark}
\theoremstyle{nonumberplain}
\newtheorem{proof}{Proof}
\newcommand{\set}[1]{\left\{ #1\right\}}
\renewcommand{\vec}[1]{\textbf{#1}}
\numberwithin{equation}{section}
\renewcommand{\theequation}{\thesection-\arabic{equation}}
\crefname{section}{Section}{Sections}
\crefname{subsection}{\S}{\S\S}
\crefname{definition}{definition}{definitions}
\crefname{ex}{example}{examples}
\crefname{remark}{remark}{remarks}
\crefname{convention}{convention}{conventions}
\crefname{lemma}{lemma}{lemmas}
\crefname{proposition}{proposition}{propositions}
\crefname{corollary}{corollary}{corollaries}
\crefname{theorem}{theorem}{theorems}
\crefname{assumption}{assumption}{Assumptions}
\crefname{conjecture}{conjecture}{Conjectures}
\crefname{equation}{}{}
\newcommand\numberthis{\addtocounter{equation}{1}\tag{\theequation}}
\begin{document}

\title{Affine equivalence for quadratic rotation symmetric Boolean functions}
\author{Alexandru Chirvasitu  \footnote{University at Buffalo, Buffalo, NY, USA;  e-mail: achirvas@buffalo.edu}, Thomas W. Cusick \footnote{University at Buffalo, Buffalo, NY, USA;  e-mail: cusick@buffalo.edu}}
\date{}

\maketitle
\begin{abstract}
  Let $f_n(x_0, x_1, \ldots, x_{n-1})$ denote the algebraic normal form (polynomial form) of a rotation symmetric (RS) Boolean function of degree $d$ in $n \geq d$ variables and let $wt(f_n)$ denote the Hamming weight of this function.  Let $(0, a_1, \ldots, a_{d-1})_n$ denote the function $f_n$ of degree $d$ in $n$ variables generated by the monomial $x_0x_{a_1} \cdots x_{a_{d-1}}.$ Such a function $f_n$ is called monomial rotation symmetric (MRS). It was proved in a $2012$ paper that for any MRS $f_n$ with $d=3,$ the sequence of weights $\{w_k = wt(f_k):~k = 3, 4, \ldots\}$ satisfies a homogeneous linear recursion with integer coefficients.  This result was gradually generalized in the following years, culminating around $2016$ with the proof that such recursions exist for any rotation symmetric function $f_n.$ Recursions for quadratic RS functions were not explicitly considered, since a $2009$ paper had already shown that the quadratic weights themselves could be given by an explicit formula. However, this formula is not easy to compute for a typical quadratic function. This paper shows that the weight recursions for the quadratic RS functions have an interesting special form which can be exploited to solve various problems about these functions, for example, deciding exactly which quadratic RS functions are balanced.
\end{abstract}
{\bf Keywords:} Boolean function, rotation symmetric, Hamming weight, recursion.


\section{Introduction}
\label{intro} 
If we define $V_n$ to be the vector space of dimension $n$ over the finite field $GF(2) = \set{0, 1}$, then an $n$ variable Boolean function $f(x_0, x_1, ..., x_{n-1}) = f(\vec{x})$ is a map from $V_n$ to $GF(2)$.  Every Boolean function $f(\vec{x})$ has a unique polynomial representation (usually called the {\em algebraic normal form} \cite[p. 6]{CBF}), and the {\em degree} of $f$ (notation $deg ~f$) is the degree of this polynomial.  A function of degree $\leq 1$ is \emph{affine}, and if the constant term is 0, then the function is \emph{linear}.  We let $B_n$ denote the set of all Boolean functions in $n$ variables, with addition and multiplication done $\bmod ~{2}.$ When addition $\bmod~{2}$ is clear from the context we use $+,$ but if we wish to emphasize the fact that addition is being done $\bmod~{2}$ we will use $\oplus.$ We also use $\oplus$ for the xor addition of two binary $m$-tuples.  We use $a \| b$ to denote the concatenation of two strings $a$ and $b.$

If we list the $2^n$ elements of $V_n$ as $v_0 = (0,\ldots,0), v_1 = (0,\ldots,0,1), \ldots$ in lexicographic order, then the $2^n$-vector $(f(v_0), f(v_1),\ldots,f(v_{2^n - 1}))$ is called the {\em truth table} of $f$.  The {\em weight} (also called Hamming weight) $wt(f)$ of $f$ is defined to be the number of 1's in the truth table for $f$.  In many cryptographic uses of Boolean functions, it is important that the truth table of each function $f$ has an equal number of 0's and 1's; in that case, we say that the function $f$ is {\em balanced}.  Another important kind of function in cryptography is the {\em bent} function (see \cite[Chapter 5]{CBF}), which is defined only if the number of variables is even.  A function $f$ in $n$ variables is bent if its distance (also called Hamming distance) from the set of all affine functions has its largest possible value $2^{n-1} - 2^{(n/2)-1}.$ We shall need the concept of the {\em nonlinearity} of a function $f$ (notation $N(f)$), which is defined to be the minimum distance from $f$ to any affine function. Thus $N(f)=0$ if and only if $f$ is an affine function and bent functions have the largest possible value for the nonlinearity.

Two Boolean functions $f(\vec{x})$ and $g(\vec{x})$ in $n$ variables are said to be {\it affine equivalent} if there exists an invertible matrix $A$ with entries in $GF(2)$ and ${\bf b} \in V_n$ such that $ f((\vec{x}))=g(A(\vec{x})\oplus {\bf b}).$
In general, determining whether or not two Boolean functions are affine equivalent is difficult, even in the simplest cases.  However, there is a simple test for the equivalence of quadratic functions, which is given in the following lemma.
\begin{lemma}
\label{quadeq}
Two quadratic functions $f$ and $g$ in $B_n$ are affine equivalent if and only if $wt(f) = wt(g)$ and $N(f) = N(g).$
\end{lemma}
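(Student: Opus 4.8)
The plan is to prove the two implications separately. The forward implication is elementary and follows from the fact that an affine change of variables acts as a bijection of the domain, whereas the reverse implication rests on the classification of quadratic forms over $GF(2)$ together with a count of the attainable weights for each symplectic rank.

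For the forward direction, write $f = g\circ T$ where $T(\vec{x}) = A\vec{x}\oplus\vec{b}$ is the affine bijection of $V_n$ supplied by the hypothesis. Since $T$ permutes $V_n$, the truth table of $f$ is a rearrangement of that of $g$, so $wt(f)=wt(g)$. For the nonlinearity, observe that for any affine function $\ell$ the composite $\ell\circ T^{-1}$ is again affine, and that precomposition by the bijection $T$ preserves Hamming distance: with $\vec{y}=T\vec{x}$ one gets $d(f,\ell)=|\set{\vec{x}:g(T\vec{x})\neq\ell(\vec{x})}|=|\set{\vec{y}:g(\vec{y})\neq\ell(T^{-1}\vec{y})}|=d(g,\ell\circ T^{-1})$. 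As $\ell$ runs over all affine functions so does $\ell\circ T^{-1}$, whence $N(f)=N(g)$.

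For the reverse direction I would first extract the two standard invariants. To a quadratic $f$ associate the alternating bilinear form $B_f(\vec{x},\vec{y})=f(\vec{x}\oplus\vec{y})\oplus f(\vec{x})\oplus f(\vec{y})\oplus f(\vec{0})$, whose rank $2h$ is an even integer invariant under the affine action. Using the Walsh transform $W_f(a)=\sum_{\vec{x}}(-1)^{f(\vec{x})\oplus a\cdot\vec{x}}$, recall that a quadratic function has spectrum contained in $\set{0,\pm 2^{n-h}}$, so that $N(f)=2^{n-1}-2^{n-h-1}$; thus $N(f)$ determines $h$ (with $h=0$ exactly for affine $f$). Likewise $wt(f)=2^{n-1}-\tfrac12 W_f(0)\in\set{2^{n-1},\,2^{n-1}\pm 2^{n-h-1}}$. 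Now invoke the classical normal form (Dickson; see \cite{CBF}): after an affine change of variables $f$ splits as a nondegenerate quadratic form on a $2h$-dimensional complement of the radical of $B_f$ plus an affine function on the radical (of dimension $n-2h$). The nondegenerate part is hyperbolic (Arf $0$) or elliptic (Arf $1$), and the radical part is constant or carries a genuine linear term. If a linear term survives then $f$ is balanced with $wt(f)=2^{n-1}$, and a single linear substitution absorbs the Arf invariant, so the two types coincide into one class; if the radical part is constant then the weight is $2^{n-1}-2^{n-h-1}$ or $2^{n-1}+2^{n-h-1}$ according to the (folded) type. Hence for each fixed $h$ the affine equivalence classes correspond bijectively to the attainable weights, so $N(f)$ pins down $h$ and then $wt(f)$ pins down the class; equality of $(wt,N)$ therefore forces affine equivalence.

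The main obstacle is the bookkeeping in this last step: one must verify that each attainable weight value for a given rank is realized by exactly one equivalence class. Concretely, the delicate coincidences are that the hyperbolic/elliptic distinction becomes invisible once a linear term survives on the radical, and that the two a priori different realizations of each extremal weight (hyperbolic-with-constant versus elliptic-with-opposite-constant) are genuinely affine equivalent rather than merely weight-equal, which I would confirm by exhibiting the explicit linear substitution that converts one into the other. Checking that these identifications leave no room for a finer invariant beyond $(wt,N)$ is where the real content of the lemma lies; the surrounding normal-form reduction is routine.
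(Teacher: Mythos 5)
Your proposal is correct, and it supplies something the paper itself does not: the paper's ``proof'' of \Cref{quadeq} is only a pointer to the literature (\cite[Lemma 2.3, p. 5068]{Cu11}), with no argument given. Your forward direction (an affine substitution permutes the truth table, and $\ell\mapsto \ell\circ T^{-1}$ permutes the set of affine functions while preserving Hamming distance) is the standard one. Your reverse direction -- rank $2h$ of the associated alternating form as an affine invariant, the quadratic Walsh spectrum $\set{0,\pm 2^{n-h}}$ giving $N(f)=2^{n-1}-2^{n-h-1}$, then the Dickson normal form -- is precisely the machinery the paper itself recalls later as \Cref{Dick} and \Cref{wtnonlin}: $N$ pins down $h$, and the weight $2^{n-1}$ or $2^{n-1}\pm 2^{n-h-1}$ then separates the three equivalence classes of fixed rank (balanced, constant $0$, constant $1$), so $(wt,N)$ is a complete invariant. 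The one step you leave as a promissory note -- that the hyperbolic/elliptic (Arf) distinction folds away under \emph{affine} (as opposed to linear) equivalence -- is genuinely needed, but it is a one-line check rather than a gap: the translation $x_1\mapsto x_1\oplus 1$, $x_2\mapsto x_2\oplus 1$ gives $(x_1\oplus 1)(x_2\oplus 1)=x_1x_2\oplus x_1\oplus x_2\oplus 1$, so the elliptic plane form $x_1x_2\oplus x_1\oplus x_2$ becomes the hyperbolic one plus a constant, and when a linear term $x_{2h+1}$ survives on the radical a further translation $x_{2h+1}\mapsto x_{2h+1}\oplus 1$ absorbs that constant, merging the two types into a single balanced class. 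With that substitution written out, your argument is complete and self-contained, which is more than the paper (or, strictly speaking, this paper's one-sentence citation) provides.
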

\begin{proof}
This result has been well known for a long time, but there does not seem to be a proof in the literature before the one given in \cite[Lemma 2.3, p. 5068]{Cu11}.
\end{proof}

We shall also need the concept of {\em Walsh transforms}. The Walsh transform of a function $f$ in $n$ variables is the map $W_f: \mathbb{V}_n \rightarrow \mathbb{R}$ defined by 
$$ W_f({\bf w}) = \sum_{x \in V_n} (-1)^{f({\bf x}) +{\bf w} \cdot {\bf x}}, $$
where the values of $f$ are taken to be the real numbers $0$ and $1.$  The integers $ W_f({\bf w})$ are called {\em Walsh values}. We shall also need the well known formula
(see \cite[Th. 2.21, p. 17]{CBF})
\begin{equation}
\label{nonl}
N(f_n) = 2^{n-1}- \frac{1}{2} \underset{{\bf u}\in V_n} \max |W_f({\bf u})|.
\end{equation}

We define a cyclic permutation $\rho$ on $n$ variables by 
$\rho(x_0, x_1, \cdots, x_{n-1})=(x_1, x_2, \cdots, x_{n-1}, x_0)$. Then a Boolean function $f(x)$ in $n$ variables, where $x=(x_0, x_1, \cdots, x_{n-1})$, is  \emph{rotation symmetric} (RS for brevity) if $f(x)=f(\rho(x))$ for all $x \in V_n$.  A Boolean function is \emph{monomial rotation symmetric} (MRS for brevity) if it is rotation symmetric and generated by a single monomial. We use the notation 
$(0, a_1, \ldots, a_{d-1})_n$ for the monomial rotation symmetric function $f(x_0, x_1, \ldots, x_{n-1})$ of degree $d$ in $n$ variables generated by the monomial $x_0x_{a_1} \cdots x_{a_{d-1}}.$
 In \cite{PQ} Piepryzyk and Qu showed that rotation symmetric Boolean functions are useful  in cryptography for designing fast hash functions.  Since then, further applications of these functions in cryptography and coding theory have been found (many references for this are in \cite[Chapter 6]{CBF}), so much attention has been given to rotation symmetric Boolean functions. 
 
 \Cref{quadeq}, combined with the work in \cite{Kim09}, enables us to give a complete description of all of the affine equivalence classes for the quadratic MRS functions. This is explained in \Cref{affeq} below.

\subsection*{Acknowledgements}

A.C. was partially supported by NSF grant DMS-1801011. 
 
\section{Preliminaries}\label{se.prel}

We will need to recall a correspondence between quadratic RS functions on $GF(2)^n$ and functions of the form (here $Tr_n$ is the usual absolute trace on $GF(2^n)$ over $GF(2)$)
\begin{equation*}
  GF(2^n)\ni x \mapsto \sum_{i=1}^n a_i\mathrm{Tr_n}(x^{2^i+1})\in GF(2)
\end{equation*}
that have seen considerable interest (see e.g. \cite{Meidl17} and the many references therein). This correspondence was introduced in \cite[Definition 4.1]{cgl-sec}: given an RS quadratic function $Q$ in $n$ Boolean variables $x_1$ up to $x_n$, the corresponding map $Q'$ on $GF(2^n)$ is defined simply by
\begin{equation}\label{eq:13}
  Q'(x) = Q(x,x^2,\cdots,x^{2^{n-1}}). 
\end{equation}
The transformation $Q\mapsto Q'$ is less well behaved than one might hope: it does not, in general, preserve the nonlinearity or the weight (pathological behavior is noted throughout \cite[subsections 5.2 and 5.3]{cgl-sec}). Nevertheless, by \cite[Theorem 5.1]{cgl-sec} nonlinearity {\it is} preserved in the quadratic case we are concerned with here.

Since we are interested in whether or not quadratic functions are balanced, that is the property we will have to argue is preserved by $Q\mapsto Q'$. We have not been able to find this in the literature, and hence include a proof.

\begin{theorem}\label{th.qq'}
  Let $Q$ be a rotation symmetric quadratic function on $GF(2)^n$ and $Q'$ its corresponding trace representation defined by \Cref{eq:13}.

  Then, the Walsh transforms $W_{Q}({\bf 0})$ and $W_{Q'}({\bf 0})$ have the same absolute value:
  \begin{equation}\label{eq:15}
    W_Q({\bf 0})^2 = W_{Q'}({\bf 0})^2.
  \end{equation}
  In particular,
  \begin{equation}
    \label{eq:17}
    Q \text{ is balanced if and only if } Q' \text{ is}.
  \end{equation}  
\end{theorem}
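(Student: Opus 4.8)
The plan is to reduce both sides of \Cref{eq:15} to the standard ``radical'' formula for the Walsh value at ${\bf 0}$ of a quadratic form, and then to match the two resulting quantities through the dictionary between the cyclic shift on $GF(2)^n$ and the Frobenius on $GF(2^n)$. Writing $B_Q(x,y) = Q(x+ y)+ Q(x)+ Q(y)$ for the (symmetric, alternating) polarization of $Q$, the substitution $y\mapsto x+ z$ gives
\begin{equation*}
  W_Q({\bf 0})^2 = \sum_{x,z}(-1)^{Q(z)+ B_Q(x,z)} = 2^n\sum_{z\in \mathrm{Rad}(B_Q)}(-1)^{Q(z)},
\end{equation*}
and since $Q$ is additive on $\mathrm{Rad}(B_Q)$ the inner sum equals $2^{\dim\mathrm{Rad}(B_Q)}$ when $Q|_{\mathrm{Rad}(B_Q)}\equiv 0$ and $0$ otherwise. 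The identical computation applies to $Q'$ on $GF(2^n)\cong GF(2)^n$. Thus \Cref{eq:15} is equivalent to the conjunction of (a) $\dim\mathrm{Rad}(B_Q)=\dim\mathrm{Rad}(B_{Q'})$ and (b) $Q$ vanishes on $\mathrm{Rad}(B_Q)$ if and only if $Q'$ vanishes on $\mathrm{Rad}(B_{Q'})$; and \Cref{eq:17} then follows at once, since balancedness of $Q$ means exactly $W_Q({\bf 0})=0$.

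For the polarizations, writing the quadratic part of $Q$ as a sum of orbit sums $\sum_i x_ix_{i+a}$ over a distance set $S$, one computes $B_Q(x,y)=\sum_i x_i\sum_{a\in S}(y_{i+a}+ y_{i-a})$, so $\mathrm{Rad}(B_Q)=\ker p(\rho)$ for the circulant operator $p(\rho)=\sum_{a\in S}(\rho^a+\rho^{-a})$. On the field side a short expansion of $(x+y)^{2^a+1}$ gives $B_{Q'}(x,y)=Tr_n\big(x\cdot\sum_{a\in S}(y^{2^a}+ y^{2^{n-a}})\big)$, so $\mathrm{Rad}(B_{Q'})=\ker p(\sigma)$ with $\sigma\colon y\mapsto y^2$ the Frobenius and the \emph{same} polynomial $p$. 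By the normal basis theorem there is a $GF(2)$-linear isomorphism $\Phi\colon GF(2)^n\to GF(2^n)$ with $\Phi\rho=\sigma\Phi$, and it carries $\ker p(\rho)$ onto $\ker p(\sigma)$; this gives (a). (Alternatively (a) follows from the preservation of nonlinearity in \cite[Theorem 5.1]{cgl-sec}, since all nonzero Walsh values of a quadratic have modulus $2^{(n+\dim\mathrm{Rad})/2}$.)

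The real content is (b). Transporting $Q'$ through $\Phi$ yields a rotation symmetric quadratic $\widehat Q:=Q'\circ\Phi$ on $GF(2)^n$ sharing the radical $R:=\ker p(\rho)$ with $Q$, and $Q|_R$ and $\widehat Q|_R$ are both $\rho$-invariant linear functionals on $R$; I must show they vanish together. The transparent case is when $GF(2^n)/GF(2)$ has a self-dual normal basis: then the Gram matrix of $\{\sigma^i\alpha\}$ is the identity, a direct computation gives $\widehat Q=Q$, and in fact $W_{Q'}({\bf 0})=W_Q({\bf 0})$. In general $\widehat Q$ and $Q$ have polarizations $p(\rho)$ and $Gp(\rho)$ differing by the invertible symmetric circulant Gram matrix $G$, and the idea is to absorb $G$ through a circulant change of variables $P$ with $\overline P P=G$, reducing to the self-dual case. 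I expect the main obstacle to be precisely here: when $G$ is not a norm in $GF(2)[t]/(t^n-1)$ — equivalently when no self-dual normal basis exists — no such circulant $P$ is available, and one must instead argue directly that the discrepancy between the two functionals on $R$ cannot flip whether the form vanishes, e.g.\ by pinning down the $\rho$-invariant functionals on $R$ (note ${\bf 1}\in R$ always) and tracking the action of $G$ on $R$ by a parity argument. Controlling this quadratic refinement on the radical, under a correspondence that is known \emph{not} to preserve weight, is where the difficulty lies.
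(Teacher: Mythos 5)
Your setup is sound and overlaps substantially with the paper's own strategy: both reduce \Cref{eq:15} to the statement that a quadratic form is unbalanced exactly when it vanishes on the radical of its polarization, and both must then compare the two radical restrictions. Your part (a) (equality of radical dimensions, via the normal basis theorem or via nonlinearity preservation from \cite[Theorem 5.1]{cgl-sec}) is fine, and in the case where a self-dual normal basis exists your argument for (b) is complete and in fact stronger than the theorem: it shows $\widehat Q = Q$ on the nose, hence $W_{Q'}({\bf 0})=W_Q({\bf 0})$ with sign.

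However, there is a genuine gap, and it sits exactly where you suspect: self-dual normal bases of $GF(2^n)$ over $GF(2)$ exist precisely when $4 \nmid n$ (Lempel--Seroussi), so for $n \equiv 0 \bmod 4$ your reduction has no starting point, and your fallback --- absorbing the Gram matrix $G$ by a circulant square root, or a ``parity argument'' on the $\rho$-invariant functionals on $R$ --- is an unexecuted idea, not a proof. Worse, $4 \mid n$ is not a marginal case for this paper: balancedness of $Q_n$ is governed by the $2$-adic valuation of $n$ (cf. \Cref{nuth}, \Cref{th.val-iff}, \Cref{th.pows2}), so the unproved case is precisely the one carrying the content, and it is also the regime where weight is genuinely not preserved, so no argument that forces $\widehat Q = Q$ can work there. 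The paper closes this hole by a route that never sees the self-dual dichotomy: it extends scalars to $W = V \otimes_{GF(2)} GF(2^n)$, observes that the restriction of $Q$ to $\ker C|_W$ is Frobenius-semi-linear so that vanishing on $\ker C|_V$ is equivalent to vanishing on $\ker C|_W$, and then invokes Wu--Liu \cite[equation (9) and Proposition 4.6]{wl-lin}: the embedding $\iota(y)=(y,y^2,\dots,y^{2^{n-1}})^t$ intertwines the linearized polynomial \Cref{eq:16} with multiplication by $C$, and $\ker C|_W \cap \iota(GF(2^n))$ is a $GF(2)$-structure on $\ker C|_W$. That last fact is the substantive input replacing your missing argument; without it (or something equivalent) your proof establishes the theorem only for $n \not\equiv 0 \bmod 4$.
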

\begin{proof}
  Since in general, for an $n$-variable Boolean function $f$ we have
  \begin{equation}\label{eq:14}
    W_f({\bf 0}) = 2^n-2wt(f)
  \end{equation}
  (from the definitions; see \cite[Lemma 2.10]{CBF}), $f$ is balanced if and only if its Walsh transform at ${\bf 0}$ vanishes. Hence, \Cref{eq:17} does indeed follow from \Cref{eq:15}.

  Conversely, suppose we show that $Q$ and $Q'$ are simultaneously (un)balanced. Then on the one hand this says that their Walsh transforms at ${\bf 0}$ vanish (or not) simultaneously. On the other hand, if one of the functions is not balanced then it follows from \Cref{le.parity-vectors} and \Cref{eq:14} that
  \begin{equation*}
    |W_f({\bf 0})|=2^n-2N(f).
  \end{equation*}
  Since $Q\mapsto Q'$ preserves the nonlinearity by \cite[Theorem 5.1]{cgl-sec}, \Cref{eq:15} follows from \Cref{eq:17}. In conclusion, it will be enough to prove the latter claim; the rest of the proof is devoted to this.

  Write
  \begin{equation*}
    Q(x) = \sum_{i=1}^{\lfloor\frac{n-1}2\rfloor}\sum_jc_i x_jx_{j+i},\ c_i\in GF(2)
  \end{equation*}
  where $j$ indices are considered modulo $n$. Following the proof of \cite[Lemma 1]{Gao12}, consider the matrix $(c_{i,j})_{i,j\mod n}$ defined by $c_{i,j}=c_{j-i}$ where the latter is defined and extended via $c_{n-i}=c_i$ where it initially was not (and $c_0=0$).

  Set also $C=B+B^t$, where $B$ is the (strictly) upper triangular part of $C$. We can then write $Q(x)=x^tBx$ (if we regard $x$ as a column vector), and one way to phrase \cite[Theorem 8.23]{car-bool} is to say that $Q$ is unbalanced precisely when the restriction of the quadratic form $x^tBx$ to the kernel of $C$ on the $n$-dimensional $GF(2)$-vector space $V$ is identically zero.

  Since $C$ is the matrix of the bilinear form associated to the quadratic form $Q(x)=x^tBx$, the restriction of the latter to the kernel of $C$ will be linear:
  \begin{align*}
    Q(x+y)&=Q(x)+Q(y),\quad \forall x,y\in \ker C \numberthis \label{eq:add} \\ 
    Q(cx) &= c^2Q(x)=cQ(x),\quad \forall x\in \ker C,\ c\in GF(2)
  \end{align*}
  (see also the remarks preceding \cite[Theorem 8.23]{car-bool}). 

  We now extend scalars to the $GF(2^n)$-vector space $W=V\otimes_{GF(2)}GF(2^n)$, again considering the quadratic form $Q$ and its associated bilinear form with matrix $C$ thereon; we sometimes write $C|_W$ to clarify that $C$ is regarded as the scalar-extended operator $C\otimes \mathrm{id}$ acting on $W=V\otimes_{GF(2)}GF(2^n)$ rather than $V$.

  On $\ker C|_W$ the restriction of the quadratic form $Q$ is {\it Frobenius-semi-linear}, in the sense that $Q$ satisfies the conditions in \Cref{eq:add} except that $x,y$ range over $W$, $c$ ranges over $GF(2^n)$, and the very last equality no longer holds.

  Note that since $C$ has entries in $GF(2)$ its kernel on $W$ is nothing but
  \begin{equation*}
    \ker C|_V\otimes_{GF(2)} GF(2^n). 
  \end{equation*}

  It follows from this and the noted semi-linearity that $Q$ is trivial on $C|_V$ if and only if it is trivial on $C|_W$. To summarize: $Q$ is balanced if and only if
  \begin{equation}\label{eq:18}
    Q(x)=x^tBx\text{ vanishes identically on }\ker C|_W. 
  \end{equation}

  Similarly, $Q'$ is unbalanced if and only if $Q'$ vanishes on the kernel of the additive (or {\it linearized}) polynomial
  \begin{equation}\label{eq:16}
    GF(2^n)\ni x\mapsto \sum_{i\le \lfloor\frac{n-1}2\rfloor} c_i (x^{2^{n-i}}+x^{2^i})\in GF(2^n).
  \end{equation}
  We now consider, following \cite{wl-lin}, the map $\iota$ defined by
  \begin{equation*}
    GF(2^n)\ni y\mapsto (y,y^2,\cdots,y^{2^{n-1}})^t\in W. 
  \end{equation*}
  Note that
  \begin{itemize}
  \item By the very definition of $Q'$ in \Cref{eq:13} we have
    \begin{equation*}
      Q'(x)=\iota(x)^tB\iota(x).
    \end{equation*}
  \item By \cite[equation (9)]{wl-lin} $\iota$ intertwines the polynomial \Cref{eq:16} and multiplication by $C$ on $W$, and hence also their kernels.     
  \end{itemize}

  In conclusion, the requirement that $Q'$ be trivial on the kernel of \Cref{eq:16} now simply means that the quadratic form $Q(x)=x^tBx$ is trivial on
  \begin{equation*}
    \ker C|_W\cap \iota(GF(2^n)). 
  \end{equation*}
  But according to \cite[Proposition 4.6]{wl-lin} this intersection is a $GF(2)$-structure on $\ker C|_W$, in the sense that a basis for it over $GF(2)$ is a basis for $\ker C|_W$ over $GF(2^n)$. In conclusion this latest vanishing condition is equivalent to \Cref{eq:18}, finishing the proof.
\end{proof}

\section{Weight recursions for MRS quadratics}
\label{secwtrec}
The paper \cite{C18} explains an algorithm for finding a linear recursion with integer coefficients for the Hamming weights 
$wt(f_n), n$ large enough,  where $f_n = f_n(x_0, x_1, \ldots, x_{n-1})$ is the algebraic normal form (polynomial form) of any RS Boolean function with degree $d$ in $n$ variables.  A Mathematica program which executes the algorithm is given in \cite{C17}. What the program actually does is use the algebraic normal form to compute a square matrix, called the {\it rules matrix}, whose minimal polynomial is a polynomial
$x^kp(x), k \geq 0,$ with integer coefficients such that $p(x)$ is the defining polynomial for the recursion satisfied by the weights. We say that $p(x)$ is the {\it recursion polynomial} for the weights $wt(f_n).$

In this section we prove some precise results about the recursion polynomials for quadratic MRS Boolean functions.  
Any such function must have its algebraic normal form equal to one of the functions
\begin{equation}\label{eq:3}
(0,t)_m = x_0 x_t+x_1 x_{t+1}+ \ldots + x_{m-1} x_{t-1}
\end{equation}
for some $t \geq 1$ or, if $m=2k$ is even and $t=k,$ equal to one of the functions
\begin{equation}
\label{short}
(0,k)_{2k} = x_0 x_{k}+x_1 x_{k+1}+ \ldots + x_k x_{2k}
\end{equation}
for some $k \geq 1.$  The functions of form \eqref{short} only have half as many monomials as the other quadratic MRS functions, and are called {\it short} functions. This terminology goes back at least to \cite[p. 5098]{Cu11}.

First we find an explicit form for the rules matrix for the quadratic functions $(0,t)_m.$ We need the cyclic permutation $\mu$ which acts on 
vectors $(b_1, b_2, \ldots, b_k)$ of any length $k$ by putting the last entry to the front, for example $\mu^2((0,1,0,1,0,0))=(0,0,0,1,0,1).$ Note this is the opposite direction of the permutation $\rho$ defined in the Introduction. 
\begin{theorem}
\label{th1}
We use $0_j$ to stand for a string of $j$ consecutive entries equal to $0$ and similarly for $1_j.$ The rules matrix for $(0,t)_m, t \geq 1,$ is a $2^{t}+1$ row square matrix $R'(t)$ which is obtained by adding a final column $(0_{2^{t}},2)^T$ and a final row 
$(0_{2^{t-1}},1_{2^{t-1}},2)$  to the $2^{t}$ row square matrix $R(t)$ whose rows are the $2^{t-1}$ pairs
$$\mu^i((1,0_{2^{t-1}-1},1,0_{2^{t-1}-1})),~\mu^i((1,0_{2^{t-1}-1},-1,0_{2^{t-1}-1}))$$
for $i=0, 1, \ldots, 2^{t-1}-1$ taken in order.  
\end{theorem}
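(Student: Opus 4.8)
The plan is to recognise the $2^t\times 2^t$ block $R(t)$ as a transfer matrix for the signed weight sum of $(0,t)_m$, and to read the bordering row and column as the bookkeeping that converts this sum into the weight. Writing $f_m$ for the function in \eqref{eq:3}, I would first replace the weight by the Walsh value at the origin: by \Cref{eq:14},
\begin{equation*}
  S_m:=W_{f_m}({\bf 0})=\sum_{x\in GF(2)^m}(-1)^{f_m(x)}=2^m-2\,wt(f_m),
\end{equation*}
so that $wt(f_m)=\tfrac12(2^m-S_m)$ and a recursion for $S_m$ is equivalent to one for the weights. Recall from \cite{C18,C17} that the rules matrix is the transition matrix of the finite-state process computing these quantities recursively in $m$, and that its minimal polynomial, stripped of its $x^k$ factor, is the recursion polynomial; it therefore suffices to produce this process explicitly for $(0,t)_m$.

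Because $f_m(x)=\sum_{j\bmod m}x_jx_{j+t}$ couples sites at cyclic distance $t$, the sum $S_m$ is a one-dimensional partition function whose transfer matrix $T$ acts on the $2^t$ windows $s=(x_j,\dots,x_{j+t-1})$: appending the next spin $b=x_{j+t}$ completes exactly the coupling $x_jx_{j+t}$, so $T$ sends $s$ to $(s_2,\dots,s_t,b)$ with weight $(-1)^{s_1b}$, where $s_1=x_j$ is the oldest spin. Thus every row of $T$ has exactly two nonzero entries, a $+1$ in the column reached with $b=0$ and a $(-1)^{s_1}$ in the column reached with $b=1$; closing up the cycle yields $S_m=\mathrm{Tr}(T^m)$ for $m$ large. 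I would then encode each window as the integer $\sum_k s_k2^{k-1}$, with the oldest spin as least significant bit. A direct check shows that the two windows sharing a suffix $(s_2,\dots,s_t)$ hit the two columns $i$ and $i+2^{t-1}$, and produce exactly the rows $\mu^i((1,0_{2^{t-1}-1},1,0_{2^{t-1}-1}))$ and $\mu^i((1,0_{2^{t-1}-1},-1,0_{2^{t-1}-1}))$, the sign in the second slot being the factor $(-1)^{s_1}$; moreover rows and columns are reindexed by the very same bijection. Hence $R(t)$ is conjugate to $T$ by a permutation matrix, so in particular $\mathrm{Tr}(R(t)^m)=S_m$ and $R(t)$ carries the full spectral content of the $S_m$ recursion.

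Finally I would account for the border. The passage $wt(f_m)=\tfrac12(2^m-S_m)$ introduces the geometric term $2^m$, that is, the additional eigenvalue $2$, and this is exactly what the extra coordinate supplies: since the final column is $(0_{2^t},2)^T$, the matrix $R'(t)$ is block lower triangular with corner entry $2$, so $\det(xI-R'(t))=(x-2)\det(xI-R(t))$ and the weight recursion polynomial is obtained from the minimal polynomial of $R'(t)$. What remains---and is the fiddly point---is to verify that the specific final row $(0_{2^{t-1}},1_{2^{t-1}},2)$ is the one emitted by the algorithm of \cite{C18}, i.e.\ that with this row a designated coordinate of the iterated state vector actually equals $wt(f_m)$, rather than merely satisfying the correct recursion. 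I expect the main obstacle to lie precisely here: matching the state ordering and the left/right action conventions of the rules-matrix algorithm to the transfer matrix $T$, and confirming that the identity $S_m=\mathrm{Tr}(T^m)$ holds throughout the stable range of $m$, so that no small-$m$ boundary effect corrupts the claimed recursion.
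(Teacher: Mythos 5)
You take a genuinely different route from the paper. The paper's entire proof of \Cref{th1} consists of stepping through the Mathematica code of \cite{C17} and checking that the program emits exactly the bordered matrix $R'(t)$; your proposal instead reconstructs the matrix a priori, as the Ising-type transfer matrix for the signed sum $S_m=W_{f_m}({\bf 0})$. The transfer-matrix core of your argument is correct and checks out in general: with the window $(x_j,\dots,x_{j+t-1})$ encoded as $\sum_k x_{j+k-1}2^{k-1}$ (oldest spin least significant), the states with indices $2i$ and $2i+1$ both transition to columns $i$ and $i+2^{t-1}$, with sign pattern $(+1,+1)$ and $(+1,-1)$ respectively, which is precisely the pair $\mu^i((1,0_{2^{t-1}-1},\pm 1,0_{2^{t-1}-1}))$ asserted in the theorem; and since the final column of $R'(t)$ vanishes above the corner entry $2$, the block-triangular structure does supply the extra eigenvalue $2$ demanded by $wt(f_m)=\tfrac12(2^m-S_m)$. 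Your caveat about the stable range is also well taken: $\mathrm{Tr}(T^m)$ computes the signed sum of the \emph{full} cyclic sum $\sum_{j \bmod m} x_jx_{j+t}$, which coincides with $(0,t)_m$ only for $m\ge 2t+1$ (at $m=2t$ that sum is identically zero rather than the short function), which is exactly the boundary phenomenon of \Cref{rmk1}.

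The gap is the one you flag yourself, and it is genuine for the statement as written. "Rules matrix" is not defined in this paper by any abstract property; it is defined as the output of the algorithm of \cite{C17,C18}. So \Cref{th1} is an assertion about what that specific algorithm produces---including its state ordering and its exact final row $(0_{2^{t-1}},1_{2^{t-1}},2)$---and your argument never engages with the algorithm itself. What you prove is that a matrix of the claimed shape correctly governs the weight sequence; this is consistent with the rules matrix being, say, a permutation-conjugate of your $T$ with a differently arranged bookkeeping row, so it does not pin down the theorem's claim. Closing the gap requires precisely the step the paper's (terse) proof consists of: tracing the conventions of the code. It is worth noting, however, that what you do establish already suffices for everything \Cref{th1} is used for downstream: \Cref{th2} needs only the minimal polynomial of $R'(t)$ together with the validity range $m\ge 2t+1$, and both are invariant under the residual ambiguity in your identification, since minimal polynomials are unchanged by permutation conjugation and the $(x-2)$ factor is forced by the border structure.
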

\begin{proof}
  It is straightforward but lengthy to determine the entries in matrix $R'(t)$ by stepping through the Mathematica code which is given in \cite[Section 4]{C17}.  Of course any particular matrix $R'(t)$ or $R(t)$ for $t$ not too large (such as $R(4)$ in the example below) can be obtained by simply running the program.
\end{proof}
\begin{example}
\label{R4}
For $t=3,$ the set of $8$ rows of the matrix $R(3)$ in order is
\begin{multline*}
 \{(1, 0, 0, 0, 1, 0, 0, 0), (1, 0, 0, 0, -1, 0, 0, 0), (0, 1, 0, 0, 0, 1, 0, 0), (0, 1, 0, 0, 0, -1, 0, 0), \\ 
\shoveleft{(0, 0, 1, 0, 0, 0, 1, 0), (0,  0, 1, 0, 0, 0, -1, 0), (0, 0, 0, 1, 0, 0, 0, 1), (0, 0, 0, 1, 0, 0, 0, -1)\} }
\end{multline*}
The minimal polynomial of the $9$ row rules matrix is $x^7-2 x^6-8 x+16 = (x-2)(x^6-8)= (x-2)(x^2-2)(x^4+2x^2+4).$ Note it is obvious from the last row and column of $R'(t),$ as given in the Theorem, that $x-2$ will always be a factor of the minimal polynomial for $R'(t).$ 

This example shows that, if we let $wt((0,3)_n) = w(n),$ then the recursion for these weights has order $7$ and is given by 
$$w(n)=2w(n-1)+8w(n-6)-16w(n-7)$$
for $n \geq 7.$  
\end{example}
\begin{remark}
\label{rmk1}
The sequence of weights $w(n)$ starting with $n=5$ begins with $$16,28,64,112,256,480,1024,1792,4096,8064,16384$$ but the recursion is only correct if we start with $n=7.$ We discuss the reason for this later in the paper (see  \Cref{qbc} below).
It has to do with the fact that function $(0,3)_6$ is short and bent.
\end{remark}

We see from \Cref{R4} that the recursion polynomial for $(0,3)_m$ is {\it degenerate} (that is, it has at least two roots whose ratio is a root of unity--see \cite[p. 5]{recur}). The next theorem shows that the recursion polynomials for all the functions $(0,t)_m$ have a similar form.

\begin{theorem}
\label{th2}
The recursion polynomial for the weights of $(0,t)_m, t \geq 1,$ is
\begin{equation*}
  x^{2t+1} -2x^{2t} -2^{t}x  + 2^{t+1} = (x-2)(x^{2t}-2^{t}).
\end{equation*}
The recursion is valid for the sequence of weights $wt((0,t)_m)$ at least for all $m \geq 2t+1.$
\end{theorem}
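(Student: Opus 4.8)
The plan is to work directly from the explicit matrix $R'(t)$ supplied by \Cref{th1} and compute its minimal polynomial, since by definition the recursion polynomial is the part of that minimal polynomial left after stripping the factors of $x$. The first move is to put the rows of the inner block $R(t)$ into transparent form. Writing $N=2^{t-1}$ and recalling that $\mu$ shifts coordinates cyclically forward by one, the $i$-th pair of rows becomes $e_i+e_{N+i}$ and $e_i-e_{N+i}$, where the $e_j$ are the standard basis vectors of $\mathbb{R}^{2^t}$. Equivalently, as an operator on column vectors $R(t)$ sends $v$ to the vector whose entries in positions $2i$ and $2i+1$ are $v_i+v_{N+i}$ and $v_i-v_{N+i}$. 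Thus $R(t)=DQ$, where $Q$ is the perfect-shuffle permutation (the left cyclic shift of the $t$-bit index) and $D$ is block diagonal with $N$ Hadamard blocks $\left(\begin{smallmatrix}1&1\\1&-1\end{smallmatrix}\right)$.

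The key step is to iterate the coordinate formula for the action of $R(t)$. Indexing coordinates by their $t$ bits, one has $(R(t)v)_{(b_{t-1}\cdots b_0)} = v_{(0\,b_{t-1}\cdots b_1)} + (-1)^{b_0}v_{(1\,b_{t-1}\cdots b_1)}$; each application shifts out one old bit and inserts one new leading bit $c_k$, picking up a sign $(-1)^{c_kb_k}$. After exactly $t$ applications every original bit has cycled out and the signs accumulate to $(R(t)^t v)_j=\sum_{j'}(-1)^{\langle j,j'\rangle}v_{j'}$, with $\langle j,j'\rangle$ the parity of the common $1$-bits of $j$ and $j'$. In other words $R(t)^t=H_{2^t}$, the $2^t\times 2^t$ Sylvester--Hadamard matrix. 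Since $H_{2^t}^2=2^tI$, this gives at once $R(t)^{2t}=2^tI$, so the minimal polynomial of $R(t)$ divides $x^{2t}-2^t$; in particular $R(t)$ is diagonalizable, all its eigenvalues have absolute value $\sqrt2$, it is invertible, and $2$ is not among them. The main obstacle is to upgrade "divides" to equality, i.e.\ to show that every root $\sqrt2\zeta$ with $\zeta^{2t}=1$ really occurs, since $R(t)^{2t}=2^tI$ alone is also consistent with a proper divisor. I expect to settle this either by exhibiting an eigenvector for each such $\zeta$—solving the first-order relation downward from the normalization $v_0=1$, as is already visible in the computation behind \Cref{R4}—or, more cleanly, by proving that $I,R(t),\dots,R(t)^{2t-1}$ are linearly independent, tracking how the forward orbit of $e_0$ together with $R(t)^t e_0=H_{2^t}e_0=\mathbf 1$ spreads across the coordinates. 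Granting this, $x^{2t}-2^t$ is the minimal polynomial of $R(t)$.

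It remains to reinstate the outer row and column and to address the validity range. The description in \Cref{th1} shows that $R'(t)$ is block lower triangular, with $R(t)$ in the top-left corner, the scalar $2$ in the bottom-right, and zeros in the last column above that $2$; hence its characteristic polynomial is $(x-2)$ times that of $R(t)$. Because $2\notin\mathrm{spec}(R(t))$ and $R(t)$ is diagonalizable, the eigenvector $(0,\dots,0,1)^t$ for the eigenvalue $2$ lifts the spectrum disjointly, so $R'(t)$ is diagonalizable with minimal polynomial $(x-2)(x^{2t}-2^t)$, carrying no factor of $x$ (as $R(t)$ is invertible). This is exactly the asserted recursion polynomial, whose degree $2t+1$ is the recursion order. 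Finally, the rules-matrix computation of \cite{C18,C17} controls $wt((0,t)_m)$ once $(0,t)_m$ is in the generic, non-short regime, which is the case for $m\ge 2t+1$; the boundary cases $m\le 2t$—including short and bent functions such as $(0,3)_6$ noted in \Cref{rmk1}—are precisely those that can violate the recursion, so it is guaranteed from $m=2t+1$ onward, matching the order-$7$, $n\ge 7$ behavior recorded in \Cref{R4}.
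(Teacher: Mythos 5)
Your executed steps follow essentially the same route as the paper: the paper also defines the Sylvester--Hadamard matrices $M(t)$ by the recursive block construction, proves $R(t)^{t}=M(t)$ and $M(t)^2=2^tI$, and concludes $R(t)^{2t}=2^tI$; your bit-indexed formula $(R(t)v)_{(b_{t-1}\cdots b_0)} = v_{(0\,b_{t-1}\cdots b_1)} + (-1)^{b_0}v_{(1\,b_{t-1}\cdots b_1)}$ is just that same computation done in coordinates rather than by partitioned matrices, and your handling of the bordered matrix $R'(t)$ (the coprime factor $x-2$ from block triangularity) and of the range $m\ge 2t+1$ also parallels the paper.

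However, as submitted your proof has a genuine hole: the step upgrading ``the minimal polynomial of $R(t)$ divides $x^{2t}-2^t$'' to equality is left as something you ``expect to settle,'' with two candidate strategies and neither carried out. This step is not optional: the recursion polynomial is by definition the minimal polynomial of the rules matrix with $x$-factors stripped, and $R(t)^{2t}=2^tI$ alone is consistent with the minimal polynomial being a proper divisor such as $x^2-2$, which would make the theorem's claimed order $2t+1$ false. (To be fair, the paper's own proof asserts this implication with no further argument, so you are more careful than the paper in even flagging the issue --- but flagging is not proving.) Your second strategy does close the gap in a few lines: using your coordinate formula, the orbit of $e_0$ is $R(t)^k e_0=\sum_{j<2^k}e_j$ for $0\le k\le t$ and $R(t)^{t+k}e_0=2^k\sum_{2^k\mid j}e_j$ for $1\le k\le t-1$. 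These $2t$ vectors are linearly independent: evaluating a vanishing linear combination at the coordinates $j=2^m-1$ (odd for $m\ge 1$, hence invisible to the second family) forces all coefficients of the first family with $k\ge 1$ to vanish by downward induction on $m$, and then evaluating at $j=2^m$ kills the second family's coefficients and finally $e_0$'s. Hence the minimal polynomial of $R(t)$ has degree at least $2t$ and must equal $x^{2t}-2^t$. With that paragraph inserted, your argument is complete and is, in substance, the paper's proof.
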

\begin{proof}
By \Cref{th1} and the definition of the rules matrix $R'(t),$ it suffices to show that the minimal polynomial for $R(t)$ is 
$x^{2t}-2^{t}.$  The matrix $R(t)$ is sparse, so it is straightforward to compute its successive powers by partitioning each power into $4$ square submatrices, each with $2^{t-1}$ rows.  Define a sequence of square matrices $M(t)$ with $2^{t}$ rows such that $M(1)$ has first row $(1, 1)$ and second row $(1,-1).$  Then $M(t+1)$ is defined inductively as the square matrix partitioned into $4$ equal submatrices such that the top two submatrices are $M(t)$ and $M(t)$ and the bottom two submatrices are $M(t)$ and $-M(t).$ For example, the set of rows of $M(2)$ in order is
\begin{equation}
\label{rows3}
\{(1,1,1,1,),~(1,-1,1,-1),~(1,1,-1-1),~(1,-1,-1,1)\}.
\end{equation}
Routine computation using the products of the partitioned matrices $R(t)$ and $R(t)^k$ for $k=1, 2, \ldots$ shows that 
\begin{equation}
\label{Rt1}
R(t)^{t}=M(t),
\end{equation} 
that is the partition of $R(t)^{t}$ has  $M(t-1)$ and $M(t-1)$ as the top two matrices and $M(t-1)$ and $-M(t-1)$ as the bottom two matrices. For example, the set of rows of $R(3)^2$ is the set of rows of $M(3)$ in \eqref{rows3} above. Another computation shows
\begin{equation}
\label{Rt2}
M(t)^2=2^{t}I(2^{t}),
\end{equation}
where $I(j)$ is the identity matrix with $j$ rows.  Now \eqref{Rt1} and \eqref{Rt2} imply $R(t)^{2t}=2^{t}I(2^{t}),$ which proves that the minimal polynomial for $R(t)$ is $x^{2t}-2^{t}.$ 
\end{proof}
 Note that \Cref{th2} says that the minimal polynomial for the rules matrix $R'(t)$  is not divisible by $x,$ and so the minimal polynomial equals the recursion polynomial for $(0,t)_m.$  Note also that the matrices $M(t)$ in the proof of \Cref{th2} are well known special Hadamard matrices with $2^{t}$ rows (see for example \cite[pp. 20-21]{CBF}).
 
We can use the following two lemmas, taken from \cite{Kim09}, to get very precise results about the weight and nonlinearity of the quadratic MRS functions. 
\begin{lemma}
\label{Dick}
Suppose $f$ is a Boolean function in $n$ variables with degree $2$.  If  $f$  is balanced, then $f$ is affine equivalent to  $x_1x_2 + x_3x_4 + \ldots + x_{2d - 1}x_{2d} + x_{2d + 1}$ for some
$d \leq \frac{n - 1}{2}$.  If $f$ is not balanced, then $f$ is affine equivalent to
$x_1x_2 + x_3x_4 + \ldots + x_{2d - 1}x_{2d} + b$ for some $d  \leq \frac{n}{2}$ and $b$ in $GF(2).$ 
If $wt(f) < 2^{n - 1}$, then  $b = 0$.  If $wt(f) > 2^{n - 1}$, then  $b = 1$.
\end{lemma}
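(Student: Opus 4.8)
The plan is to prove this by reducing $f$ to a symplectic normal form (essentially Dickson's theorem for quadratic forms over $GF(2)$), while tracking carefully how the allowed affine substitutions $x\mapsto Ax\oplus {\bf b}$ act. First I would split $f$ into its homogeneous quadratic part $Q$, a linear part, and a constant, and attach to $f$ the alternating bilinear form $B(x,y)=f(x\oplus y)\oplus f(x)\oplus f(y)\oplus f({\bf 0})$ over $GF(2)$. Since everything is computed mod $2$, this form is automatically alternating ($B(x,x)=0$), and it is the genuine affine invariant: under a linear change it transforms by congruence (so its rank is preserved), while a translation changes $f$ only by an affine summand and hence leaves $B$ unchanged, as does adding a constant. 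Its rank is necessarily even, say $2d$, and the radical $\ker B$ has dimension $n-2d$.

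Next I would invoke the structure theorem for alternating forms over $GF(2)$: choose a symplectic basis $e_1,f_1,\dots,e_d,f_d$ of a complement to the radical, with $B(e_i,f_i)=1$ and all other pairings zero, and complete it by a basis of $\ker B$. The corresponding invertible linear substitution turns $B$ into $\sum_{i=1}^d(x_{2i-1}y_{2i}\oplus x_{2i}y_{2i-1})$, which forces $f$ to agree with $\sum_{i=1}^d x_{2i-1}x_{2i}$ up to an affine summand $\ell(x)\oplus c$; indeed, any two quadratic functions with the same associated bilinear form differ by a function whose quadratic part vanishes, hence by an affine function.

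The remaining work is to absorb $\ell\oplus c$ using translations $x\mapsto x\oplus{\bf b}$, which are permitted in the definition of affine equivalence. Translating a hyperbolic pair $(x_{2i-1},x_{2i})$ by constants changes $x_{2i-1}x_{2i}$ by an arbitrary linear form in those two variables plus a constant, so every linear term supported on the symplectic variables can be cleared, leaving only a linear form $\ell_R$ in the radical variables $x_{2d+1},\dots,x_n$ together with a constant. If $\ell_R=0$ we are left with $\sum x_{2i-1}x_{2i}\oplus b$, the unbalanced normal form, which requires only $2d\le n$, i.e. $d\le n/2$. If $\ell_R\neq 0$, a further linear change of the radical coordinates makes $\ell_R=x_{2d+1}$, which needs a spare variable and hence $2d+1\le n$, i.e. $d\le(n-1)/2$; a final translation of $x_{2d+1}$ then absorbs the constant, yielding $\sum x_{2i-1}x_{2i}\oplus x_{2d+1}$.

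Finally I would pin down the weights, which simultaneously identifies which normal form occurs and determines $b$. A direct evaluation of $W_f({\bf 0})=\sum_x(-1)^{f(x)}$ factors over the hyperbolic pairs (each contributing a factor $2$), giving $\sum_x(-1)^{\sum x_{2i-1}x_{2i}}=2^{n-d}$, so $wt\bigl(\sum x_{2i-1}x_{2i}\bigr)=2^{n-1}-2^{n-d-1}<2^{n-1}$; adding the constant $b=1$ flips this to $2^{n-1}+2^{n-d-1}>2^{n-1}$, which fixes the stated dichotomy between $b$ and whether $wt(f)$ lies below or above $2^{n-1}$. In the presence of the free linear term $x_{2d+1}$, summing $(-1)^{f}$ over that one variable already cancels, so $W_f({\bf 0})=0$ and $f$ is balanced, consistent with \Cref{eq:14}. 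The main obstacle is the symplectic reduction together with the bookkeeping of translations: one must confirm that the single invariant controlling everything is exactly $\mathrm{rank}\,B$, and that the linear and constant terms can always be normalized as claimed, the only real constraint being whether a leftover radical variable $x_{2d+1}$ is available to carry the residual linear form.
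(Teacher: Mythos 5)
Your proposal is correct, but there is nothing in the paper to compare it against step by step: the paper does not actually prove this lemma, its ``proof'' being a citation (Dickson 1901, as quoted in Kim--Park--Hahn, with a full proof in MacWilliams--Sloane, pp.~438--442). What you have written is essentially that classical argument, reconstructed: reduction of the associated alternating form $B(x,y)=f(x\oplus y)\oplus f(x)\oplus f(y)\oplus f({\bf 0})$ to a symplectic normal form, followed by normalization of the affine part. The $GF(2)$-specific subtleties are all handled correctly: $B$ is invariant under translations and transforms by congruence under linear substitutions, so $\mathrm{rank}\,B=2d$ is the affine invariant; since the ANF of a quadratic function has no square terms, $B(e_i,e_j)$ recovers the coefficient of $x_ix_j$, so two quadratics with the same $B$ differ by an affine function (this is the step that would fail verbatim in odd characteristic, where the quadratic form rather than the bilinear form is the fundamental object); translating a hyperbolic pair generates every linear form in those two variables plus a constant, so the residual affine part can be pushed into the radical; and the weight computation $wt\bigl(\sum_{i=1}^d x_{2i-1}x_{2i}\oplus b\bigr)=2^{n-1}\mp 2^{n-d-1}$ versus $W_f({\bf 0})=0$ in the presence of the free term $x_{2d+1}$ shows the two normal forms are mutually exclusive and pins down $b$, since affine substitutions are bijections of $V_n$ and hence preserve weight. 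The only ingredients you leave as black boxes (evenness of the rank of an alternating form over $GF(2)$ and existence of a symplectic basis) are standard, so the proposal is a complete and self-contained replacement for the citation, at the cost of length; the paper's approach buys brevity by outsourcing exactly this computation.
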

\begin{proof}
This result goes back to L. E. Dickson in $1901$ and is quoted in \cite[Th. 4, p. 429]{Kim09}. A proof is given in  \cite[pp. 438-442]{Mc78}.
Note that when $d=n/2$ for even $n,$ then $f$ is bent.
\end{proof}

\Cref{Dick} justifies introducing the following notions.

\begin{definition}
  Let $f$ be a quadratic Boolean function in $n$ variables. An element $a\in V_n$ is {\it $f$-parity-reversing} (or simply `parity-reversing' when $f$ is understood) if $f(x+a)=f(x)+1$ for all $x\in V_n$.

  Similarly, $a\in V_n$ is {\it $f$-parity-preserving} (or just `parity-preserving') if $f(x+a)=f(x)$ for all $x\in V_n$.
\end{definition}
The $f$-parity-preserving vectors form a $GF(2)$-vector subspace $V^0=V^0(f)$ of $V_n$, while the parity-reversing vectors form an affine subspace $V^1=V^1(f)$ that is either empty or a coset of $V^0(f)$.

In the context of \Cref{Dick}, in the balanced case the space $V^0$ of $f$-parity-preserving vectors are those whose coordinates $x_i$, $1\le i\le 2d+1$ vanish and similarly for the unbalanced case and $x_i$, $1\le i\le 2d$. Furthermore, in the balanced case $V^1(f)$ is the space of vectors with vanishing $x_i$, $1\le i\le 2d$ and $x_{2d+1}=1$. These remarks prove the following

\begin{lemma}\label{le.parity-vectors}
  Let $f$ be a quadratic Boolean function in $n$ variables. The following conditions are equivalent:
  \begin{enumerate}[(1)]
  \item $f$ is balanced;
  \item $\dim V^0(f)$ and $n$ have opposite parities;
  \item $V^1(f)$ is non-empty.    
  \end{enumerate}
\end{lemma}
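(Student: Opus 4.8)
The plan is to observe that each of the three conditions is an invariant of affine equivalence, so that by \Cref{Dick} it suffices to check the equivalences on the two Dickson normal forms, where $V^0(f)$ and $V^1(f)$ are exactly as recorded in the remarks immediately preceding the statement. The whole argument then splits into an invariance step and a reading-off step.

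First I would record the invariance. Balancedness is affine invariant because an affine bijection $x\mapsto Ax\oplus\mathbf{b}$ only permutes the truth table and hence preserves $wt$. For the remaining two conditions, suppose $f(x)=g(Ax\oplus\mathbf{b})$ with $A$ invertible. Substituting $y=Ax\oplus\mathbf{b}$ (a bijection of $V_n$), the condition $f(x\oplus a)=f(x)$ for all $x$ becomes $g(y\oplus Aa)=g(y)$ for all $y$, and likewise with a trailing $\oplus 1$; thus $V^0(f)=A^{-1}V^0(g)$ and $V^1(f)=A^{-1}V^1(g)$. In particular $\dim V^0$ is an affine invariant, the (non)emptiness of $V^1$ is an affine invariant, and $n$ is unchanged, so all three conditions in the lemma transfer along affine equivalence. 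It is therefore enough to verify the lemma when $f$ is one of the normal forms of \Cref{Dick}.

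The computational heart is the description of $V^0$ and $V^1$ through the polar form. For a quadratic $f$ I would introduce the symmetric bilinear form $B(x,y)=f(x\oplus y)\oplus f(x)\oplus f(y)\oplus f(\mathbf{0})$, so that $f(x\oplus a)\oplus f(x)=B(x,a)\oplus(f(a)\oplus f(\mathbf{0}))$, a linear function of $x$ plus a constant. Reading off these two pieces, $a$ is parity-preserving (resp. parity-reversing) precisely when $a$ lies in the radical $R=\{a:B(\cdot,a)\equiv 0\}$ and $f(a)\oplus f(\mathbf{0})=0$ (resp. $=1$). Since $B$ vanishes identically on $R$, the map $\lambda(a)=f(a)\oplus f(\mathbf{0})$ is additive on $R$, i.e. a linear functional, with $V^0=\ker\lambda$ and $V^1=R\setminus\ker\lambda$. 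Hence $V^1\neq\emptyset$ exactly when $\lambda\not\equiv 0$ exactly when $\dim V^0=\dim R-1$, while $\lambda\equiv 0$ forces $V^0=R$ and $V^1=\emptyset$.

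It then remains to evaluate these data on the normal forms of \Cref{Dick}, whose quadratic part is $x_1x_2\oplus\cdots\oplus x_{2d-1}x_{2d}$, so that $\dim R=n-2d$ in both cases. In the balanced form the extra linear term $x_{2d+1}$ gives $\lambda(a)=a_{2d+1}\neq 0$ on $R$, whence $\dim V^0=n-2d-1$ has parity opposite to $n$ and $V^1\neq\emptyset$; in the unbalanced form there is no linear term, so $\lambda\equiv 0$, $\dim V^0=n-2d$ has the same parity as $n$, and $V^1=\emptyset$. Thus (1), (2) and (3) hold simultaneously on the balanced form and fail simultaneously on the unbalanced form, which together with the invariance of the first step yields the three equivalences for an arbitrary quadratic $f$. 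I do not expect a genuine obstacle; the only points needing care are the clean identification of $V^0$ and $V^1$ as the level sets of the functional $\lambda$ on the radical $R$, and confirming that the affine reduction really preserves all three conditions so that the normal-form computation suffices.
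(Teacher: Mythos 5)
Your proposal is correct and follows essentially the same route as the paper: reduce to the Dickson normal forms of \Cref{Dick} and read off $V^0(f)$ and $V^1(f)$ there, exactly as in the remarks the paper places immediately before the lemma. The only difference is one of explicitness---you spell out the affine-invariance of all three conditions and organize the normal-form computation via the radical of the polar form and the linear functional $\lambda$, steps the paper treats as immediate.
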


Given a quadratic function $f,$ we call the affine equivalent function which is given in \Cref{Dick} the {\em Dickson form} of $f.$ We call the unique integer $d$ which \Cref{Dick} associates with a given function $f$ in $n$ variables the {\em Dickson rank} for $f.$ Given a function $f$ of degree 2, after we find the Dickson form for $f$ (this amounts to finding the Dickson rank, which unfortunately is not trivial to compute in general), it is easy to compute $wt(f)$ and $N(f)$.  The result is

\begin{lemma}
\label{wtnonlin}
Suppose $g$ is a Boolean function in $n$ variables which has the form $\sum_{i=1}^d x_{2i - 1} x_{2i} +
\sum_{i=2d + 1}^n a_i x_i$ with $d \leq \frac{n}{2}$, so $d$ is the Dickson rank for $f.$  Then $N(g)= 2^{n - 1} - 2^{n - d -1}$.  If all of the $a_i$ are $0$, then $wt(g) = N(g)$; otherwise $wt(g) = 2^{n - 1}$, so $g$ is balanced.
\end{lemma}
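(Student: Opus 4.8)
The plan is to split the variables into two blocks and exploit the product structure of the Walsh transform. Write $y=(x_1,\dots,x_{2d})$ and $z=(x_{2d+1},\dots,x_n)$, so that $g(x)=Q(y)+L(z)$ with $Q(y)=\sum_{i=1}^d x_{2i-1}x_{2i}$ a quadratic in the first block and $L(z)=\sum_{i=2d+1}^n a_ix_i$ linear in the second. Setting $a=(a_{2d+1},\dots,a_n)$, I will compute every Walsh value $W_g(u)$, read off $N(g)$ from \eqref{nonl}, and recover $wt(g)$ from the identity $W_g({\bf 0})=2^n-2\,wt(g)$.

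Because $Q$ and $L$ involve disjoint variables, for $u=(v,w)$ split accordingly one has $W_g(u)=W_Q(v)\cdot\sum_z(-1)^{(a+w)\cdot z}$, and the $z$-sum equals $2^{n-2d}$ when $w=a$ and vanishes otherwise. Since $Q$ is a sum of $d$ pairwise variable-disjoint products $x_{2i-1}x_{2i}$, its Walsh value factors as a product of $d$ two-variable sums, each of which is routinely checked to equal $\pm2$; hence $|W_Q(v)|=2^d$ for every $v$ (that is, $Q$ is bent on $2d$ variables). Consequently $W_g(u)$ is nonzero only when $w=a$, where $|W_g(u)|=2^{n-2d}\cdot2^d=2^{n-d}$, so $\max_u|W_g(u)|=2^{n-d}$ and \eqref{nonl} gives $N(g)=2^{n-1}-2^{n-d-1}$, regardless of the values of the $a_i$.

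For the weight, evaluate at $u={\bf 0}$, i.e. $v={\bf 0}$ and $w={\bf 0}$. If every $a_i=0$ then $a={\bf 0}=w$, and the explicit factorization gives $W_Q({\bf 0})=\sum_y(-1)^{Q(y)}=2^d>0$, whence $W_g({\bf 0})=2^{n-2d}\cdot2^d=2^{n-d}$ and $wt(g)=2^{n-1}-2^{n-d-1}=N(g)$. If some $a_i\neq0$ then $a\neq{\bf 0}=w$, so $W_g({\bf 0})=0$ and $wt(g)=2^{n-1}$, i.e. $g$ is balanced; the same conclusion follows directly by noting that for each fixed $y$ the affine map $z\mapsto Q(y)+L(z)$ is non-constant and hence balanced on the $z$-cube.

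I expect no serious obstacle: the argument is a bookkeeping computation once the variables are separated. The only point demanding care is the sign of $W_Q({\bf 0})$, which decides $wt(g)=N(g)<2^{n-1}$ rather than $wt(g)>2^{n-1}$; this is pinned down by the explicit two-variable factor computation showing $W_Q({\bf 0})=2^d>0$. Verifying that $Q$ is bent (the flatness $|W_Q(v)|=2^d$) is the most substantive single step, but it is a standard product-formula calculation.
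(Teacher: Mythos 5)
Your proof is correct. Note, however, that the paper does not actually prove this lemma at all: its ``proof'' is a pointer to the literature (MacWilliams--Sloane, pp.~438--442, and Kim--Park--Hahn, Lemma~5), so your argument is not so much a different route as the only complete one on the table. What you do is the standard direct verification: split the variables into the quadratic block and the affine block, use the fact that the Walsh transform of a sum of functions on disjoint variables factors, observe that each two-variable factor $\sum_{s,t}(-1)^{st+\alpha s+\beta t}$ equals $\pm 2$ (so the inner quadratic is bent on $2d$ variables with $W_Q(\mathbf{0})=2^d>0$), and read off $N(g)$ from the formula $N(g)=2^{n-1}-\tfrac12\max_{\mathbf u}|W_g(\mathbf u)|$ and $wt(g)$ from $W_g(\mathbf{0})=2^n-2\,wt(g)$. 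All the computations check out, including the sign analysis that forces $wt(g)=2^{n-1}-2^{n-d-1}$ rather than $2^{n-1}+2^{n-d-1}$ in the homogeneous case, and the observation that a nonzero linear part makes $g$ balanced (either via $W_g(\mathbf{0})=0$ or via the fiberwise argument you sketch). This is essentially the content of the cited references, so your proposal buys self-containedness: the reader need not consult external sources, and the same product-factorization argument also transparently explains the plateaued structure (all nonzero Walsh values have the common magnitude $2^{n-d}$) that the paper later relies on when discussing $v$-values.
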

\begin{proof}
Two different proofs appear in \cite[pp. 438-442]{Mc78} and \cite[Lemma 5, p. 429]{Kim09}.
\end{proof}
To state the theorem below, we need the cyclic permutation $\rho_t$ defined on $\{0,1, \ldots, n-1\}$ by $\rho(0)=t, \rho(1)=t+1, \ldots, \rho(n-1) = t-1$ for $n \geq 2t+1.$ The function $(0,t)_n$ corresponds to $\rho_t$ in the obvious way and we say (as in \cite[p. 431]{Kim09}) $\rho_t$ is the {\em permutation of the function $(0,t)_n$}.
\begin{theorem}
\label{MRSwtnonlin}
Assume that $n \geq 2t+1$ and the permutation $\rho(t)$ of $(0,t)_n$ has the disjoint cycle decomposition $\rho_t= \mu_1 \mu_2 \ldots \mu_k.$  Then the number of cycles is  $k(n)=k =$ gcd$(n,t)$ and all cycles have the same length $n/k.$ We also have the formulas
$$wt((0,t)_n) = N((0,t)_n) = 2^{n-1} - 2^{(n/2)+k-1} ~\text{if}~n/k~\text{is even}$$
and
$$wt((0,t)_n) = 2^{n-1}~\text{and}~N((0,t)_n) = 2^{n-1} - 2^{(n+k-2)/2} ~\text{if}~n/k~\text{is odd}.$$
For the case $n=2t,$ the function is short and bent and we have $$wt((0,t)_{2t})=N((0,t)_{2t})=2^{2t-3} - 2^{t-2}.$$
\end{theorem}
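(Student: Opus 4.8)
The plan is to reduce the whole theorem to a single linear-algebra computation over $GF(2)$: the kernel of the alternating bilinear form attached to $f=(0,t)_n$. Writing $f(x)=\sum_{j=0}^{n-1}x_jx_{j+t}$ (indices mod $n$), put $f(x)=x^tBx$ with $B$ strictly upper triangular and $C=B+B^t$ the matrix of the associated bilinear form, exactly as in the proof of \Cref{th.qq'}. The first point I would stress is that, because $n\ge 2t+1$ forces the $n$ monomials $x_jx_{j+t}$ to be pairwise distinct, $C$ is precisely the $GF(2)$-adjacency matrix of the circulant graph $\Gamma$ on the vertex set $\mathbb{Z}/n$ with connection set $\{t,-t\}$. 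The connected components of $\Gamma$ are exactly the orbits of the permutation $\rho_t\colon i\mapsto i+t$, and a standard orbit count gives $k=\gcd(n,t)$ components, each an $\ell$-cycle with $\ell=n/k$; this already establishes the cycle-decomposition assertion of the theorem.

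Next I would compute $\dim_{GF(2)}\ker C$. Since $\Gamma$ is a disjoint union of $k$ cycles, $C$ is block diagonal and $\dim\ker C=k\cdot\dim\ker A(C_\ell)$, where $A(C_\ell)$ is the adjacency matrix of one $\ell$-cycle. Writing $A(C_\ell)=P+P^{-1}$ for the $\ell$-cycle shift $P$, in characteristic $2$ one has $A(C_\ell)v=0\iff P^2v=v$, so $\ker A(C_\ell)$ is the fixed space of the shift-by-two, whose dimension equals the number of its orbits, namely $\gcd(2,\ell)$. Hence $\dim\ker C=k$ when $\ell$ is odd and $\dim\ker C=2k$ when $\ell$ is even. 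Since the Dickson rank satisfies $2d=\mathrm{rank}\,C=n-\dim\ker C$, this yields $d=(n-k)/2$ in the odd case and $d=n/2-k$ in the even case; substituting into the nonlinearity formula $N=2^{n-1}-2^{n-d-1}$ of \Cref{wtnonlin} reproduces exactly the two claimed values $2^{n-1}-2^{(n+k-2)/2}$ and $2^{n-1}-2^{(n/2)+k-1}$.

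It then remains to decide balancedness, and here I would use \Cref{le.parity-vectors} together with the additivity of $f$ on $\ker C$ (for $x,y\in\ker C$ the cross term $x^tCy$ vanishes, cf.\ \Cref{eq:add}): thus $f$ is balanced iff $f$ does not vanish identically on $\ker C$, which can be checked on a basis. When $\ell$ is odd, $\ker C$ has as basis the indicator vectors of the $k$ cycles, and $f$ on such an indicator counts that cycle's $\ell$ edges, giving $\ell\equiv 1\pmod 2$; so $f|_{\ker C}\ne 0$ and $f$ is balanced with $wt=2^{n-1}$. When $\ell$ is even, $\ker C$ has as basis the even-position and odd-position indicators of each cycle, and since every edge of a cycle joins an even to an odd position, $f$ vanishes on each basis vector and hence, by additivity, on all of $\ker C$; thus $f$ is unbalanced and $wt=N$. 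Finally, the short case $n=2t$ is immediate: there $\Gamma$ is a perfect matching, $C$ is nonsingular, so $d=n/2=t$, $(0,t)_{2t}$ is bent, and \Cref{wtnonlin} gives $wt=N=2^{2t-1}-2^{t-1}$.

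The step I expect to be the main obstacle is the middle one: correctly pinning down $\dim_{GF(2)}\ker A(C_\ell)$ and then propagating the parity of $\ell=n/k$ consistently into both the nonlinearity formula and the balancedness verdict. The reduction of $A(C_\ell)$ to the shift-by-two fixed space is what makes this transparent, after which everything is routine bookkeeping through \Cref{wtnonlin} and \Cref{le.parity-vectors}.
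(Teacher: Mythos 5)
Your route is genuinely different from the paper's: the paper disposes of this theorem by citing \cite[Th. 8, p. 431]{Kim09} (whose proof runs through the Dickson normal form, i.e.\ \Cref{Dick} and \Cref{wtnonlin}), whereas you give a self-contained linear-algebra argument via the kernel of the associated bilinear form. Most of your argument checks out: the identification of $C$ with the adjacency matrix of the circulant graph with connection set $\{\pm t\}$, the orbit count $k=\gcd(n,t)$ with common cycle length $\ell=n/k$ (note $n\ge 2t+1$ forces $\ell\ge 3$, so the components really are cycles), the computation $\dim\ker A(C_\ell)=\gcd(2,\ell)$ via the fixed space of the shift-by-two, the identity $\mathrm{rank}\,C=2d$, the resulting nonlinearity values, and the balanced verdict for odd $\ell$. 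You are also right, against the letter of the statement, in the short case: $(0,t)_{2t}$ is literally in Dickson form, so $wt=N=2^{2t-1}-2^{t-1}$; the value $2^{2t-3}-2^{t-2}$ printed in the theorem is a typo in the paper, as one can check against the paper's own data ($wt((0,3)_6)=28$ in \Cref{rmk1}, $wt((0,2)_4)=6$ in \Cref{exodd}).

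The genuine gap is the final step of the even-$\ell$ case: ``thus $f$ is unbalanced and $wt=N$.'' Unbalanced quadratics split into two affine classes, $b=0$ and $b=1$ in \Cref{Dick}, with weights $2^{n-1}-2^{n-d-1}$ and $2^{n-1}+2^{n-d-1}$ respectively and with the \emph{same} nonlinearity; vanishing of $f$ on $\ker C$ rules out balancedness but says nothing about which class occurs (equivalently, about the sign of $W_f(\mathbf{0})$, i.e.\ the Arf invariant of the nondegenerate form induced on $V/\ker C$). Homogeneity of the ANF cannot rescue this step: $(0,1)_4+(0,2)_4=\sum_{i<j}x_ix_j$ is a homogeneous rotation symmetric quadratic of weight $10>2^{3}$, so the ``complement'' class genuinely occurs and something specific to $(0,t)_n$ must be invoked. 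A short repair inside your framework: the cycle decomposition splits the variable set, so $f$ is a direct sum of $k$ copies of $(0,1)_\ell$ on disjoint variables and $W_f(\mathbf{0})=W_{(0,1)_\ell}(\mathbf{0})^k$; for a single even cycle, writing $(0,1)_\ell=\sum_{j\ \mathrm{odd}}x_j(x_{j-1}+x_{j+1})$ and summing first over the odd-indexed variables gives
\begin{equation*}
  W_{(0,1)_\ell}(\mathbf{0}) \;=\; 2^{\ell/2}\cdot\#\bigl\{x_{\mathrm{even}}\ :\ x_{j-1}+x_{j+1}=0\ \text{for all odd }j\bigr\} \;=\; 2^{(\ell/2)+1}\;>\;0,
\end{equation*}
whence $W_f(\mathbf{0})=2^{(n/2)+k}>0$ and $wt(f)=2^{n-1}-2^{(n/2)+k-1}=N$. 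With this sign determination inserted, your proof is complete and, unlike the paper's, self-contained.
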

\begin{proof}
The first two equations were given in \cite[Th. 8, p. 431]{Kim09}; the proof uses \Cref{Dick} and \Cref{wtnonlin}. These equations do not apply to the short and bent functions $(0,t)_{2t},$ since then the permutation $\rho$ is not defined.
\end{proof}
\begin{corollary}
We can directly compute the weight and nonlinearity for any function $(0,t)_n$ with $n \geq 2t+1$ from the values of $n$ and $k(n) =$
 gcd$(n,t).$
\end{corollary}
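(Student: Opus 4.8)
The plan is to read the claim off directly from \Cref{MRSwtnonlin}, which already supplies closed-form expressions for both $wt((0,t)_n)$ and $N((0,t)_n)$ when $n\ge 2t+1$. First I would recall what that theorem provides: with $k=k(n)=\gcd(n,t)$ equal to the number of cycles in the disjoint cycle decomposition of $\rho_t$, all of common length $n/k$, the weight and nonlinearity are each given by one of two explicit formulas according to whether $n/k$ is even or odd.

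The key point to isolate is that each of these formulas is an expression built solely from $n$ and $k$: in the even case both quantities equal $2^{n-1}-2^{(n/2)+k-1}$, while in the odd case the weight is $2^{n-1}$ and the nonlinearity is $2^{n-1}-2^{(n+k-2)/2}$. Hence, once $n$ and $k$ are known, the numerical values follow by direct substitution, with no further use of the individual value of $t$ or the internal structure of $\rho_t$. Notably, this sidesteps the computation of the Dickson rank, which was observed above to be awkward to determine in general.

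The only genuine — and entirely routine — thing to verify is that the case distinction itself depends only on $n$ and $k$, i.e.\ that deciding the parity of $n/k$ requires nothing beyond these two integers. This is immediate, since $n/k$ is the integer quotient of $n$ by $k$, so its parity is a function of the pair $(n,k)$ alone. I therefore expect no real obstacle here: all of the substantive content lives in \Cref{MRSwtnonlin} (and through it in \Cref{Dick}, \Cref{wtnonlin}, and the cited work of Kim et al.), and the corollary simply records that the resulting weight and nonlinearity are computable functions of the pair $(n,\gcd(n,t))$.
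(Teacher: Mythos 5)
Your proposal is correct and matches the paper's (implicit) justification: the corollary is stated without a separate proof precisely because, as you observe, both formulas in \Cref{MRSwtnonlin} and the case distinction on the parity of $n/k$ involve only $n$ and $k(n)=\gcd(n,t)$, so the values follow by direct substitution.
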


In stating some of our results below, we need the notion of a {\em plateaued} Boolean function.  This definition was introduced in $2001$ (see \cite[pp. 78-79]{CBF} for some of the history). We say that a Boolean function function $f=f_n$ in $n$ variables is  {\em v-plateaued} if
 every Walsh value $W_f({\bf w})$ is either $0$ or $\pm 2^{(n+v)/2}.$ This is the terminology of \cite[p. 266]{Meidl17} and it is convenient to use it in this paper; a more common usage would be to say that $f_n$ is {\em plateaued of order r} with $r=n-v$ \cite[Definition 4.26, p. 78]{CBF}; since $n$ and $v$ have the same parity, this order is always even. For given $f_n$ we say that $v=v(n)$ is the {\em v-value} of $f_n.$ Note it follows from \Cref{nonl} and \Cref{wtnonlin} that
 \begin{equation}\label{n=2d+v}
 n = 2d + v.
 \end{equation}
 We see from \Cref{n=2d+v} that $f_n$ is bent if and only if $v(n)=0.$
 
 It is well known that every quadratic Boolean function $f_n$ in $n$ variables is $v$-plateaued with $0 \leq v \leq n.$  The value $v=0$ occurs only when $n$ is even and $f_n$ is bent (see \Cref{Dick}). For MRS quadratic functions, $f_n$ is bent if and only if $f_n$ is the short function $(0,t)_{2t}$ for $t \geq 1.$  Many of our results apply only if bent functions do not occur, which explains the frequency of the presence of the condition $n \geq 2t+1.$ More generally, we have the following well known result.
 \begin{lemma} \label{qbc}
 If $f_{2t}$ is a quadratic rotation symmetric bent function, then the algebraic normal form must contain the function $(0,t)_{2t}.$
 \end{lemma}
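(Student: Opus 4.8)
The plan is to prove the contrapositive: if the algebraic normal form of a quadratic rotation symmetric $f$ on $2t$ variables does \emph{not} contain $(0,t)_{2t}$, then $f$ is not bent. By \eqref{eq:3} and \eqref{short}, the degree-$2$ part of such an $f$ can be written uniquely as $\sum_{i=1}^{t}c_i(0,i)_{2t}$ with $c_i\in GF(2)$, where $(0,t)_{2t}$ is the short function; saying that $f$ ``contains $(0,t)_{2t}$'' means precisely $c_t=1$. Since $f$ is rotation symmetric, its affine part is $\epsilon_1\sum_{j}x_j+\epsilon_0$ for some $\epsilon_0,\epsilon_1\in GF(2)$, the only rotation symmetric functions of degree $\le 1$.

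The key computation I would carry out is the behaviour of $f$ under translation by the all-ones vector ${\bf 1}=(1,\ldots,1)\in V_{2t}$, i.e.\ the function $x\mapsto f(x+{\bf 1})+f(x)$. First I treat a single generator. For a full generator $(0,i)_{2t}=\sum_{j=0}^{2t-1}x_jx_{j+i}$ with $1\le i\le t-1$ (indices mod $2t$), expanding $(x_j+1)(x_{j+i}+1)$ and summing shows that the two linear contributions $\sum_j x_j$ and $\sum_j x_{j+i}$ cancel mod $2$ while the $2t$ constant terms sum to $0$; hence $(0,i)_{2t}(x+{\bf 1})+(0,i)_{2t}(x)=0$ identically. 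For the short generator $(0,t)_{2t}=\sum_{j=0}^{t-1}x_jx_{j+t}$, however, only half as many monomials occur, the linear terms no longer cancel, and one obtains $\sum_{j=0}^{2t-1}x_j+t$, a \emph{nonconstant} affine function. The affine part of $f$ contributes nothing here, since the derivative of $\sum_j x_j$ along ${\bf 1}$ is $2t\equiv 0$ and the constant has zero derivative. Summing over all generators gives $f(x+{\bf 1})+f(x)=c_t\bigl(\sum_{j}x_j+t\bigr)$.

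Now suppose $c_t=0$. Then $f(x+{\bf 1})+f(x)\equiv 0$, so ${\bf 1}$ is $f$-parity-preserving, giving a nonzero element of $V^0(f)$. But a quadratic bent function on $2t$ variables has Dickson rank $d=t$ by \Cref{Dick} (equivalently $v=0$ in \Cref{n=2d+v}), so $\dim V^0(f)=2t-2d=0$, a contradiction. (Alternatively, were the additive constant nonzero one would land in $V^1(f)$, which is empty for the unbalanced, hence bent, function $f$ by \Cref{le.parity-vectors}; either way ${\bf 1}$ cannot be a nonzero linear structure of a bent function.) Therefore $c_t=1$, which is the assertion of the lemma.

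The main work is the derivative bookkeeping, and in particular pinning down the exact asymmetry between the full and short generators: the cancellation of the linear terms for $1\le i\le t-1$ relies on each monomial $x_jx_{j+i}$ occurring exactly once as $j$ ranges over all residues mod $2t$, which holds precisely because $2i\not\equiv 0\pmod{2t}$, whereas for $i=t$ one has $2t\equiv 0$, the monomials pair up, and this forces the halved range $0\le j\le t-1$ that destroys the cancellation. This is the sole delicate point; once it is in place, the conceptual input --- that a bent quadratic admits no nonzero linear structure, i.e.\ $V^0(f)=\{{\bf 0}\}$ and $V^1(f)=\emptyset$ in the language of this paper --- finishes the argument immediately.
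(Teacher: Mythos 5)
Your proof is correct, and it is genuinely different from what the paper does: the paper offers no argument at all for \Cref{qbc}, simply citing \cite[Remark 1, p.~4910]{Gao12}, where the fact is deduced from the gcd/linearized-polynomial criterion for bentness of quadratic RS functions (if the short term is absent, the associated polynomial $A(x)$ has the all-coefficients-pair-up symmetry, so $A(1)=0$, i.e.\ $x+1$ divides $\gcd(x^{2t}-1,A(x))$, contradicting bentness). Your argument is the elementary Boolean-variable counterpart of that same degeneracy: the derivative bookkeeping along ${\bf 1}$ is carried out correctly --- for a full generator $(0,i)_{2t}$, $1\le i\le t-1$, the $2t$ monomials are distinct (since $2i\not\equiv 0 \bmod 2t$) and the linear terms cancel, while the halved index range of the short generator yields the nonconstant derivative $\sum_j x_j+t$ --- so $c_t=0$ forces ${\bf 1}\in V^0(f)\setminus\{{\bf 0}\}$, which is impossible for a bent function since \Cref{Dick} gives Dickson rank $d=t$ and hence $\dim V^0(f)=2t-2d=0$ (using the affine invariance of $\dim V^0$, which is implicit in the remarks preceding \Cref{le.parity-vectors}). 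What your route buys is a self-contained proof inside the paper's own framework ($V^0$, $V^1$, Dickson form), at the cost of a page of computation; what the citation buys is brevity and a statement of the fact in the polynomial language ($A(1)=0$) that the paper later exploits systematically in \Cref{subse.lon}.
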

 \begin{proof}
 See \cite[Remark 1, p. 4910]{Gao12}.
 \end{proof}
 
 The next lemma is a very special case of results in \cite[Section 2]{Meidl17} and gives a formula for the $v$-values of the MRS quadratic functions. As usual, we use $deg$ to denote the degree of a function.
 
 \begin{lemma}
 \label{vval}
 Given the quadratic Boolean function $(0,t)_n$ with $n \geq 2t+1,$ define
 \begin{equation}
A_t(x) = x^{t} +x^{n-t}.
 \end{equation}
 Then the $v$-values for $(0,t)_n$ are given by 
 \begin{equation}
 \label{vn}
 v(n) = \mathrm{deg}~ \mathrm{gcd}(x^n - 1, A_t(x)),
 \end{equation}
 where the greatest common divisor is taken mod $2.$
 \end{lemma}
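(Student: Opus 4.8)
The plan is to identify the $v$-value of $(0,t)_n$ with the dimension of the kernel of the matrix $C$ of the symplectic bilinear form attached to the quadratic form, and then to recognize $C$ as a circulant matrix whose symbol in the ring $R=GF(2)[x]/(x^n-1)$ is exactly $A_t(x)$. Once this is done the lemma reduces to a standard computation of the annihilator of a circulant. First I would reduce the $v$-value to a rank computation: combining \Cref{n=2d+v} with \Cref{wtnonlin} gives $v(n)=n-2d$, where $d$ is the Dickson rank of $(0,t)_n$. On the other hand, in the Dickson normal form $\sum_{i=1}^d x_{2i-1}x_{2i}+\cdots$ of \Cref{Dick} the associated bilinear form $B(x,y)=Q(x+y)+Q(x)+Q(y)$ has rank exactly $2d$, since its linear and constant parts cancel and only the pure quadratic part contributes. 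As $B$ transforms by $C\mapsto P^tCP$ under the invertible linear substitution $x\mapsto Px$ bringing $Q$ to Dickson form (and the additive shift $x\mapsto x+b$ changes $Q$ only in lower-order terms, leaving $C$ untouched), its rank is preserved, so $\dim\ker C=n-2d=v(n)$. It therefore suffices to compute $\dim\ker C$ for the matrix $C$ of $(0,t)_n$ itself.

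Next I would compute $C$ explicitly. Expanding $Q(x)=\sum_j x_jx_{j+t}$ with indices taken mod $n$ yields $B(x,y)=\sum_j (x_jy_{j+t}+x_{j+t}y_j)$, so $C$ is the circulant matrix carrying $1$'s exactly in the off-diagonal positions $\pm t \pmod n$. Identifying $GF(2)^n$ with $R=GF(2)[x]/(x^n-1)$ by sending the $j$-th standard basis vector to $x^j$, one checks directly that $C\,x^j=x^{j+t}+x^{j-t}=(x^t+x^{n-t})\,x^j$, i.e. $C$ acts as multiplication by $x^t+x^{n-t}=A_t(x)$.

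Finally, the kernel of multiplication by $A_t(x)$ on $R$ is found in the usual way. Put $g=\gcd(A_t,x^n-1)$ and write $x^n-1=gh$; comparing $p$-adic valuations shows $\gcd(A_t/g,\,h)=1$, and hence $(x^n-1)\mid A_t f$ if and only if $h\mid f$. The space of polynomials of degree $<n$ divisible by $h$ has dimension $n-\deg h=\deg g$, so $v(n)=\dim\ker C=\deg\gcd(A_t(x),x^n-1)$, which (since $\gcd$ is symmetric) is exactly \Cref{vn}.

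I expect the main obstacle to be the bookkeeping in the first step, namely establishing cleanly that the $v$-value equals $\dim\ker C$: this requires knowing that the Walsh spectrum of a quadratic form is governed by the rank of its associated symplectic form, that this rank is $2d$ in Dickson form, and that it is invariant under the change of coordinates realizing that form. The circulant identification and the annihilator-dimension computation are both routine by comparison.
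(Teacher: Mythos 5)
Your proof is correct, but it is genuinely different from what the paper does: the paper offers no argument at all for this lemma, disposing of it as ``well known'' with a pointer to \cite[Section 3]{KGS06}, where the computation is carried out for the trace form $\mathrm{Tr}_n(x^{2^t+1})$ on $GF(2^n)$ and the $v$-value appears as the dimension of the kernel of an associated linearized polynomial. You instead stay entirely on the Boolean side: you reduce $v(n)$ to $\dim\ker C$ via the affine invariance of the rank of the symplectic matrix $C$ together with \Cref{Dick}, \Cref{wtnonlin} and \Cref{n=2d+v}, identify $C$ for $(0,t)_n$ as the circulant acting as multiplication by $A_t(x)$ on $GF(2)[x]/(x^n-1)$, and compute the annihilator dimension as $\deg\gcd(A_t,x^n-1)$ (your valuation argument for $\gcd(A_t/g,h)=1$ is sound). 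This buys a self-contained proof using only facts already stated in the paper, and it runs parallel to machinery the paper deploys elsewhere: the circulant matrix $C$ and its kernel appear in the proof of \Cref{th.qq'} (following \cite{Gao12}), and the equivalence between the circulant picture and the linearized-polynomial picture of \cite{KGS06} is exactly the intertwining via \cite{wl-lin} invoked there, so your argument could be spliced into the paper with no new ingredients. One small point of hygiene: as written, your polarization $B(x,y)=Q(x+y)+Q(x)+Q(y)$ does not kill a constant term (over $GF(2)$ three copies of $b$ sum to $b$); either include the customary $+\,Q(0)$, or—as the paper does—define $C=B+B^t$ directly from the pure quadratic part $x^tBx$, which is what your affine-invariance computation $C\mapsto A^tCA$ tacitly assumes anyway. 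This does not affect the validity of the argument, since $(0,t)_n$ itself has no affine part and the Dickson-form rank claim concerns only the quadratic part.
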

 \begin{proof}
 This result is well known. An early published account of the relevance of the kind of gcd computation seen in \eqref{vn} is in \cite[Section 3]{KGS06}.
  \end{proof}
 The condition $n \geq 2t+1$ is needed to exclude the short and bent functions $(0,t)_{2t}$ (which have $v=0$), for which \Cref{vval} does not hold.
 
 In later sections of this paper we shall need the generalization of \Cref{vval} given below; a proof is in \cite[Section 2]{Meidl17}.
 
 \begin{lemma}
 \label{vvals}
 Given the quadratic Boolean function
 \begin{equation*}
   Q=Q(a_1, a_2, \ldots, a_{[(n-1)/2]})=\sum_{i=1}^{[(n-1)/2]} a_i (0,i)_n
 \end{equation*}
 with each $a_i$ in $\{0,1\}$ define
 \begin{equation}
 \label{fncA}
A(x) = \sum_{i=1}^{[(n-1)/2]} a_i (x^{i} +x^{n-i}).
 \end{equation}
 Then the $v$-values for $Q$ are given by 
 \begin{equation}
 \label{vng}
 v(n) = \mathrm{deg} ~\mathrm{gcd}(x^n - 1, A(x)),
 \end{equation}
 where the greatest common divisor is taken mod $2.$
 \end{lemma}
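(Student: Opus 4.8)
The plan is to reduce the computation of $v(n)$ to a dimension count for the kernel of a circulant matrix over $GF(2)$, and then to evaluate that dimension by passing to the ring $GF(2)[x]/(x^n-1)$. This is exactly the argument underlying the special case \Cref{vval}; the only change is that the single symbol $A_t(x)$ is replaced by the general $A(x)$ of \Cref{fncA}. First I would record the linear-algebraic meaning of the $v$-value. Writing $Q(x)=x^tBx$ with $B$ strictly upper triangular and $C=B+B^t$ the matrix of the associated alternating bilinear form (as in the proof of \Cref{th.qq'}), an affine change of variables $x\mapsto Ax+b$ transforms $C$ by the congruence $C\mapsto A^tCA$ with $A$ invertible, since the translation cancels in the bilinear form; hence $\mathrm{rank}(C)$ is an affine invariant. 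In the Dickson normal form of \Cref{Dick} the bilinear form is a sum of $d$ hyperbolic blocks, so $\mathrm{rank}(C)=2d$, where $d$ is the Dickson rank. Combined with \Cref{n=2d+v} this gives $v(n)=n-2d=\dim_{GF(2)}\ker C$, so it suffices to compute $\dim\ker C$.

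Next I would identify $C$ explicitly. For a single summand $(0,i)_n=\sum_j x_jx_{j+i}$ the off-diagonal entry $C_{j,k}$ equals $1$ exactly when $k-j\equiv\pm i\pmod n$, and there are no diagonal entries since $Q$ contains no squared variables. Summing over the active indices, $C$ is the circulant matrix whose first row is the coefficient vector of $A(x)=\sum_i a_i(x^i+x^{n-i})$; the symmetry $i\leftrightarrow n-i$ of $A$ is precisely the statement that this circulant is symmetric, as it must be. Under the standard isomorphism identifying $n\times n$ circulant matrices over $GF(2)$ with the ring $R=GF(2)[x]/(x^n-1)$, the operator $C$ becomes multiplication by $A(x)$ on $R$.

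Finally I would compute the kernel dimension inside $R$. Setting $g(x)=\gcd(A(x),x^n-1)$ and writing $x^n-1=g(x)h(x)$ and $A(x)=g(x)A_1(x)$ with $\gcd(A_1,h)=1$, one checks that $A\cdot f\equiv 0\pmod{x^n-1}$ is equivalent to $h\mid f$, so $\ker(A\cdot-)$ is the principal ideal generated by $h(x)$, of dimension $n-\deg h=\deg g$. Hence $\dim\ker C=\deg\gcd(A(x),x^n-1)$, which together with $v(n)=\dim\ker C$ yields the desired \Cref{vng}. The one point demanding care is the first step, pinning down that the Walsh-spectrum definition of $v$ really coincides with $\dim\ker C$; once the standard quadratic-form facts are assembled, the circulant identification and the gcd manipulation are routine. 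Note also that the restriction $i\le\lfloor(n-1)/2\rfloor$ is what guarantees each summand contributes a genuine symmetric term $x^i+x^{n-i}$ to $A(x)$: the excluded value $i=n/2$ (for even $n$) is precisely the short-function case, where $x^i+x^{n-i}=0$ over $GF(2)$ and the circulant correspondence degenerates.
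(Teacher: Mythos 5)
Your proposal is correct, and it is genuinely different from what the paper does, because the paper offers no argument of its own for \Cref{vvals}: it simply defers to \cite[Section 2]{Meidl17} (just as \Cref{vval} is attributed to \cite{KGS06}). The cited route works over $GF(2^n)$: one writes $Q$ in trace form $\sum_i a_i\mathrm{Tr}_n\left(x^{2^i+1}\right)$, identifies the radical of the associated bilinear form with the kernel of the linearized polynomial $\sum_i a_i\left(x^{2^i}+x^{2^{n-i}}\right)$, and uses the dictionary between linearized polynomials and ordinary polynomials modulo $x^n-1$ (cf. \cite{wl-lin}; exactly this picture reappears in \Cref{subse.lon} of the paper). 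Making that route rigorous for RS Boolean functions also requires knowing that the passage $Q\mapsto Q'$ of \Cref{eq:13} preserves the plateau parameter, which is the transfer issue the paper handles via \Cref{th.qq'} and \cite[Theorem 5.1]{cgl-sec}. Your argument instead never leaves the Boolean domain: $v(n)=n-2d=\dim\ker C$ follows from affine invariance of the rank of the associated alternating form together with \Cref{Dick} and \Cref{n=2d+v}; the matrix $C$ of $\sum_i a_i(0,i)_n$ is visibly the circulant of $A(x)$; and the kernel of multiplication by $A(x)$ on $GF(2)[x]/(x^n-1)$ is the ideal generated by $h=(x^n-1)/\gcd(A,x^n-1)$, of dimension $\deg\gcd(A,x^n-1)$ --- your coprimality claim $\gcd(A_1,h)=1$ is the standard fact $\gcd\left(A/g,(x^n-1)/g\right)=1$, valid with multiplicities even though $x^n-1$ fails to be squarefree for even $n$. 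What your proof buys is self-containedness and independence from the finite-field transfer machinery; what the trace-side route buys is the linearized-polynomial formulation that the paper needs later anyway. Your closing remark about the excluded index $i=n/2$ is also on point: there $x^i+x^{n-i}$ vanishes over $GF(2)$, whereas the short function's actual bilinear form is the invertible circulant of $x^{n/2}$ (so it is bent, with $v=0$), and the formula \Cref{vng} would be false if that index were admitted.
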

 
  \begin{theorem}
 \label{listvn}
 For each $t \geq 1,$ the sequence of $v$-values $v(n)$ for $(0,t)_n$ with $n \geq 2t+1$ is given by 
 \begin{equation}
 \label{vper}
 v(n) = \mathrm{gcd}(2t,n),~ n = 2t+1,~ 2t+2, \ldots
 \end{equation}
 \end{theorem}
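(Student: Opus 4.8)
The plan is to evaluate the polynomial gcd furnished by \Cref{vval} in closed form. That lemma gives, for $n \geq 2t+1$, the identity $v(n) = \deg\gcd(x^n-1, A_t(x))$ with $A_t(x) = x^{t}+x^{n-t}$, all computations carried out mod $2$. Since the hypothesis $n \geq 2t+1$ forces $n-t \geq t$ (and in particular $n-2t \geq 1$), I would first factor out the common power of $x$, writing
\[
  A_t(x) = x^{t}\left(x^{n-2t}+1\right).
\]
This is exactly the place where the standing assumption $n \geq 2t+1$ is used: it rules out the short and bent case $n=2t$, where $A_t$ would collapse and \Cref{vval} does not apply.

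Next I would strip off the factor $x^{t}$. Because $x^n-1$ has nonzero constant term, it is coprime to $x$, so $\gcd(x^n-1,\ x^{t}) = 1$ over $GF(2)$ and hence
\[
  \gcd(x^n-1,\ A_t(x)) = \gcd\left(x^n-1,\ x^{n-2t}+1\right).
\]
Working over $GF(2)$, where $-1=1$, both arguments have the shape $x^{a}-1$, and I would invoke the classical identity $\gcd(x^{a}-1,\ x^{b}-1) = x^{\gcd(a,b)}-1$ (valid over any field, and in particular over $GF(2)$) with $a=n$ and $b=n-2t$. This yields $\gcd(x^n-1, A_t(x)) = x^{\gcd(n,\,n-2t)}-1$.

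Finally I would simplify the exponent by one step of the Euclidean algorithm, using $\gcd(n,\,n-2t) = \gcd(n,\,2t)$. The resulting polynomial $x^{\gcd(n,2t)}-1$ has degree $\gcd(n,2t)$, so \Cref{vval} gives $v(n) = \gcd(2t,n)$, as claimed. I do not expect a genuine obstacle in this argument: it is a short chain of elementary reductions. The only points meriting care are the justification of the binomial-gcd identity over $GF(2)$ and the bookkeeping that legitimately discards the factor $x^{t}$; both are routine once the hypothesis $n \geq 2t+1$ secures $n-2t \geq 1$.
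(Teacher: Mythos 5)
Your proposal is correct and follows essentially the same route as the paper: both reduce $\gcd(x^n-1,\,x^t+x^{n-t})$ to $\gcd(x^n+1,\,x^{n-2t}+1)$ and then apply the elementary identity $\gcd(x^i+1,\,x^j+1)=x^{\gcd(i,j)}+1$ together with $\gcd(n,\,n-2t)=\gcd(n,2t)$. The only difference is that you spell out the bookkeeping (factoring out $x^t$ and its coprimality to $x^n-1$) that the paper leaves implicit.
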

 \begin{proof}
 For Boolean functions we have from \eqref{vn} $$v(n)= \mathrm{deg}~\mathrm{gcd}(x^n +1,x^{n-2t}+1)$$ and now the theorem follows from the 
 elementary fact $\mathrm{gcd}(x^i+1, x^j+1) = x^{\mathrm{gcd}(i,j)} +1$ for any positive integers $i$ and $j.$
\end{proof}

\begin{corollary}
  The integers $n$ and $v(n)$ have the same parity.
\end{corollary}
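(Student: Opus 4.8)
The plan is to read the result off directly from the explicit description of $v(n)$ supplied by \Cref{listvn}, which asserts $v(n)=\gcd(2t,n)$ for all $n\ge 2t+1$. It therefore suffices to prove that $\gcd(2t,n)$ and $n$ have the same parity, and I would do this by splitting according to the parity of $n$.

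First, suppose $n$ is odd. Then every divisor of $n$ is odd; in particular $\gcd(2t,n)$, being a divisor of $n$, is odd. Hence $v(n)$ is odd, matching the parity of $n$. Next, suppose $n$ is even. Then $2$ divides both $2t$ and $n$, so $2$ divides their greatest common divisor, i.e. $v(n)=\gcd(2t,n)$ is even, again matching the parity of $n$. Combining the two cases proves the claim. Equivalently, one can phrase both cases at once in terms of the $2$-adic valuation: the exponent of $2$ in $\gcd(2t,n)$ is $\min(1+\nu,\mu)$, where $\nu,\mu$ are the exponents of $2$ in $t$ and $n$ respectively, and this is positive precisely when $\mu>0$, that is, precisely when $n$ is even.

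I do not expect any genuine obstacle here: the statement is an immediate arithmetic consequence of \Cref{listvn}. The only points worth a word of care are two. First, the formula $v(n)=\gcd(2t,n)$ is only asserted for $n\ge 2t+1$, so the corollary is to be read within that range; the excluded short bent function $(0,t)_{2t}$ has $v=0$ and $n=2t$, which already share the common even parity, so the conclusion persists there as well. Second, one could alternatively deduce $v(n)\equiv n\pmod 2$ straight from the relation $n=2d+v$ recorded in \Cref{n=2d+v}, which forces $v$ and $n$ to agree modulo $2$ for any quadratic function; the $\gcd$ argument above is simply the more direct route given the placement of the corollary immediately after \Cref{listvn}.
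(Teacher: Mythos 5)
Your proof is correct and matches the paper's: the paper's one-line proof cites exactly the two routes you use, namely the formula $v(n)=\gcd(2t,n)$ from \Cref{listvn} (equation \Cref{vper}) and the relation $n=2d+v$ from \Cref{n=2d+v}. Your case split on the parity of $n$ is just a spelled-out version of the gcd argument the paper leaves implicit.
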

\begin{proof}
This result follows from \Cref{vper} and also from \Cref{n=2d+v}.
\end{proof}

  \begin{example}
  \label{ex1}
  The sequence of values $v(n),~n \geq 13,$ for $(0,6)_n$ begins with $1$, $2$, $3$, $4$, $1$, $6$, $1$, $4$, $3$, $2$, $1$, $12$ and has period of length $12$.
  \end{example}

 Now we can show that for any quadratic MRS function $(0,t)_n$ with $n \geq 2t+1$ we can compute the weight and Dickson rank, and hence by \Cref{wtnonlin} and \Cref{MRSwtnonlin} also the nonlinearity, using only $t$ and $n.$ For any integer $m$ we define
 \begin{equation}
 \label{2ind}
 \nu(m) = ~\text{largest integer $c$ such that}~ 2^c~ \text{divides}~ m.
 \end{equation}
 \begin{theorem}
 \label{nuth}
 Given the function $f=(0,t)_n$ with $t \geq 1$ and $n \geq 2t+1,$ 
 $f$ is not balanced if and only if $n \equiv 0 \bmod~{2^{\nu(t)+1}}.$  If $k(n) =~\text{gcd}(n,t)$ and $v(n)=~\text{gcd}(n,2t),$  then $v(n)=k(n)$ if $f$ is balanced and $v(n) = 2k(n)$ if $f$ is not balanced. Hence we can find the weight, Dickson rank and nonlinearity of $f.$
 \end{theorem}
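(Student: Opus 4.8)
The plan is to reduce both assertions to elementary $2$-adic valuation bookkeeping, using the balancedness criterion already implicit in \Cref{MRSwtnonlin} together with the gcd formulas $k(n)=\gcd(n,t)$ and $v(n)=\gcd(n,2t)$ recorded there and in \Cref{listvn}. The starting observation is that, by \Cref{MRSwtnonlin}, the function $(0,t)_n$ has weight $2^{n-1}$—that is, is balanced—precisely when the common cycle length $n/k$ is odd, where $k=\gcd(n,t)$; in the even case the weight is $2^{n-1}-2^{n/2+k-1}\ne 2^{n-1}$. So the whole problem is to translate the parity of $n/k$ into a condition on $n$ and $t$, and then to compare $v(n)$ with $k(n)$.

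First I would factor out the powers of $2$: write $a=\nu(t)$ and $b=\nu(n)$ in the notation of \eqref{2ind}, so that $t=2^a t'$ and $n=2^b n'$ with $t',n'$ odd. Then $\gcd(n,t)=2^{\min(a,b)}\gcd(n',t')$, whence
\begin{equation*}
  \frac{n}{k}=\frac{n}{\gcd(n,t)}=2^{\,b-\min(a,b)}\cdot\frac{n'}{\gcd(n',t')},
\end{equation*}
and the second factor is odd because $n'$ is. Thus $n/k$ is even if and only if $b-\min(a,b)\ge 1$, i.e.\ $b>a$, i.e.\ $2^{a+1}\mid n$. Combined with the balancedness criterion above, this is exactly the first assertion: $f$ fails to be balanced if and only if $n\equiv 0 \pmod{2^{\nu(t)+1}}$.

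For the relation between $v(n)$ and $k(n)$ I would run the same count one step further. Since $2t=2^{a+1}t'$ with $t'$ odd, we have $v(n)=\gcd(n,2t)=2^{\min(a+1,b)}\gcd(n',t')$, while $k(n)=2^{\min(a,b)}\gcd(n',t')$. In the balanced case $b\le a$ both exponents equal $b$, so $v(n)=k(n)$; in the unbalanced case $b\ge a+1$ we get $\min(a,b)=a$ and $\min(a+1,b)=a+1$, so $v(n)=2k(n)$. This is the second assertion.

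Finally, the concluding sentence is a bookkeeping remark: from $n$ and $t$ alone we compute $k(n)=\gcd(n,t)$ and $v(n)=\gcd(n,2t)$, decide balancedness by the valuation test just proved, recover the Dickson rank as $d=(n-v(n))/2$ from the plateaued relation \Cref{n=2d+v} (note $n-v(n)$ is even, so $d$ is an integer), and then read off the weight and nonlinearity directly from the two formulas in \Cref{MRSwtnonlin}. I do not expect any genuinely hard step here; the argument is forced once the prior theorems are invoked. The one point deserving care is correctly reading ``balanced'' off the weight formulas of \Cref{MRSwtnonlin} (balanced $\Leftrightarrow n/k$ odd) before beginning the valuation computation, since the rest is routine once that dictionary entry is in place.
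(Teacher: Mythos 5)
Your proof is correct, and it differs from the paper's in one substantive way. For the first assertion both you and the paper start from the same dictionary entry, namely \Cref{MRSwtnonlin}'s criterion that $(0,t)_n$ is balanced if and only if $n/k$ is odd with $k=\gcd(n,t)$; the paper simply asserts that the first sentence follows from this, whereas you supply the missing (easy but necessary) $2$-adic bookkeeping $n/k = 2^{\,b-\min(a,b)}\cdot\bigl(n'/\gcd(n',t')\bigr)$ showing that $n/k$ is even exactly when $\nu(n)>\nu(t)$, i.e.\ $n\equiv 0 \bmod 2^{\nu(t)+1}$. The real divergence is in the second assertion: the paper obtains $v(n)=k(n)$ (balanced case) and $v(n)=2k(n)$ (unbalanced case) by comparing the explicit nonlinearity formulas of \Cref{MRSwtnonlin} with the plateaued expression $N(f)=2^{n-1}-2^{(n+v)/2-1}$ coming from \Cref{nonl} and the definition of $v$-plateaued, so the relation is extracted by matching exponents of $2$ in two expressions for $N(f)$. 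You instead take the identification $v(n)=\gcd(n,2t)$ from \Cref{listvn} as the starting point and derive the relation by pure gcd arithmetic, comparing $\min(a+1,b)$ with $\min(a,b)$. Your route is more elementary and self-contained (no appeal to Walsh values is needed beyond what \Cref{listvn} already encapsulates), while the paper's route has the merit of exhibiting directly that the plateau parameter produced by the nonlinearity formulas is consistent with the gcd formula, which is the conceptual content behind writing $v(n)=\gcd(n,2t)$ in the statement in the first place. Both arguments are valid, and your concluding reduction of weight, Dickson rank $d=(n-v(n))/2$ via \Cref{n=2d+v}, and nonlinearity to \Cref{wtnonlin} and \Cref{MRSwtnonlin} matches the paper's.
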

 \begin{proof}
The first sentence in the theorem is true since by \Cref{MRSwtnonlin} $f$ is balanced if and only if $n/k(n)$ is odd.
 For any $n \geq 2t+1$ we can find the periodic sequence of $v$-values $v(n)$ using \Cref{listvn}. Now the values of $k(n)$ in \Cref{nuth} follow from \eqref{nonl}, the definition of plateaued and \Cref{MRSwtnonlin}. Once we have $k(n),$ the weight, nonlinearity and Dickson rank for $f$ follow from the first sentence in the theorem, \Cref{wtnonlin} and \Cref{MRSwtnonlin}.
 \end{proof}
 
 It follows from \Cref{listvn} and \Cref{nuth} that for $n \geq 2t+1$ the weights for the functions $(0,t)_n$ satisfy a recursion of order $2t+1,$ as we already saw in \Cref{th2}, and we know the recursion polynomial from \Cref{th2}.  

 We expect from the general theory of linear recurrences (see \cite[pp. 1-5]{recur} for the basics) that if we have an integer sequence 
 $\{a_j:~j=1, 2, \ldots \}$ which satisfies a recursion of order $N$ and if the recursion is {\em nondegenerate} (that is, the recursion polynomial has no pair of distinct roots whose ratio is a root of unity--see \cite[p. 5]{recur}) then there is a formula $a_j= \sum_{i=1}^N  c_i\alpha_i^j$ for the integers in the sequence, where the complex numbers $\alpha_j, 1 \leq j \leq N$ are the roots of the recursion polynomial.  \Cref{th2} shows that the recursions for
 $wt((0,t)_n)$ are always degenerate, but computations show that such formulas are nevertheless always true.                   
\section{Affine equivalence for MRS quadratics}
\label{affeq}
\Cref{nuth} shows that we can easily compute the nonlinearity, Dickson rank and weight for any function $(0,t)_n$ with $n \geq 2t+1,$ and the weight equals the nonlinearity if and only if the function is not balanced.  Our next lemma will enable us to exactly specify the values of $n$ for which $(0,t)_n$ is balanced.
\begin{lemma}
 \label{aa1}
The period for the $v-$values of the quadratic MRS function $(0,t)_n,$ $n \geq 2t+1,$ has length $2t.$ There is a unique largest integer $2t,$ which we place in the first position in the period.  The next largest entry $t$ is also unique, and occurs in position $t.$  The entries in the period are symmetric around position $t,$ that is, we have $v(jt+r) = v(jt-r)$ for each $r = 1, 2, ...$ for which both sides of the equation are defined and for each $j$ satisfying $1 \leq j \leq t.$
\end{lemma}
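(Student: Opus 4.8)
The plan is to reduce the entire lemma to the closed form $v(n) = \gcd(2t, n)$ supplied by \Cref{listvn}, after which every claim becomes an elementary statement about the arithmetic function $n \mapsto \gcd(2t, n)$. Since $\gcd(2t, n + 2t) = \gcd(2t, n)$ for all $n$, this sequence is periodic with period dividing $2t$, so I may index the period by residues $n \bmod 2t$. To see that the minimal period is exactly $2t$, observe that $\gcd(2t, n) = 2t$ forces $2t \mid n$; hence the maximal possible value $2t$ is attained at precisely one residue class, $n \equiv 0 \pmod{2t}$. A value occurring exactly once in a window of length $2t$ cannot repeat under any proper shift, so the period has length exactly $2t$, and labelling the residue $n \equiv 0$ as the first position puts the unique largest entry $2t$ there.

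To locate the second-largest entry I would use that the values of $\gcd(2t,n)$ are exactly the divisors of $2t$; thus, below the maximum $2t$, the largest attainable value is the largest proper divisor of $2t$. As $2t$ is even its smallest prime factor is $2$, so this largest proper divisor is $2t/2 = t$. Solving $\gcd(2t, n) = t$ by writing $n = tm$ reduces to $\gcd(2, m) = 1$, i.e. $m$ odd, which modulo $2t$ forces $n \equiv t \pmod{2t}$. Hence $t$ is the unique second-largest entry and occupies the single position $t$ (measured from the first position), as asserted.

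For the symmetry I would exploit that $\gcd(2t, \cdot)$ depends only on the residue mod $2t$ and is invariant under negation. For any integers $j, r$ one has $(jt+r) + (jt-r) = 2jt \equiv 0 \pmod{2t}$, so $jt + r \equiv -(jt - r) \pmod{2t}$ and therefore
\[
  v(jt + r) = \gcd(2t, jt + r) = \gcd(2t, jt - r) = v(jt - r)
\]
whenever both indices lie in the admissible range $n \geq 2t+1$. This one identity records symmetry about both distinguished positions simultaneously: even $j$ gives the axis $n \equiv 0 \pmod{2t}$ (the first position) and odd $j$ the axis $n \equiv t \pmod{2t}$ (position $t$), so the stated range $1 \leq j \leq t$ is covered.

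I anticipate no essential difficulty, since after invoking \Cref{listvn} the content is purely number-theoretic; the one point needing care is the bookkeeping convention identifying the ``first position'' with the residue $n \equiv 0 \pmod{2t}$ and ``position $t$'' with $n \equiv t$, so that the two extremal entries and the two symmetry axes land exactly where the statement predicts. A check against \Cref{ex1}, where $t = 6$, the maximum $12$ sits at $n \equiv 0$, the value $6$ at $n \equiv 6$, and the block $1,2,3,4,1$ mirrors $1,4,3,2,1$, confirms the indexing.
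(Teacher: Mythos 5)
Your proposal is correct and follows essentially the same route as the paper: the paper's proof likewise reduces every assertion to the formula $v(n)=\gcd(2t,n)$ from \Cref{listvn} (noting only that one starts the period at the unique maximal value $2t$ rather than at the excluded bent case $v(2t)=0$), and leaves the elementary gcd arithmetic implicit. You have simply written out those number-theoretic verifications (minimal period exactly $2t$, uniqueness and location of the entries $2t$ and $t$, and the negation-invariance of $\gcd(2t,\cdot)$ giving the symmetry), which the paper's terse proof takes for granted.
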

\begin{proof}
We already know the period length from \Cref{listvn}.  We cannot start the period at $v(2t)=0$ (a short and bent function) but it is convenient to begin the period with its largest element $2t.$ All of the other assertions in the lemma follow from $v(n) = \mathrm{gcd}(2t,n),~ n = 2t+1,~ 2t, \ldots$
given in \eqref{vper}.
\end{proof}
 
 \Cref{ex1} shows the period for $t=7,$ with the largest element moved to the end of the period. 
 
Using \eqref{nonl} and the fact that every quadratic function is plateaued, we see that \Cref{aa1} gives the nonlinearity for any function $(0,t)_n$ with $n \geq 2t+1.$  By \Cref{MRSwtnonlin} we also have the weight of any function $(0,t)_n$ which is not balanced, and we know that $(0,t)_n$ is balanced for $n \geq 2t+1$ if and only if $n/ \text{gcd}(n, t)$ is odd.   This proves the following lemma (compare \Cref{nuth}).

\begin{lemma}
\label{balMRS}
The function $(0,t)_n$ with $n \geq 2t+1$ is always balanced if $n$ is odd and is balanced for even $n$ if and only if the exact power of $2$ which divides $n$ is 
$\leq 2^{\nu(t)}.$  Equivalently, $(0,t)_n$ with $n \geq 2t+1$ is always balanced except when $n \equiv 0 \bmod 2^{\nu(t)+1}.$
\end{lemma}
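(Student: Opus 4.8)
The plan is to reduce everything to the balance criterion recorded just before the statement, namely that for $n \ge 2t+1$ the function $(0,t)_n$ is balanced if and only if $n/\gcd(n,t)$ is odd; this is part of \Cref{MRSwtnonlin}, and the hypothesis $n \ge 2t+1$ is exactly what keeps us out of the short bent case $n=2t$ where that criterion fails. All that remains is to translate the oddness of $n/\gcd(n,t)$ into the language of the $2$-adic valuation $\nu$ from \eqref{2ind}.

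First I would set $a=\nu(n)$ and $b=\nu(t)$ and use the elementary identity $\nu(\gcd(n,t))=\min(a,b)$, so that $\nu\!\left(n/\gcd(n,t)\right)=a-\min(a,b)$. The quotient $n/\gcd(n,t)$ is odd precisely when this valuation is $0$, i.e. when $a=\min(a,b)$, i.e. when $a\le b$. Feeding this back into the balance criterion yields the single compact equivalence
\[
  (0,t)_n \text{ is balanced} \iff \nu(n)\le \nu(t),\qquad n\ge 2t+1.
\]

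From this one equivalence the three assertions of the lemma fall out directly. If $n$ is odd then $\nu(n)=0\le\nu(t)$, so the function is always balanced. If $n$ is even then $\nu(n)\ge 1$, and balance is equivalent to $\nu(n)\le\nu(t)$, which says exactly that the precise power of $2$ dividing $n$, namely $2^{\nu(n)}$, is at most $2^{\nu(t)}$. Negating gives unbalance precisely when $\nu(n)>\nu(t)$, i.e. $\nu(n)\ge\nu(t)+1$, i.e. $2^{\nu(t)+1}\mid n$; this is the ``always balanced except when $n\equiv 0\bmod 2^{\nu(t)+1}$'' phrasing.

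There is essentially no genuine obstacle here beyond bookkeeping: the only point requiring a moment's care is that the boundary case $\nu(n)=\nu(t)$ must be counted as balanced (it is, since the inequality $a\le b$ is non-strict), and one should confirm that the resulting condition coincides with the one already obtained in \Cref{nuth}. I would close by noting that \Cref{MRSwtnonlin} simultaneously delivers the weight in the unbalanced case, so the lemma needs no computation beyond the valuation arithmetic above.
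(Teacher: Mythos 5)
Your proposal is correct and takes essentially the same route as the paper: the paper also deduces the lemma from the criterion in \Cref{MRSwtnonlin} that $(0,t)_n$ with $n\ge 2t+1$ is balanced if and only if $n/\gcd(n,t)$ is odd (the same criterion already invoked in \Cref{nuth}), and your only addition is to write out explicitly the valuation bookkeeping $\nu(\gcd(n,t))=\min(\nu(n),\nu(t))$ that the paper leaves implicit.
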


We know from \Cref{quadeq} that each equivalence class for the MRS functions $(0,t)_n$ is uniquely determined by the pair of values $wt(f_n)$ and  $N(f_n)$ which all functions in the class have in common.  Using \Cref{MRSwtnonlin} and \Cref{balMRS} we can count the affine equivalence classes for the  MRS quadratic functions in $n \geq 3$ variables. We define $\tau(n)$ to be the number of positive integer divisors of $n.$

\begin{theorem}
\label{MRSeq}
The number of affine equivalence classes for the quadratic MRS functions $(0,t)_n, n \geq 3,$ is $\tau(n)-1.$ 
\end{theorem}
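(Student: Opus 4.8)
The plan is to convert the class count into a divisor count. By \Cref{quadeq} two quadratic functions in $n$ variables are affine equivalent precisely when they share the same weight and nonlinearity, so the number of equivalence classes among the functions $(0,t)_n$ equals the number of distinct pairs $(wt,N)$ that occur. First I would pin down the list of functions: rotating the generating monomial $x_0x_t$ by $-t$ positions shows $(0,t)_n=(0,n-t)_n$, so the distinct quadratic MRS functions in $n$ variables are $(0,t)_n$ for $1\le t\le \lfloor n/2\rfloor$. Those with $t\le \lfloor (n-1)/2\rfloor$ satisfy $n\ge 2t+1$ and are the non-short functions controlled by \Cref{MRSwtnonlin}; when $n$ is even the single remaining value $t=n/2$ gives the short bent function, which must be treated separately.

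Next I would analyze the non-short functions through $k=\gcd(n,t)$. By \Cref{MRSwtnonlin} (equivalently \Cref{nuth}) the pair $(wt,N)$ depends only on $n$ and $k$, and I would argue that the assignment $k\mapsto(wt,N)$ is injective. The two regimes are separated by weight: $n/k$ odd gives balanced functions with $wt=2^{n-1}$, while $n/k$ even gives unbalanced functions with $wt=N<2^{n-1}$, so balanced and unbalanced functions are never equivalent. Within each regime the nonlinearity is strictly monotone in $k$, since the subtracted power of two is $2^{(n+k)/2-1}$ in the balanced case and $2^{n/2+k-1}$ in the unbalanced case; hence distinct values of $k$ yield distinct classes.

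I would then identify the realized values of $k$. Because $\gcd(n,t)\le t\le \lfloor(n-1)/2\rfloor<n/2$, every realized $k$ is a divisor of $n$ strictly less than $n/2$; conversely taking $t=d$ realizes any such divisor $d$, so the set of realized $k$ is exactly $\{d:d\mid n,\ d<n/2\}$. The only divisor of $n$ lying in $[n/2,n)$ is $n/2$, and this occurs only when $n$ is even, so for odd $n$ we discard just $d=n$ to get $\tau(n)-1$ classes, while for even $n$ we discard $d=n$ and $d=n/2$ to get $\tau(n)-2$ non-short classes. Finally, for even $n$ the short function is bent, so it has the maximal nonlinearity $2^{n-1}-2^{n/2-1}$, which strictly exceeds every non-short value (all of which subtract at least $2^{n/2}$); it is therefore inequivalent to all non-short functions and contributes exactly one further class, giving $(\tau(n)-2)+1=\tau(n)-1$, in agreement with the odd case.

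The main obstacle is the injectivity step, namely verifying from the explicit formulas of \Cref{MRSwtnonlin} that no two divisors $k$, and no divisor together with the short/bent value, collapse to the same $(wt,N)$. Once the separation of balanced from unbalanced functions by weight and the strict monotonicity of $N$ in $k$ within each regime are established, the bijection between non-short classes and divisors $d<n/2$ is immediate, and the elementary divisor count (plus the lone extra bent class for even $n$) completes the proof.
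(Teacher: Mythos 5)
Your proof is correct and takes essentially the same route as the paper's: reduce to counting distinct pairs $(wt,N)$ via \Cref{quadeq}, parametrize the functions by $k=\gcd(n,t)$ using the formulas of \Cref{MRSwtnonlin}, and turn the count into a divisor count yielding $\tau(n)-1$. The only differences are bookkeeping and explicitness: for even $n$ the paper counts the admissible $k$ through the prime factorization of $n$, absorbing the short bent function as the value $k=n/2$, whereas you count divisors $d<n/2$ and add the bent class separately, and you spell out the injectivity of $k\mapsto(wt,N)$ (monotonicity of $N$ in $k$ within each regime, separation of regimes by weight), which the paper asserts implicitly.
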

\begin{proof}
By \Cref{quadeq} all we need to do is count the number of distinct pairs $(wt((0,t)_n), N((0,t)_n))$ for $1<t \leq n/2$ if $n$ is even or for
$1<t \leq (n+1)/2$ if $n$ is odd.  We define $k(n)=k=gcd(n, t)$ as in \Cref{MRSwtnonlin}. Note that for each even $n,$ the short and bent function 
$(0,t)_{2t}$ with $k(n) = n/2$ is always in an equivalence class by itself.

For $n$ odd, \Cref{MRSwtnonlin} says $(0,t)_n$ always has weight $2^{n-1},$ and each possible value of $k$ gives a different nonlinearity.  Since $k=n$ gives $t>n$ (impossible) there are exactly 
$\tau(n) - 1$ different values of $k;$ so the case $k$ odd of \Cref{MRSeq} is proved.

Thus we can assume $n$ is even. By \Cref{MRSwtnonlin} we again need only count the number of choices for $k.$  If $n = 2^{\nu(n)} \prod_{i=1}^q p_i^{\beta(i)},$ where the $p_i$ are the distinct odd primes dividing $n,$ then when $n/k$ is even we must have $0 \leq \nu(k) < \nu(n).$ Thus the number of choices for $k$ in this case is 
\begin{equation}
\label{tau1}
\nu(n) \prod_{i=1}^q (\beta_i(n) + 1).
\end{equation}
When $n/k$ is odd, we must have $\nu(t) = \nu(n),$ so the number of choices for $k$ in this case is (recall $k=n$ is impossible)
\begin{equation}
\label{tau2}
 \prod_{i=1}^q (\beta_i(n) + 1) - 1.
\end{equation}
Adding \eqref{tau1} and \eqref{tau2}, we again obtain $\tau(n) - 1$ for the number of affine equivalence classes. 
\end{proof}

The authors of \cite{Kim09} deserve the credit for \Cref{MRSeq}, but they did not state it, perhaps because \Cref{quadeq} was not published when they did their work. 

\section{General RS quadratics}
\label{genquad}

We begin by considering sums $(0,s)_n+(0,t)_n, 1 \leq s < t,$ of two MRS quadratics.  If $t$ is not too large, the algorithm in \cite{C18} can be applied to find the recursion for the weights, but unlike the case for a single MRS function (see \Cref{th2}) there does not seem to be a simple formula for the order of the recursion.  For example, if we define
\begin{equation*}
  g_{n,i} = (0,1)_n + (0,i)_n, 2 \leq i \leq 5,
\end{equation*}
then the weight recursion orders for these functions are $5, 7, 17, 21,$ respectively.  The recursions apply only for $n \geq 2i+1.$ This happens because the presence of the short and bent functions $(0,i)_{2i}$ in the functions $g_{2i,i}$ means $wt(g_{2i,i})$ does not match the weight which the recursions would give. 
By using \Cref{vvals} we compute the $v$-values for the functions $g_{n,i}$ in the next example.
\begin{example}
Let $V(i)$ denote the list of the $v$-values $v(n,i)$ for the functions $g_{n,i}, 2 \leq i \leq 5,$ beginning with $n=2i+1$\\
$V(2)$ begins with $1, 4, 1, 2, 3, 2$  and has period length $6$\\ 
$V(3)$ begins with $1,6,1,2,1,4,1,2$ and has period length $8$\\
$V(4)$ begins with $3,6,1,4,1,~2,7,2,1,4,~1,6,3,2,1,~4,5,2,3,2,$\\
$1,8,1,2,3,~2,5,4,1,2$ and has period length $30$\\
$V(5)$ begins with $1, 8, 1, 2, 3, 6, ~1, 6, 1, 4, 3, 2, ~1, 10, 1, 2, 3, 4,~ 1, 6, 1, 6, 3, 2$\\
and has period length $24$
\end{example}

Computation shows that none of the functions $g_{n,3}$ and $g_{n,5}$ are balanced; also, the functions $g_{n,2}$ and $g_{n,4}$ are balanced only if $n \equiv 2 \bmod{4}.$ Since there does not seem to be a simple formula like \eqref{vper} for the $v$-values of these functions, it seems difficult to analyze the lists of $v$-values and to find a nice way to characterize the values of $n$ for which the functions are balanced (analog of \Cref{balMRS}). However, using new ideas we will make significant progress on this question in \Cref{cj.bal} below.

The next theorem gives some properties of the $v$-values. In the proof of this theorem we will work with Laurent rather than plain polynomials. The ring $L:=GF(2)[x^{\pm 1}]$ of Laurent polynomials over the field with two elements is still a Euclidean domain: this can be seen by extending the degree function from $P:=GF(2)[x]$ to $L$ by defining
\begin{align*}
  \deg(p(x))&= \max(\text{degree of a monomial of }p) \\
            &- \min(\text{degree of a monomial of }p)
\end{align*}
for every Laurent polynomial $p\in L$. 

We write $Q$ for a fixed polynomial $Q(a_i)$ as in \Cref{vvals} and
\begin{equation}\label{eq:20}
  A_n(x)=\sum_{i=1}^{J:=J(Q)} a_i(x^{i}+x^{n-i}),
\end{equation}
setting
\begin{equation}\label{eq:21}
  A(x):=A_0(x)=\sum_{i=1}^{J:=J(Q)} a_i(x^{i}+x^{-i})
\end{equation}
(a Laurent polynomial). Having fixed these conventions, we prove (note we take $n \geq 2J+1$ in order to avoid the short function $(0, J)_{2J}$)
\begin{theorem}\label{th.prd}
 Given any $Q(a_1, a_2, \ldots, a_{[(n-1)/2]}),$ the period for the list of $v$-values beginning with $n=2J(Q)+1$ has a unique largest integer $2J= 2 \max_{a_i \neq 0} i.$ The entries in the period are symmetric around this largest integer in the sense defined in \Cref{aa1}.  
\end{theorem}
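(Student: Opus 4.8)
The plan is to replace the family of polynomials $A_n(x)$ from \Cref{vvals} by the single Laurent polynomial $A(x)=A_0(x)$ of \Cref{eq:21} and to extract both claims from the factorization of $\gcd(x^n-1,A(x))$ over $GF(2)$. By \Cref{vvals} we have $v(n)=\deg\gcd(x^n-1,A_n(x))$; working in $L=GF(2)[x^{\pm1}]$ and using
$$A_n(x)-A(x)=\sum_i a_i\left(x^{n-i}-x^{-i}\right)=(x^n-1)\sum_i a_i x^{-i},$$
I would note that $A_n(x)\equiv A(x)\pmod{x^n-1}$, so that $\gcd(x^n-1,A_n(x))$ and $\gcd(x^n-1,A(x))$ agree up to a unit of $L$. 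This yields the $n$-independent formula $v(n)=\deg\gcd_L(x^n-1,A(x))$. Clearing denominators by the unit $x^J$, I replace $A$ by the palindromic polynomial $\widetilde A(x)=x^JA(x)\in GF(2)[x]$, which has degree exactly $2J$ and constant term $a_J=1$, so that $v(n)=\deg\gcd(x^n-1,\widetilde A(x))$.

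The bound $v(n)\le\deg\widetilde A=2J$ is then immediate, with equality precisely when $\widetilde A\mid x^n-1$; the substance of the proof is to convert this divisibility into a congruence on $n$. Here I would invoke two characteristic-$2$ facts: every nonzero element of $\overline{GF(2)}$ is a root of unity of odd order, and $x^n-1=(x^m-1)^{2^{\nu(n)}}$ with $m$ the odd part of $n$ and $x^m-1$ squarefree. Factoring $\widetilde A=\prod_p p^{f_p}$ into distinct irreducibles (each with nonzero, hence odd-order, roots) and letting $\omega_p$ be the order of the roots of $p$, one obtains
$$\gcd(x^n-1,\widetilde A)=\prod_{p\mid\widetilde A,\ \omega_p\mid n}p^{\min(f_p,\,2^{\nu(n)})}.$$
I expect the multiplicity bookkeeping---the interaction of the cap $\min(f_p,2^{\nu(n)})$ with repeated factors, in particular the always-present factor $(x+1)^2$ (note $\widetilde A(1)=\widetilde A'(1)=0$)---to be the main obstacle and the spot where characteristic $2$ must be handled with care. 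Reading off the displayed product, $\widetilde A\mid x^n-1$ holds iff $\omega_p\mid n$ for every $p$ and $2^{\nu(n)}\ge\max_pf_p$, i.e. iff $P_0\mid n$, where $P_0=\operatorname{lcm}_p(\omega_p)\cdot 2^{s_0}$ and $2^{s_0}$ is the least power of $2$ that is $\ge\max_pf_p$. Thus $v(n)=2J$ exactly on the multiples of $P_0$.

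Both assertions now follow. The same product shows $v(n+P_0)=v(n)$: the divisibilities $\omega_p\mid P_0$ keep the index set of contributing factors unchanged, while $\nu(P_0)=s_0$ together with $f_p\le 2^{s_0}$ keeps every exponent $\min(f_p,2^{\nu(n)})$ unchanged; a short argument shows moreover that $P_0$ is the exact period. Since any $P_0$ consecutive integers contain exactly one multiple of $P_0$, the value $2J$ is attained once per period and is the unique maximum, which is the first claim. For the symmetry I would use that in $L$ one has $x^{-n}-1=x^{-n}(x^n-1)$ with $x^{-n}$ a unit, whence $v(-n)=v(n)$; combined with $P_0$-periodicity this gives, about the maximizing position $n^\ast\equiv 0\pmod{P_0}$, the identities $v(n^\ast+r)=v(r)=v(-r)=v(n^\ast-r)$, which is exactly the symmetry in the sense of \Cref{aa1}. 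Since $(x+1)^2\mid\widetilde A$ forces $P_0$ to be even, the half-period centers appearing in \Cref{aa1} are also meaningful here.
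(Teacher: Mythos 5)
Your proof is correct, and at the top level it follows the same strategy as the paper's: replace $A_n$ by the Laurent polynomial $A$ of \Cref{eq:21}, observe that $v(n)\le 2J$ with equality exactly when $A$ divides $x^n-1$ in $L$, translate that divisibility into a congruence on $n$ that defines the period, and then deduce uniqueness of the maximum and the symmetry from periodicity together with invariance of $v$ under $n\mapsto -n$. The genuine difference is in how the divisibility is converted into a congruence, and there your execution is the correct one: the paper takes $K=2^t(2^k-1)$, where $GF(2^k)$ is the splitting field of $A(x)$, and asserts in \Cref{Kprop1} and \Cref{Kprop2} that $A(x)\mid x^n-1$ iff $K\mid n$ and that $K$ is the smallest period. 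That is false in general, because the lcm of the multiplicative orders of the roots of $A$ (your $\mathrm{lcm}_p\,\omega_p$) may be a proper divisor of $2^k-1$. Concretely, for $Q=(0,2)_n+(0,3)_n$ one has $x^3A(x)=(x+1)^2(x^4+x^3+x^2+x+1)$, so the roots have orders $1$ and $5$ and multiplicities at most $2$; your recipe gives the true period $P_0=5\cdot 2=10$, while the paper's gives $K=2\cdot(2^4-1)=30$, and indeed $A\mid x^{10}-1$ although $30\nmid 10$, so \Cref{Kprop1} fails and $2J$ occurs three times, not once, in a window of $K$ consecutive values. Thus the theorem remains true (with the period meaning the minimal period, your $P_0$), but the paper's own identification of it needs exactly the repair your product formula $\gcd(x^n-1,\widetilde A)=\prod_{\omega_p\mid n}p^{\min(f_p,2^{\nu(n)})}$ provides; the multiplicity bookkeeping you flagged as the main obstacle is fully handled by that formula. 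Two minor points: the assertion you left unproved, that $P_0$ is the exact period, follows in one line (if $P'$ is any period then $v(P_0+P')=v(P_0)=2J$, so $P_0\mid P_0+P'$, i.e. $P_0\mid P'$), though exactness is not needed for the statement; and your symmetry argument, using only that $x^{-r}-1$ and $x^{r}-1$ are associates in $L$, is simpler than the paper's, which invokes the palindromicity of $A$---a hypothesis that turns out to be unnecessary for this step.
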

\begin{proof}
  As previously, we write
  \begin{equation*}
    v(n) = \deg \mathrm{gcd} (x^n-1, A_n(x)).  
  \end{equation*}
  This is equal to
  \begin{equation}\label{eq:1}
    \deg \mathrm{gcd} (x^n-1,\ A(x)) = \deg\mathrm{gcd}\left(x^n-1,\ \sum_1^J a_i(x^{i}+x^{-i})\right),
  \end{equation}
since in every splitting field of $x^n-1$ over $GF(2)$ $x^n$ is identically $1$ (and hence we can eliminate the $n$ from $A_n$). 

Clearly, the maximal value of \Cref{eq:1} is $2J$: this is the degree (in the Laurent polynomial sense, as discussed above) of the right hand argument of the rightmost $\mathrm{gcd}$, and $n$ can be chosen so that
\begin{equation}\label{eq:2}
  A(x) | x^n-1\text{ in }L=GF(2)[x^{\pm 1}]:
\end{equation}
first choose an odd $m$ so that $x^m-1$ vanishes identically on a splitting field of $A(x)$, i.e. $m$ is the smallest positive integer of the form $2^k-1$ such that $z^m=1$ for all roots $z$ of $A(X)$ over the algebraic closure $\overline{GF(2)}$. Next, set $n=2^t m$ for the smallest $t$ with the property that $2^t$ dominates the multiplicity of every root of $A(x)$. Equation \Cref{eq:2} holds for $n=2^tm$ due to the identity
\begin{equation*}
  x^{2^tm}-1 = (x^m-1)^{2^t}
\end{equation*}
over $GF(2)$.

Note furthermore that $K=2^t(2^k-1)$ as described above has (by construction) the following properties:
\begin{equation}\label{Kprop1}
A(x) \text{ divides } x^n-1 \text{ if and only if } x^K-1 \text{ does, if and only if } K \text{ divides } n;
\end{equation}
and
\begin{equation}\label{Kprop2}
 K \text{ is a period for the sequence } \{v(n)\}, \text{ and in fact the smallest period. }
\end{equation}

These properties jointly ensure the uniqueness of $v(n)=2J$ for $n$ ranging over a contiguous sequence of $K$ values (i.e. $v(n)$ ranging over a period). In order to conclude, we have to prove the symmetry claim in the theorem. That claim, however, is now virtually immediate:

The Laurent polynomial $A(x)$ is invariant under the automorphism $x\leftrightarrow x^{-1}$ of $L$. Now, for every $0<n<K$ we have
\begin{equation*}
  v(tK-n) = \deg\mathrm{gcd}(x^{tK-n}-1,A(x)) = \deg\mathrm{gcd}(x^{-n}-1,A(x))
\end{equation*}
because $A(x)|x^{tK}-1$. This, in turn, equals
\begin{equation*}
\deg\mathrm{gcd}(x^{n}-1,A(x))
\end{equation*}
by the noted symmetry of $A(x)$ and finally, this is $v(tK+n)$ (once more because $A(x)$ divides $x^K-1$ and hence also $x^{tK}-1$).
\end{proof}

From now on, we consider a function $Q$ as defined in \Cref{vvals}. 
The next theorem is a main result in this paper. It gives an explicit description of those $n$ for which a quadratic function $Q$ as defined in \Cref{vvals} is balanced.  In view of \Cref{wtnonlin}, this means we can always determine the weight and nonlinearity (hence by \Cref{quadeq} also the affine equivalence class) of any quadratic RS function by a straightforward calculation of the $v$-value from \Cref{vvals}. Then the Dickson rank $d$ is given immediately by \Cref{n=2d+v}.

\begin{theorem}\label{cj.bal}
  Given any $Q(a_1, a_2, \ldots, a_{[(n-1)/2]}),$ if the number of nonzero $a_i$ is odd, then all functions $Q$  are balanced except for those with $n \equiv 0 \bmod 2^{c(Q)}$ for some integer $c(Q).$ If the number of nonzero $a_i$ is even, then either all functions $Q$  are not balanced or all functions $Q$ are not balanced except for those with $n \equiv 2^{d(Q)-1} \bmod 2^{d(Q)}$ for some integer $d(Q)$.
\end{theorem}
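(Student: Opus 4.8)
The plan is to reduce the question to a single clean criterion and then extract the $2$-adic structure. Recall from the proof of \Cref{th.qq'} (via \cite[Theorem 8.23]{car-bool}) that a quadratic $Q(x)=x^tBx$ is \emph{unbalanced} precisely when its restriction to $\ker C$ vanishes identically, where $C=B+B^t$ is the associated alternating (circulant) matrix. So my entire task is to decide, as a function of $n$, whether $Q|_{\ker C}\equiv 0$. I would work in the group algebra $GF(2)[x]/(x^n-1)$, identifying a vector with the polynomial $\sum_j v_jx^j$; then $C$ is multiplication by the symbol $A(x)=\sum_i a_i(x^i+x^{-i})$ and a short computation gives $Q(v)=[x^0]\big(B(x)\,v(x^{-1})v(x)\big)$ with $B(x)=\sum_{i\ge 1}a_ix^i$. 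Since on $\ker C$ the form $Q$ is additive and Frobenius-semilinear (as already observed in the proof of \Cref{th.qq'}), $Q|_{\ker C}\equiv 0$ holds if and only if $Q$ vanishes on a spanning set of $\ker C$, so it suffices to evaluate $Q$ on a convenient eigenbasis.

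First I would dispose of the separable case. When $n$ is odd, $x^n-1$ is squarefree and $\ker C$ is spanned by the honest Fourier eigenvectors $f_\omega=(\omega^j)_j$ with $\omega^n=1$ and $A(\omega)=0$. A direct calculation gives $Q(f_\omega)=(n\bmod 2)\,[\omega=1]\sum_i a_i$, so for odd $n$ the form is nonzero on $\ker C$ exactly when $s:=\sum_i a_i$ is odd. This already proves the theorem for all odd $n$ in both cases (balanced iff $s$ is odd), and is consistent with the stated residues, since $n\equiv 0\bmod 2^{c}$ and $n\equiv 2^{d-1}\bmod 2^{d}$ with $d\ge 2$ describe only even $n$.

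The substance is the even case $n=2^am$, $m$ odd, $a\ge 1$. Here $x^{2^a}-1=(x+1)^{2^a}$, so the $2$-primary part of the group algebra is the \emph{local} ring $\mathcal A_a=GF(2)[x]/\big((x+1)^{2^a}\big)$, and $\ker C$ acquires generalized eigenvectors at the single point $x=1$. I would show that $Q$ vanishes on all generalized eigenvectors attached to $\omega\ne 1$ (the self-correlation there contains a factor $\sum_j\omega^{2j}=0$), so that balance is controlled entirely by the Jordan chain at $\omega=1$, i.e. by the values $Q(f_1^{(k)})$ on the vectors $f_1^{(k)}=\big(\binom{j}{k}\bmod 2\big)_j$, for $0\le k<\min(2\mu,2^a)$, where $\mu=\mathrm{val}_u\widetilde A(u)$ is the $u$-adic valuation of $A$ written as a polynomial in $u=x+x^{-1}$ (equivalently $2\mu$ is the multiplicity of $x+1$ in $A$). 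The point of this basis is the identity $\sum_j\binom{j}{k}x^j=x^k(x+1)^{2^a-1-k}$ in $\mathcal A_a$, which makes the self-correlation collapse to a pure power of $u$:
\[
  Q(f_1^{(k)})=[x^0]\big(B(x)\,u^{q}\big)=:\Phi_a(q),\qquad q=2^a-1-k .
\]
Two elementary constraints cut down the relevant exponents: periodicity of $\binom{\cdot}{k}\bmod 2$ forces $\Phi_a(q)=0$ unless $q<2^{a-1}$, while membership $f_1^{(k)}\in\ker C$ forces $q\ge 2^a-2\mu$. Hence $Q$ is unbalanced at level $a$ if and only if $\Phi_a(q)=0$ for every $q$ in the window $\big[\,\max(2^a-2\mu,0),\,2^{a-1}\,\big)$; this window is empty once $a$ is large, which already yields the ``not balanced for $n\equiv 0\bmod 2^{c}$'' tail.

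The hard part will be the exact $2$-adic evaluation of $\Phi_a(q)$ inside this window, and proving from it the dichotomy governed by the parity of $s$. Expanding $u^q=(x+x^{-1})^q$ in $\mathcal A_a$ by Lucas' theorem, $\Phi_a(q)$ becomes an explicit sum of the $a_i$ weighted by binomial residues depending only on $i\bmod 2^a$ and on $a$ --- in particular balance depends only on $\nu(n)=a$, which is exactly what forces the answer to be a union of residue classes modulo a power of $2$. What remains is a monotonicity/uniqueness statement: for $s$ odd I expect to show the admissible window always contains an exponent $q$ with $\Phi_a(q)\ne 0$ for every $a$ below a threshold $c(Q)$ and none at or above it (so the unbalanced set is the up-set $\{n\equiv 0\bmod 2^{c(Q)}\}$), whereas for $s$ even the surviving coefficients can be nonzero for at most one value of $a$, giving either no balanced $n$ or the single spike $n\equiv 2^{d(Q)-1}\bmod 2^{d(Q)}$. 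I anticipate the main obstacle to be precisely this last step: controlling the parity of $\Phi_a$ uniformly in $a$, which amounts to tracking how the lowest surviving power of $u$ in $\widetilde A(u)$, and the parity of the number of monomials producing it, propagate through the window as $a$ grows. The symmetry and period results of \Cref{th.prd} should help organize this bookkeeping.
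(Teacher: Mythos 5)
Your setup is sound and passes spot checks, but it stops short of proving the theorem: the entire dichotomy --- that for an odd number of terms the balanced $n$ form exactly the complement of a single up-set $\{n\equiv 0 \bmod 2^{c(Q)}\}$, while for an even number of terms the balanced $n$ form at most one residue class $\{n\equiv 2^{d(Q)-1}\bmod 2^{d(Q)}\}$ --- is precisely the ``monotonicity/uniqueness statement'' you defer (``I expect to show\dots'', ``I anticipate the main obstacle to be precisely this last step''). Your window criterion (unbalanced at level $a$ iff $\Phi_a(q)=0$ for all $q\in[\max(2^a-2\mu,0),2^{a-1})$), together with the empty-window tail, only yields that the unbalanced valuations \emph{contain} an up-set; it does not exclude, say, a balanced--unbalanced--balanced pattern as $a$ grows, which is exactly what the theorem forbids. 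The paper closes this by a different mechanism: a downward jump from balanced to unbalanced forces \emph{identical vanishing} of $Q_{2^{\nu}}$ (\Cref{pr.cutoff}); identical vanishing is characterized combinatorially by the exponent multiset being equitable or semi-equitable mod $2^{\nu}$ (\Cref{pr.when-v}), which propagates downward in the even case (\Cref{cor.all-lower}) and pins down a unique $\nu$ in the odd case (\Cref{pr.odd-prel}); and the residual ambiguity at the top is killed by partitioning $Q$ into the pieces ${}_{\mu}Q$ by the $2$-adic valuation of the exponents (\Cref{cor.minmu}, \Cref{pr.ev-equit}, \Cref{th.ev-fin}, \Cref{th.od-fin}). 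Your Lucas-expansion bookkeeping of $\Phi_a(q)$ would have to reproduce an equivalent of all of this, and nothing in the proposal indicates how.

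There is also a secondary gap in the reduction itself. Your claim that $Q$ vanishes on all \emph{generalized} eigenvectors attached to $\omega\ne 1$ is justified only by the computation for honest eigenvectors ($\sum_j\omega^{2j}=0$); for Jordan-chain vectors at a multiple root $\omega\ne 1$ of $\gcd(x^n-1,A)$ the self-correlation does not factor this way, so the assertion that balance is controlled entirely by the chain at $\omega=1$ (equivalently, that balancedness depends only on $\nu(n)$) is unproved in your write-up. This is a real theorem in the paper (\Cref{th.val-iff}), proved by a quite different argument: a parity-reversing element $a\in GF(2^n)$ is pushed down to $GF(2^{2^{\nu(n)}})$ via the intermediate trace, using idempotency of $Q$ and oddness of the extension degree. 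That said, where your computations are complete they are correct and genuinely different in flavor from the paper's (group-algebra/Jordan-basis versus trace-theoretic), and your explicit test vectors $f_1^{(k)}$ with $Q(f_1^{(k)})=[x^0]\bigl(B(x)(x+x^{-1})^{2^a-1-k}\bigr)$ give an appealingly concrete balancedness test at $2$-power levels; but as a proof of \Cref{cj.bal} the proposal is incomplete at its two critical junctures.
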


\Cref{MRSeq} is the first step towards solving the interesting problem of determining an exact count for the number of affine equivalence classes for quadratic RS functions in $n$ variables. The corresponding problem for cubic MRS functions was considered in \cite{Cu11}, but only for the special case of affine equivalence under permutations which preserve rotation symmetry. Later \cite{IJCM} this result was extended to affine equivalence under all permutations. The problem for general affine equivalence was out of reach then, and remains so. However, there is hope that the easier quadratic RS case can be attacked for general affine equivalence, and not just for MRS quadratics, because of the very simple necessary and sufficient condition for affine equivalence in \Cref{quadeq}.  We see from that lemma and the fact that all quadratic functions are $v$-plateaued for some $v$ (where $0 \leq v \leq n-2$ and $v$ and $n$ have the same parity) that all possible weights for a quadratic RS function in $n$ variables are $2^{n-1}$ (balanced function) and $2^{n-1} \pm 2^j,$ where $(n/2)-1 \leq j \leq n-2.$ Thus the number of possibilities for the pair $(weight, nonlinearity)$ is severely restricted. Computation for small $n$ shows that the smaller weights, corresponding to values of $v$ near $n,$ never seem to occur, which would further restrict the possibilities.  Thus the following question  can be raised.
\begin{question}\label{cj.B}
Is it possible that every  quadratic RS function in $n$ variables is affine equivalent to a function of form $Q(a_1, a_2, \ldots, a_{[(n-1)/2]})$ where the number of nonzero $a_i$ is $\leq B$ for some fixed integer $B?$
\end{question}
We have no example where even the very strong statement with $B=3$ is disproved.   

We will now begin to address \Cref{cj.bal}. Recall that the function $(0,t)_m$ in \Cref{eq:3} can be recast as
\begin{equation*}
  Q_m(x)=\mathrm{Tr}_m \left(x^{2^{t}+1}\right),\ x\in GF({2^m}),
\end{equation*}
where $\mathrm{Tr}_m:GF({2^m})\to \mathbb{F}_2$ is the degree-$m$ trace; see \cite{Meidl17}.  We shall often omit the subscript $m$ in $Q_m$ when it is clear from the context. It will often be convenient to use this trace form for the quadratic functions.   We say $Tr_m(x^{2^{t+1}})$ is balanced if and only if the truth table has $2^{m-1}$ $1$'s. This definition makes sense for $m \geq 1$ whereas $(0,t)_m$ is defined only for $m \geq t.$  We shall extend the definition of balanced by taking the trace definition for all $m \geq 1.$ By \Cref{th.qq'} the two definitions agree for $m \geq t.$

\begin{proposition}\label{pr.bal-odd}
  If $n$ is odd, a sum of an odd number of functions $(0,t)_n$ is balanced.
\end{proposition}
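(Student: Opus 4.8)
The plan is to exhibit an explicit parity-reversing vector for $Q$ and then invoke \Cref{le.parity-vectors}, which says a quadratic function is balanced precisely when such a vector exists. The natural candidate is the all-ones vector ${\bf 1}=(1,1,\ldots,1)$, and the heart of the argument is a short computation showing that ${\bf 1}$ reverses parity exactly when both $n$ and the number of nonzero $a_i$ are odd.

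First I would record the effect of shifting a single monomial function $(0,i)_n=\sum_{j=0}^{n-1}x_jx_{j+i}$ (indices mod $n$) by ${\bf 1}$. Expanding $(x_j+1)(x_{j+i}+1)=x_jx_{j+i}+x_j+x_{j+i}+1$ and summing over $j$, the two linear sums each reproduce the full coordinate sum $\sum_j x_j$ and therefore cancel mod $2$, while the constant terms contribute $n$. Hence $(0,i)_n(x+{\bf 1})=(0,i)_n(x)+n$ for every $i$; in particular, when $n$ is odd each summand satisfies $(0,i)_n(x+{\bf 1})=(0,i)_n(x)+1$, which already recovers the known balancedness of a single odd-length MRS quadratic recorded in \Cref{balMRS}.

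Summing with the coefficients $a_i$ then gives $Q(x+{\bf 1})=Q(x)+n\sum_i a_i \pmod 2$. Since $n$ is odd and the number of nonzero $a_i$ is odd, the product $n\sum_i a_i$ is odd, so $Q(x+{\bf 1})=Q(x)+1$ for all $x$; that is, ${\bf 1}\in V^1(Q)$. By \Cref{le.parity-vectors}, the non-emptiness of $V^1(Q)$ forces $Q$ to be balanced, completing the argument.

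I do not expect a genuine obstacle: the computation is elementary once the index bookkeeping mod $n$ is fixed. The only point deserving a word of care is that no short functions can intrude — but this is automatic, since $n$ odd cannot equal $2k$, so every summand $(0,i)_n$ is a genuine $n$-monomial function of the form \eqref{eq:3} and the parity computation above applies verbatim.
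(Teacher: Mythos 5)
Your proof is correct, but it takes a genuinely different route from the paper's. The paper works in the trace representation: it writes $Q_n(x)=\mathrm{Tr}_n\left(\sum_i a_i x^{2^i+1}\right)$ as in \Cref{eq:4} and shows that translation by the field element $1\in GF(2^n)$ reverses parity, via the expansion $(x+1)^{2^i+1}=(x^{2^i+1}+1)+(x^{2^i}+x)$, the vanishing of the trace of $x^{2^i}+x$ (Frobenius invariance), and $\mathrm{Tr}_n(1)=n=1$ for odd $n$. You instead work directly with the algebraic normal form on $GF(2)^n$ and show that the all-ones vector ${\bf 1}$ is parity-reversing: per monomial function the linear terms cancel mod $2$ and the constants contribute $n$. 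The two arguments are structurally parallel --- indeed, under the map $\iota(y)=(y,y^2,\ldots,y^{2^{n-1}})$ used elsewhere in the paper, the field element $1$ corresponds exactly to your vector ${\bf 1}$ --- and both use oddness of $n$ and of the number of terms in the same two places. What your version buys is self-containedness: it needs neither the trace correspondence nor \Cref{th.qq'} to transfer balancedness between the two representations, and the appeal to \Cref{le.parity-vectors} could even be replaced by the bare observation that translation by a parity-reversing vector is an involution swapping $Q^{-1}(0)$ and $Q^{-1}(1)$, so the quadratic hypothesis is not needed at that step. What the paper's version buys is that its trace-level computation is the one that generalizes: the subsequent strengthenings (\Cref{pr.bal-suf}, \Cref{le.trnsl}, \Cref{th.val-iff}) all rest on field structure --- intermediate traces, Frobenius, subfields --- which is invisible at the ANF level, so the paper's proof doubles as a template for those arguments while yours does not.
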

\begin{proof}
  Let
  \begin{equation}\label{eq:4}
    Q_n(x)=\mathrm{Tr}_n\left(\sum_i a_i x^{2^i+1}\right),\ a_i\in GF(2)
  \end{equation}
  be a function as in the statement, with an odd number of non-zero $a_i\in GF(2)$. We claim that under the hypotheses we have $Q(x+1)=Q(x)+1$, which would clearly entail the desired conclusion that the preimages
  \begin{equation*}
    Q^{-1}(0)\text{ and }Q^{-1}(1)\subset GF({2^n})
  \end{equation*}
  have the same cardinality.

  Given that we are assuming the sum in \Cref{eq:4} has an odd number of non-zero terms, it will suffice to show that
  \begin{equation}\label{eq:5}
    \mathrm{Tr}_n\left((x+1)^{2^i+1}\right)  = \mathrm{Tr}_n\left(x^{2^i+1}\right) +1. 
  \end{equation}
  To see this, note first that
  \begin{equation*}
    (x+1)^{2^i+1} = (x^{2^i}+1)(x+1) = (x^{2^i+1}+1) + (x^{2^i}+x). 
  \end{equation*}
  The second term on the right hand side has zero trace: the traces of $x$ and $x^{2^i}$ coincide, since the latter is the image of the former through an iteration of the Frobenius automorphism $x\mapsto x^2$ of the field $GF({2^n})$.

  It follows that 
  \begin{equation*}
    \mathrm{Tr}_n\left((x+1)^{2^i+1}\right)  = \mathrm{Tr}_n\left(x^{2^i+1} + 1\right),
  \end{equation*}
  which is nothing but \Cref{eq:5} once we observe that $\mathrm{Tr}_n(1)=n=1$ because $n$ is assumed odd.
\end{proof}

The technique employed in the proof of \Cref{pr.bal-odd} extends to provide a sufficient condition for balancing under more general circumstances. To state the result, we use the function $\nu(n)$ defined in \eqref{2ind} (i.e. the $2$-adic valuation of $n$).

\begin{theorem}\label{pr.bal-suf}
  Let $Q(x)$ be a function as in \Cref{eq:4} with an odd number of terms. If
  \begin{equation*}
    \nu(n)\le \min_{a_i\ne 0}\nu(i)
  \end{equation*}
  then $Q$ is balanced. 
\end{theorem}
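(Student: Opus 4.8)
The plan is to reuse the translation trick behind \Cref{pr.bal-odd}: instead of counting $Q^{-1}(0)$ and $Q^{-1}(1)$ directly, I would exhibit a single element $a\in GF(2^n)$ for which $Q(x+a)=Q(x)+1$ holds identically, so that the bijection $x\mapsto x+a$ interchanges the two level sets and balancedness is immediate. In \Cref{pr.bal-odd} the choice $a=1$ worked; here the correct choice is dictated by the hypothesis. Writing $s=\nu(n)$ and $m=2^{s}$, the condition $\nu(n)\le \min_{a_i\ne 0}\nu(i)$ says exactly that $m\mid i$ for every index $i$ with $a_i\ne 0$, while $m\mid n$ holds by definition of $s$. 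I would therefore hunt for $a$ inside the subfield $GF(2^{m})\subseteq GF(2^n)$, on which the Frobenius $x\mapsto x^2$ has order dividing $m$.

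The core computation is the expansion
\begin{equation*}
  (x+a)^{2^i+1}=x^{2^i+1}+a\,x^{2^i}+a^{2^i}x+a^{2^i+1}.
\end{equation*}
For $a\in GF(2^{m})$ and $m\mid i$ one has $a^{2^i}=a$, and since $m\mid n$ and $m\mid i$ force $m\mid(n-i)$ one also has $a^{2^{n-i}}=a$. Combining the latter with the Frobenius-invariance of the trace (the same device that produced \Cref{eq:5}) gives $\mathrm{Tr}_n(a\,x^{2^i})=\mathrm{Tr}_n(a^{2^{n-i}}x)=\mathrm{Tr}_n(ax)$, so upon taking traces the two linear-in-$x$ terms cancel and the constant collapses to $\mathrm{Tr}_n(a^{2^i+1})=\mathrm{Tr}_n(a^{2})=\mathrm{Tr}_n(a)$. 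Hence $\mathrm{Tr}_n\big((x+a)^{2^i+1}\big)=\mathrm{Tr}_n\big(x^{2^i+1}\big)+\mathrm{Tr}_n(a)$ for every nonzero term. Summing against the $a_i$ and using that their number is odd yields $Q(x+a)=Q(x)+\mathrm{Tr}_n(a)$, so the whole problem reduces to producing $a\in GF(2^{m})$ with $\mathrm{Tr}_n(a)=1$.

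That last reduction is where the remaining force of the hypothesis $\nu(n)=s$ (as opposed to merely $\ge s$) is spent, and I expect it to be the only genuinely delicate point. By transitivity of the trace, for $a\in GF(2^m)$ we have $\mathrm{Tr}_n(a)=\mathrm{Tr}_{GF(2^m)/GF(2)}\big(\mathrm{Tr}_{GF(2^n)/GF(2^m)}(a)\big)$, and since $a$ is fixed by $\mathrm{Gal}(GF(2^n)/GF(2^m))$ the inner relative trace is $(n/m)\,a$. Because $\nu(n)=s$ exactly, the degree $n/m$ is odd, so $\mathrm{Tr}_{GF(2^n)/GF(2^m)}(a)=a$ and therefore $\mathrm{Tr}_n(a)=\mathrm{Tr}_{GF(2^m)/GF(2)}(a)$. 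The absolute trace of $GF(2^m)/GF(2)$ is surjective, so some $a\in GF(2^m)$ satisfies $\mathrm{Tr}_n(a)=1$; for that $a$ we obtain $Q(x+a)=Q(x)+1$, and $Q$ is balanced. The main obstacle is thus purely the trace-compatibility bookkeeping: keeping track that $m\mid i$ and $m\mid(n-i)$ throughout the expansion, and recognizing that the odd degree $n/m$ is precisely what preserves surjectivity of the trace restricted to the subfield.
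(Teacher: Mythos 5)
Your proof is correct, and its computational engine is the same as the paper's: both arguments translate by elements $a$ of the subfield $GF(2^{2^{\nu(n)}})$, both use the hypothesis $\nu(i)\ge\nu(n)$ to make the cross terms $a\,x^{2^i}+a^{2^i}x$ vanish under the trace via Frobenius-invariance, and both spend the exactness of $\nu(n)$ (i.e.\ the oddness of $n/2^{\nu(n)}$) together with the odd number of terms to obtain a clean translation identity. The difference is in the packaging. The paper factors $\mathrm{Tr}_n=\mathrm{Tr}_{2^{\nu}}\circ\mathrm{Tr}_{mid}$ as in \Cref{eq:9} and proves the stronger intermediate statement \Cref{eq:6}: the $GF(2^{2^{\nu}})$-valued function $Q_{mid}$ satisfies $Q_{mid}(x+a)=Q_{mid}(x)+a^2$ for \emph{every} $a$ in the subfield, and balancedness of $Q_n$ follows by composing the balanced vector-valued function $Q_{mid}$ with the balanced map $\mathrm{Tr}_{2^{\nu}}$. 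You instead keep the absolute trace throughout, obtain $Q_n(x+a)=Q_n(x)+\mathrm{Tr}_n(a)$ for $a\in GF(2^{2^{\nu(n)}})$, and then use the odd-degree collapse $\mathrm{Tr}_{GF(2^n)/GF(2^{2^{\nu(n)}})}(a)=a$ plus surjectivity of the absolute trace on the subfield to produce a single parity-reversing element. Your route is somewhat more elementary --- it needs no notion of balancedness for vector-valued functions, and the conclusion follows at once from the bijection $x\mapsto x+a$ (equivalently, \Cref{le.parity-vectors}) --- while the paper's formulation records finer structure (the translation law under the entire subfield), which is the form that recurs later in \Cref{le.trnsl} and \Cref{th.val-iff}. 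Your trace bookkeeping (the cancellation $\mathrm{Tr}_n(a\,x^{2^i})=\mathrm{Tr}_n(a^{2^{n-i}}x)=\mathrm{Tr}_n(ax)$ and the collapse $\mathrm{Tr}_n(a^{2^i+1})=\mathrm{Tr}_n(a)$) is sound, so the proposal stands as a complete and valid proof.
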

\begin{proof}
  Write $n=2^{\nu}m$ with $m$ odd and $\nu=\nu(n)$. The trace $\mathrm{Tr}_n$ is then the composition of two intermediate traces:
  \begin{equation}\label{eq:9}
    \mathrm{Tr}_n = \mathrm{Tr}_{2^{\nu}}\circ \mathrm{Tr}_{GF({2^n})/GF({2^{2^{\nu})}}}. 
  \end{equation}
  Denote the rightmost trace by $\mathrm{Tr}_{mid}$ for brevity. If we show that
  \begin{equation*}
    x\mapsto Q_{mid}(x):=\mathrm{Tr}_{mid}\left(\sum_i a_i x^{2^i+1}\right)\in GF({2^{2^{\nu})}}
  \end{equation*}
  achieves every value in its codomain $GF({2^{2^{\nu})}}$ the same number of times (or in short, is balanced as a $GF({2^{2^{\nu})}}$-valued function) then we can conclude that $Q$ is balanced by simply composing further with $\mathrm{Tr}_{2^{\nu}}$, which has the same property (i.e. the preimages $\mathrm{Tr}_{2^\nu}^{-1}(0)$ and $\mathrm{Tr}_{2^\nu}^{-1}(1)$ are equinumerous).

  In turn, proving that $Q_{mid}$ is balanced will follow from the equation
  \begin{equation}\label{eq:6}
    Q_{mid}(\bullet+a) = Q_{mid}(\bullet)+a^2,\ \forall a\in GF({2^{2^\nu})}. 
  \end{equation}
  To see this, note first that our assumption on  $2$-adic valuations ensures that for all $x^{2^i}$ appearing in the expression of $Q_{mid}$ we have $\nu(i)\ge \nu$ and hence $x\mapsto x^{2^i}$ is an iterated application $x\mapsto F^{d_i}x$ of the Frobenius automorphism
  \begin{equation*}
    F:x\mapsto x^{2^{2^\nu}}
  \end{equation*}
  of $GF({2^{2^{\nu})}}$. In conclusion, for each term $x^{2^i+1}$ of $Q_{mid}$, we have
  \begin{equation*}
    (x+a)^{2^i+1} = F^{d_i}(x+a)\cdot (x+a) = (F^{d_i}x+a)(x+a). 
  \end{equation*}
  The two terms $F^{d_1}x\cdot a$ and $a\cdot x$ cancel out upon taking the trace $\mathrm{Tr}_{mid}$, so that leaves us with
  \begin{equation*}
    F^{d_i}x\cdot x + a^2 = x^{2^i+1}+a^2.
  \end{equation*}
  Applying $\mathrm{Tr}_{mid}$ to $a^2$ produces
  \begin{equation*}
    [GF({2^n}):GF({2^{2^\nu})}]a^2 = a^2
  \end{equation*}
  because the degree $[GF({2^n}):GF({2^{2^\nu})}]=m$ is odd, and finally the fact that we have an odd number of such terms $x^{2^i+1}$ proves \Cref{eq:6} and hence the theorem.
\end{proof}

In particular, when $Q$ has a single term, we recover the sufficiency condition for balancing obtained previously in \Cref{nuth}. 

We write $n=km$ for odd $m$ and decorate the function $Q$ in \Cref{eq:4} with an `$n$' subscript to emphasize that it involves an application of $\mathrm{Tr}_n$. This will allow us to talk about the analogues
\begin{equation*}
  Q_d(x) = \mathrm{Tr}_d\left(\sum_i a_i x^{2^i+1}\right) : GF({2^d})\to GF(2)
\end{equation*}
for every divisor $d|n$. Recall also the decomposition \Cref{eq:9} of $\mathrm{Tr}_n$; in the present setup we once more write it as
\begin{equation}\label{eq:11}
  \mathrm{Tr}_n = \mathrm{Tr}_k\circ \mathrm{Tr}_{mid},
\end{equation}
where the factor is the intermediate trace $GF({2^n})\to GF({2^k})$

The following observation, which builds on the proof of \Cref{pr.bal-suf}, will come in handy later.

\begin{lemma}\label{le.trnsl}
  Let $Q$ be a quadratic function as in \Cref{eq:4}. Then, for every $x\in GF({2^n})$ with
  \begin{equation*}
    GF({2^{k})}\ni \mathrm{Tr}_{mid}(x)=0
  \end{equation*}
  and $a\in GF({2^{k})}$ we have
  \begin{equation}\label{eq:8}
    Q_n(x+a) = Q_n(x) + Q_{k}(a). 
  \end{equation}
\end{lemma}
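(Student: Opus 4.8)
The plan is to expand $Q_n(x+a)$ by distributing $\mathrm{Tr}_n$ over the characteristic-$2$ expansion of each monomial $(x+a)^{2^i+1}$, and then to split the result into three contributions: one reproducing $Q_n(x)$, one reproducing $Q_k(a)$, and a cross contribution that I will argue vanishes precisely because $\mathrm{Tr}_{mid}(x)=0$. Concretely, over $GF({2^n})$ one has $(x+a)^{2^i+1} = (x^{2^i}+a^{2^i})(x+a) = x^{2^i+1} + a^{2^i+1} + \left(x^{2^i}a + a^{2^i}x\right)$, so after applying $\mathrm{Tr}_n$ and summing against the coefficients $a_i$ of \Cref{eq:4}, the identity \Cref{eq:8} reduces to two separate claims: that $\mathrm{Tr}_n\!\left(\sum_i a_i a^{2^i+1}\right) = Q_k(a)$, and that $\mathrm{Tr}_n\!\left(\sum_i a_i (x^{2^i}a + a^{2^i}x)\right) = 0$.

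For the first claim I would use that $a\in GF({2^k})$ forces every $a^{2^i+1}$ to lie in $GF({2^k})$. Recalling the tower decomposition \Cref{eq:11}, the intermediate trace $\mathrm{Tr}_{mid}(z)=\sum_{j=0}^{m-1}z^{2^{kj}}$ fixes each $y\in GF({2^k})$ up to multiplication by the degree $m=[GF({2^n}):GF({2^k})]$; since $m=n/k$ is odd, $\mathrm{Tr}_{mid}(y)=my=y$ and hence $\mathrm{Tr}_n(y)=\mathrm{Tr}_k(y)$ for every $y\in GF({2^k})$. Applying this with $y=\sum_i a_i a^{2^i+1}$ identifies the first contribution with $Q_k(a)$.

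The crux is the vanishing of the cross terms, and here I would exploit exactly the Frobenius-cancellation mechanism used in the proof of \Cref{pr.bal-suf}, now localized at $\mathrm{Tr}_{mid}$. The point is that $a$, lying in $GF({2^k})$, is invariant under the relative Frobenius $z\mapsto z^{2^k}$, so it pulls out of $\mathrm{Tr}_{mid}$: using $a^{2^{kj}}=a$ in $\mathrm{Tr}_{mid}(z)=\sum_{j=0}^{m-1}z^{2^{kj}}$ gives $\mathrm{Tr}_{mid}(a^{2^i}x)=a^{2^i}\,\mathrm{Tr}_{mid}(x)$ and $\mathrm{Tr}_{mid}(x^{2^i}a)=a\,\mathrm{Tr}_{mid}(x^{2^i})=a\,\mathrm{Tr}_{mid}(x)^{2^i}$, the last equality because $\mathrm{Tr}_{mid}$ commutes with the absolute Frobenius $z\mapsto z^2$. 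Both right-hand sides are killed by the hypothesis $\mathrm{Tr}_{mid}(x)=0$, so the outer $\mathrm{Tr}_k$ returns $0$ and the cross contribution vanishes, establishing \Cref{eq:8}.

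I expect the only delicate point to be the exponent bookkeeping — verifying $a^{2^{i+kj}}=a^{2^i}$ for $a\in GF({2^k})$ and that $\mathrm{Tr}_{mid}$ genuinely intertwines $z\mapsto z^{2^i}$ — but both are immediate from $a^{2^k}=a$ and the explicit sum defining $\mathrm{Tr}_{mid}$, so no idea beyond the one already in \Cref{pr.bal-suf} is required.
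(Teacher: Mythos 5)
Your proof is correct and follows essentially the same route as the paper's: expand $(x+a)^{2^i+1}$ into the three contributions, recover $Q_k(a)$ from the pure $a$-term using that $\mathrm{Tr}_{mid}$ acts as the identity on $GF(2^k)$ (oddness of $m$), and kill the cross terms via the $GF(2^k)$-semilinearity of $\mathrm{Tr}_{mid}$ together with the hypothesis $\mathrm{Tr}_{mid}(x)=0$. The only differences are cosmetic --- the paper reduces to a single monomial term first, and your expansion even corrects a small typo in the paper's cross term ($xa^{2^i}$ rather than $xa^{2^i+1}$).
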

\begin{proof}
  It is enough to prove this for a single term $x\mapsto x^{2^i+1}$, $i\ge 1$ of $Q$. We have
  \begin{equation}\label{eq:7}
    (x+a)^{2^i+1} = (x^{2^i}+a^{2^i})(x+a) = x^{2^i+1} + (x^{2^i}a + xa^{2^i+1}) + a^{2^i+1}. 
  \end{equation}
  The three terms of the rightmost expression in \Cref{eq:7} are disposed of as follows.

  \begin{itemize}
  \item Applying $\mathrm{Tr}_n$ to the first term produces $Q_n(x)$ in \Cref{eq:8}.
  \item Applying
    \begin{equation*}
      \mathrm{Tr}_n = \mathrm{Tr}_{k}\circ \mathrm{Tr}_{mid}
    \end{equation*}
    to the third term produces
    \begin{equation*}
      Q_{k}(a) = \mathrm{Tr}_{k}\left(a^{2^i+1}\right)
    \end{equation*}
    because $\mathrm{Tr}_{mid}$ is the relative trace of an odd-degree field extension of $GF({2^{k})}$ and $a$ belongs to the latter field.
  \item Finally, $\mathrm{Tr}_n$ annihilates the second term $x^{2^i}a + xa^{2^i+1}$ on the right hand side of \Cref{eq:7} because $\mathrm{Tr}_{mid}$ does: the latter produces
    \begin{equation*}
      \mathrm{Tr}_{mid}(x)^{2^i}a + \mathrm{Tr}_{mid}(x)a^{2^i}
    \end{equation*}
    because (in characteristic two) traces commute with squaring, and we are assuming that $\mathrm{Tr}_{mid}(x)$ vanishes.
  \end{itemize}
  Jointly, these three remarks prove the desired conclusion.
\end{proof}

The usefulness of the lemma will become apparent in the course of the proof of the following result.

\begin{theorem}\label{pr.2-adic-suf}
  Let $Q(x)$ be a quadratic function as defined as in \Cref{eq:4} and consider a positive integer $n=k m$ for odd $m$. If $Q_{k}$ is balanced then so is $Q_n$. 
\end{theorem}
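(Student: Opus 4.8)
The plan is to combine the translation identity of \Cref{le.trnsl} with a direct-sum decomposition of $GF(2^n)$ adapted to the intermediate trace $\mathrm{Tr}_{mid}\colon GF(2^n)\to GF(2^k)$ from \Cref{eq:11}. First I would set $K_0=\ker\mathrm{Tr}_{mid}$ and record two facts. Because $[GF(2^n):GF(2^k)]=m$ is odd, the relative trace restricts to the identity on the base field: $\mathrm{Tr}_{mid}(a)=m\cdot a=a$ for every $a\in GF(2^k)$. Consequently $GF(2^k)\cap K_0=\{0\}$, and since $\dim_{GF(2)}K_0=k(m-1)$ and $\dim_{GF(2)}GF(2^k)=k$ add up to $n$, we obtain an internal direct sum $GF(2^n)=K_0\oplus GF(2^k)$. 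Explicitly, every $y\in GF(2^n)$ factors uniquely as $y=x+a$ with $a=\mathrm{Tr}_{mid}(y)\in GF(2^k)$ and $x=y+a\in K_0$.

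Next I would invoke \Cref{le.trnsl} with exactly this decomposition: for $x\in K_0$ (so that $\mathrm{Tr}_{mid}(x)=0$) and $a\in GF(2^k)$ the lemma yields
\begin{equation*}
  Q_n(x+a)=Q_n(x)+Q_k(a).
\end{equation*}
I would then count the fiber $Q_n^{-1}(0)$ by grouping the $2^n$ elements $y=x+a$ according to their $K_0$-component $x$:
\begin{equation*}
  \bigl|Q_n^{-1}(0)\bigr| = \sum_{x\in K_0}\#\{a\in GF(2^k)\colon Q_k(a)=Q_n(x)\}.
\end{equation*}

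Finally I would use the hypothesis. Since $Q_k$ is balanced, both fibers $Q_k^{-1}(0)$ and $Q_k^{-1}(1)$ have cardinality $2^{k-1}$; hence for each fixed $x$ the inner count equals $2^{k-1}$ regardless of whether $Q_n(x)$ is $0$ or $1$. Summing over the $|K_0|=2^{k(m-1)}$ choices of $x$ gives $\bigl|Q_n^{-1}(0)\bigr|=2^{k(m-1)}\cdot 2^{k-1}=2^{n-1}$, so $Q_n$ is balanced. The argument is short precisely because \Cref{le.trnsl} was engineered for it; the only point that needs genuine care is the decomposition $GF(2^n)=K_0\oplus GF(2^k)$, where the oddness of $m$ is essential, both for the identity $\mathrm{Tr}_{mid}|_{GF(2^k)}=\mathrm{id}$ and for the matching dimension count. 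Beyond this bookkeeping there is no serious obstacle.
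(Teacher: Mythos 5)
Your proposal is correct and follows essentially the same route as the paper: the unique decomposition $y=x+a$ with $\mathrm{Tr}_{mid}(x)=0$ and $a\in GF(2^k)$ (valid because $m$ is odd), followed by an application of \Cref{le.trnsl} and a count over the cosets of $GF(2^k)$. Your version merely makes the direct-sum bookkeeping and the fiber count more explicit than the paper does.
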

\begin{proof}
  Because the degree $m$ of the extension
  \begin{equation}\label{eq:10}
    GF({2^{k})}\subseteq GF({2^n})
  \end{equation}
  is odd, every element of the larger field $F_n$ can be written (uniquely) as $x+a$ where
  \begin{itemize}
  \item $x$ is annihilated by the relative trace $\mathrm{Tr}_{mid}$ of \Cref{eq:10};
  \item $a$ belongs to the smaller field $GF({2^{k})}$. 
  \end{itemize}
  But then, according to \Cref{le.trnsl} we have 
  \begin{equation*}
    Q_n(x+a) = Q_n(x) + Q_{k}(a). 
  \end{equation*}
  Since we are assuming that $Q_{k}$ is balanced, this implies that every coset of $GF({2^{k})}$ in $GF({2^n})$ contains equal numbers of elements in $Q_n^{-1}(0)$ and $Q_n^{-1}(1)$. Since $GF({2^n})$ is a disjoint union of such cosets, this proves the conclusion that $Q_n$ is balanced.
\end{proof}

We will now reverse the implication in \Cref{pr.2-adic-suf}:


\begin{theorem}\label{th.val-iff}
  Let $Q(x)$ be a quadratic function as defined as in \Cref{eq:4} and consider a positive integer $n=k m$ for odd $m$. Then, $Q_{k}$ is balanced if and only if $Q_n$ is.

  In particular, whether or not $Q_n$ is balanced depends only on the $2$-adic valuation of $n$. 
\end{theorem}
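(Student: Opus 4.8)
The plan is to prove the equivalence through a single multiplicative relation between Walsh values at ${\bf 0}$, and then read off the ``in particular'' clause by specializing $k$. Write $K=\ker \mathrm{Tr}_{mid}$, where $\mathrm{Tr}_{mid}$ is the relative trace of the odd-degree extension $GF(2^k)\subseteq GF(2^n)$ as in the proof of \Cref{pr.2-adic-suf}. Since $m$ is odd, $\mathrm{Tr}_{mid}$ restricts to multiplication by $m=1$ on $GF(2^k)$, so $K\cap GF(2^k)=\{0\}$ and, by a dimension count, $GF(2^n)=K\oplus GF(2^k)$ as $GF(2)$-spaces. Writing each element uniquely as $x+a$ with $x\in K$, $a\in GF(2^k)$ and invoking \Cref{le.trnsl} (so that $Q_n(x+a)=Q_n(x)+Q_k(a)$), a short count of how many $a$ satisfy $Q_n(x+a)=1$ for each fixed $x$ gives the key identity $wt(Q_n)-2^{n-1}=S\bigl(wt(Q_k)-2^{k-1}\bigr)$, where $S:=\sum_{x\in K}(-1)^{Q_n(x)}$. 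In terms of \Cref{eq:14} this is exactly $W_{Q_n}({\bf 0})=S\cdot W_{Q_k}({\bf 0})$. Thus, \emph{provided} $S\neq 0$, we get $Q_n$ balanced $\iff W_{Q_n}({\bf 0})=0\iff W_{Q_k}({\bf 0})=0\iff Q_k$ balanced, which proves both directions at once (subsuming \Cref{pr.2-adic-suf}).

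The entire content therefore reduces to the single claim $S\neq 0$, and this is where I expect the real work to be. The plan here is to recognize $S$ itself as a Walsh value of an auxiliary quadratic. By additive Hilbert~90 the map $y\mapsto y^{2^k}+y$ is $2^k$-to-one from $GF(2^n)$ onto $K$, so
\begin{equation*}
  S=\frac{1}{2^k}\sum_{y\in GF(2^n)}(-1)^{Q_n(y^{2^k}+y)}=\frac{1}{2^k}\,W_{\widetilde Q}({\bf 0}),
  \qquad \widetilde Q(y):=Q_n(y^{2^k}+y).
\end{equation*}
Because $\mathrm{Tr}_n$ is Frobenius-invariant we have $Q_n(y^{2^k})=Q_n(y)$, and polarizing $Q_n$ (with associated bilinear form $B_n(u,w)=\mathrm{Tr}_n(w\,L(u))$, where $L(x)=\sum_i a_i(x^{2^i}+x^{2^{n-i}})$ is the additive polynomial of \Cref{eq:16} attached to the $Q$ of \Cref{eq:4}) collapses $\widetilde Q$ to the genuinely quadratic Boolean function $\widetilde Q(y)=B_n(y^{2^k},y)=\mathrm{Tr}_n\!\bigl(y^{2^k}L(y)\bigr)$. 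So $S\neq 0$ is equivalent to $\widetilde Q$ being \emph{unbalanced}.

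To see that $\widetilde Q$ is unbalanced I will show it vanishes on the radical of its own bilinear form (the standard criterion, as used for $Q_n$ in the discussion around \Cref{th.qq'}). A direct polarization identifies that bilinear form as $\widetilde B(y,z)=\mathrm{Tr}_n\!\bigl(y\,(F^k+F^{-k})L(z)\bigr)$, where $F$ is the squaring Frobenius; since $\ker(F^k+F^{-k})=\{w:w^{2^{2k}}=w\}\cap GF(2^n)=GF(2^k)$ (using $\gcd(2k,n)=k$ as $m$ is odd), the radical is exactly $\{y:L(y)\in GF(2^k)\}$. On this radical, writing $c=L(y)\in GF(2^k)$ and pushing the Frobenius across the trace (legitimate because $k\mid n-k$, so $c^{2^{n-k}}=c$), one gets $\widetilde Q(y)=\mathrm{Tr}_n(y^{2^k}c)=\mathrm{Tr}_n(yc)=\mathrm{Tr}_n(yL(y))=B_n(y,y)=0$, the last equality being the characteristic-two alternating identity $B_n(y,y)=Q_n(2y)+2Q_n(y)=0$. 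Hence $\widetilde Q$ vanishes on its radical, is unbalanced, $W_{\widetilde Q}({\bf 0})\neq 0$, and so $S\neq 0$, completing the equivalence.

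Finally, for the ``in particular'' statement I would write $n=2^{\nu}m$ with $\nu=\nu(n)$ and $m$ odd, and apply the equivalence just proved with $k=2^{\nu}$: it shows $Q_n$ is balanced if and only if $Q_{2^{\nu}}$ is, and the right-hand condition depends only on $\nu=\nu(n)$. The main obstacle throughout is the nonvanishing $S\neq 0$; everything else is bookkeeping with the direct-sum decomposition and \Cref{le.trnsl}. The one place demanding care is the polarization computation of $\widetilde B$ and its radical, since the Frobenius twist $y\mapsto y^{2^k}$ must be handled via trace adjoints, but the payoff is the clean cancellation $\widetilde Q|_{\mathrm{rad}}=B_n(y,y)=0$.
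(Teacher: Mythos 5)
Your proof is correct, but it follows a genuinely different route from the paper's. The paper obtains one implication from \Cref{pr.2-adic-suf} and proves the converse structurally: by \Cref{le.parity-vectors}, $Q_n$ is balanced iff some parity-reversing $a\in GF(2^n)$ exists; idempotency of $Q$ makes all $m$ Frobenius conjugates $a, a^{2^k},\ldots,a^{2^{k(m-1)}}$ parity-reversing, and since $m$ is odd their sum $\mathrm{Tr}_{mid}(a)\in GF(2^{k})$ is again parity-reversing, so $Q_k=Q_n|_{GF(2^k)}$ is balanced --- a few lines, with no computation. You instead prove the multiplicative identity $W_{Q_n}({\bf 0})=S\cdot W_{Q_k}({\bf 0})$ with $S=\sum_{x\in\ker\mathrm{Tr}_{mid}}(-1)^{Q_n(x)}$, and reduce everything to the nonvanishing $S\neq 0$, which you obtain from additive Hilbert 90 plus a radical computation for the auxiliary function $\widetilde Q(y)=\mathrm{Tr}_n\bigl(y^{2^k}L(y)\bigr)$. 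I checked the key steps: the factorization over $GF(2^n)=\ker\mathrm{Tr}_{mid}\oplus GF(2^k)$ via \Cref{le.trnsl}, the identification $\ker(F^k+F^{n-k})=GF(2^{\gcd(2k,n)})=GF(2^k)$ (the one place where oddness of $m$ enters, through $\gcd(2k,km)=k$), and the cancellation $\widetilde Q|_{\mathrm{rad}}=\mathrm{Tr}_n(yL(y))=B_n(y,y)=0$ are all correct. One gloss worth tidying: $\widetilde Q$ need not be ``genuinely quadratic'' --- when a subscript satisfies $i\equiv\pm k\pmod n$ some of its terms degenerate to linear ones or cancel --- but the criterion ``unbalanced iff identically zero on the radical of the associated bilinear form'' holds for any function of degree at most $2$ vanishing at ${\bf 0}$, and your radical computation applies verbatim in those degenerate cases (in particular, if the radical is everything your argument forces $\widetilde Q\equiv 0$, so no inconsistency arises), hence this is not a gap. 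The trade-off: the paper's converse is shorter and more elementary, while your argument is heavier but delivers strictly more --- both implications at once (subsuming \Cref{pr.2-adic-suf}) and the quantitative fact that the partial Walsh sum $S$ over $\ker\mathrm{Tr}_{mid}$ never vanishes, i.e. an exact multiplicative relation between the two Walsh transforms rather than a mere equivalence of balancedness.
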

\begin{proof}
  The last statement follows from the rest. As for the first statement, one implication is covered by \Cref{pr.2-adic-suf}, so we focus on the converse.

  According to \Cref{le.parity-vectors} the function $Q_n$ is balanced if and only if there is a $Q$-parity-reversing element $a\in GF({2^n})$, i.e. one satisfying
  \begin{equation}\label{eq:12}
    Q_n(x+a) = Q_n(x)+1,\ \forall x\in GF({2^n});
  \end{equation}
  
The factorization \Cref{eq:11} implies that
  \begin{equation*}
    Q_n|_{GF({2^k})} = Q_k:
  \end{equation*}
  indeed, $\mathrm{Tr}_{mid}$ is the identity on $GF({2^k})$ because the degree $m=[GF({2^n}):GF({2^k})]$ is odd. In conclusion, the claimed equivalence will follow once we show that an element $a\in GF({2^n})$ satisfying \Cref{eq:12}, if it exists, can be chosen in $GF({2^k})$.

  To see this, let $a\in GF({2^n})$ be parity-reversing. Then, since $Q$ is idempotent (i.e. $Q(x^2)=Q(x)$) all elements
  \begin{equation*}
    a,\ a^{2^k},\ \cdots,\ a^{2^{k(m-1)}}
  \end{equation*}
  have the same property. Since there are $m$ of them, i.e. an odd number,
  \begin{equation*}
    \mathrm{Tr}_{mid}(a) = a+a^{2^k}+\cdots+a^{2^{k(m-1)}}\in GF({2^k})
  \end{equation*}
  is again parity-reversing. As noted, this concludes the proof.
\end{proof}

\Cref{th.val-iff} reduces the problem of whether or not $Q_n$ is balanced to the case when $n$ is a power of $2$. This allows us to supplement \Cref{pr.bal-suf} with a converse:

\begin{theorem}\label{pr.nlow}
  If $Q$ is as in \Cref{eq:4} and $n$ satisfies
  \begin{equation*}
    \nu(n)\le \min_{a_i\ne 0}\nu(i)
  \end{equation*}
  then $Q_n$ is balanced if and only if $Q$ has an odd number of terms.
\end{theorem}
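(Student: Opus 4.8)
The plan is to bypass any direct analysis of the (potentially enormous) field $GF(2^n)$ by invoking \Cref{th.val-iff}, which guarantees that whether or not $Q_n$ is balanced depends only on $\nu=\nu(n)$. Since the hypothesis $\nu(n)\le \min_{a_i\ne 0}\nu(i)$ constrains $n$ only through $\nu$, I would reduce to the smallest field of that valuation, namely $n=2^\nu$, and establish the equivalence there. Although the implication ``odd number of terms $\Rightarrow$ balanced'' is already \Cref{pr.bal-suf}, this reduction in fact yields \emph{both} directions at once, so I would present a single unified argument rather than isolating only the converse.

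So assume $n=2^\nu$. The hypothesis now reads $2^\nu\mid i$ for every $i$ with $a_i\ne 0$. The key observation is that the Frobenius $x\mapsto x^{2^{2^\nu}}$ fixes $GF(2^{2^\nu})=GF(2^n)$ pointwise; hence for every such $i$ and every $x\in GF(2^n)$ we have $x^{2^i}=x$, so each monomial collapses:
\begin{equation*}
  x^{2^i+1}=x^{2^i}\cdot x = x\cdot x = x^2.
\end{equation*}
Consequently the whole argument of the trace degenerates to
\begin{equation*}
  \sum_i a_i x^{2^i+1}=\left(\sum_i a_i\right)x^2 = N x^2,
\end{equation*}
where $N\in GF(2)$ is the number of nonzero terms reduced mod $2$.

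It then remains only to read off balance from the two cases. If $N$ is even, the argument vanishes identically, so $Q_n\equiv 0$ is the zero function and is manifestly unbalanced. If $N$ is odd, the argument is $x^2$ and
\begin{equation*}
  Q_n(x)=\mathrm{Tr}_n(x^2)=\mathrm{Tr}_n(x),
\end{equation*}
a nonzero $GF(2)$-linear functional, whose kernel is a hyperplane, hence balanced. This proves $Q_{2^\nu}$ balanced $\iff$ $N$ odd, and the statement for general $n$ follows by transporting along \Cref{th.val-iff}.

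I do not anticipate a serious obstacle once \Cref{th.val-iff} is available: the entire content is the collapse $x^{2^i+1}=x^2$, made possible by the valuation hypothesis together with the choice of the minimal field $n=2^\nu$. The one point demanding care is the legitimacy of the reduction itself, i.e.\ that both the hypothesis and the trace-form notion of balance transport unchanged from $n$ to $2^\nu$; this is precisely what \Cref{th.val-iff} and the extension of the definition of balanced to all $n\ge 1$ provide.
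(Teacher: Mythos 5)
Your proposal is correct and is essentially the paper's own proof: both reduce to $n=2^\nu$ via \Cref{th.val-iff}, observe that the valuation hypothesis forces each $x^{2^i+1}$ to collapse to $x^2$ on $GF(2^{2^\nu})$, and conclude that $Q_n$ is either $\mathrm{Tr}_n(x^2)$ (balanced) or identically zero (unbalanced) according to the parity of the number of terms.
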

\begin{proof}
  Write $n=2^\nu m$ for odd $m$, as in the proof of \Cref{pr.bal-suf}. According to \Cref{th.val-iff} $Q_n$ is balanced if and only if $Q_{2^\nu}$ is, so it is enough to assume that $n=2^\nu$. But then, for each summand $x^{2^i+1}$ of $Q$ the map $x\mapsto x^{2^i}$ is an iterated application of the Frobenius automorphism of $GF({n})$ and thus the identity as a function on $\mathbb{F}_n$.

  It follows that every term of $Q_n$ is $\mathrm{Tr}_n(x^2)$ and hence $Q_n$ is either $\mathrm{Tr}_n(x^2)$ (and balanced) when $Q$ has an odd number of terms or identically zero otherwise.
\end{proof}

In particular:

\begin{corollary}\label{cor.odd-n}
If $n$ is odd then $Q_n$ is balanced if and only if $Q$ has an odd number of summands.   
\end{corollary}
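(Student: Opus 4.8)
The plan is to recognize this corollary as the odd-$n$ instance of \Cref{pr.nlow}. The sole hypothesis of that theorem is the numerical inequality $\nu(n)\le \min_{a_i\ne 0}\nu(i)$, and when $n$ is odd this holds automatically: $\nu(n)=0$, while every positive index $i$ has $\nu(i)\ge 0$, so the right-hand side is nonnegative. Thus \Cref{pr.nlow} applies verbatim and delivers exactly the stated equivalence, with no further work required.

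If one prefers a self-contained argument that does not invoke the full strength of \Cref{pr.nlow}, I would instead appeal to \Cref{th.val-iff} directly. Writing $n=1\cdot m$ with $m=n$ odd reduces the balancedness question to the case $k=1$, i.e. to the one-variable function $Q_1$ on $GF(2)$. There the trace $\mathrm{Tr}_1$ is the identity and each monomial $x^{2^i+1}$ restricts to $x$ (since $0$ and $1$ are fixed by every positive power), so $Q_1(x)=\bigl(\sum_i a_i\bigr)x$. This function is balanced precisely when $\sum_i a_i=1$ in $GF(2)$, that is, when the number of nonzero terms $a_i$ is odd, which is the claim. Here one uses the extended (trace) notion of balancedness recalled just before \Cref{pr.bal-odd}, under which $Q_1$ is sensible even though $(0,t)_1$ is not.

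There is no genuine obstacle: the substantive content has already been secured in \Cref{pr.nlow}, and beneath it in \Cref{th.val-iff} and \Cref{pr.bal-suf}. The only thing to check is the triviality that an odd $n$ meets the valuation bound $\nu(n)\le\min_{a_i\ne 0}\nu(i)$, after which the corollary is immediate; equivalently, the reduction to $GF(2)$ makes the parity of the number of terms the only relevant invariant.
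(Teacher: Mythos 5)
Your first paragraph is exactly the paper's proof: \Cref{cor.odd-n} is stated there as a direct application of \Cref{pr.nlow}, whose valuation hypothesis holds trivially when $\nu(n)=0$. Your alternative second argument is also sound, but it is essentially an unwinding of the proof of \Cref{pr.nlow} (via \Cref{th.val-iff} and the reduction to $GF(2)$, where every $x^{2^i+1}$ collapses to $x$) in the special case $\nu=0$, so the two routes are not genuinely different.
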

\begin{proof}
  This is a direct application of \Cref{pr.nlow}.
\end{proof}

We can now tackle another particular case of \Cref{cj.bal}.

\begin{proposition}\label{pr.all-odd}
  If $Q$ is as in \Cref{eq:4} and all $i$ appearing in the terms $x^{2^i+1}$ of $Q$ are odd then $Q_n$ is balanced if and only if
  \begin{itemize}
  \item $n$ is odd, and
  \item $Q$ has an odd number of terms.
  \end{itemize}
\end{proposition}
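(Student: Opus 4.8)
The plan is to split the argument according to the parity of $n$. When $n$ is odd, the requirement "$n$ is odd" in the statement is automatically met, so the claimed equivalence reduces to: $Q_n$ is balanced if and only if $Q$ has an odd number of terms. This is exactly \Cref{cor.odd-n}, and the hypothesis that all $i$ are odd plays no role here. Consequently the entire new content of the proposition is the assertion that \emph{whenever $n$ is even, $Q_n$ is not balanced}, regardless of the number of terms; this is what rules out every remaining case and forces the "$n$ odd" condition. So after disposing of the odd case by citation, I would devote the rest of the proof to the even case.

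For even $n$ I would exploit a multiplicative symmetry by cube roots of unity. The two elementary facts driving this are: (i) since every exponent $i$ with $a_i\ne 0$ is odd we have $2^i\equiv -1\pmod 3$, hence $3\mid 2^i+1$; and (ii) since $n$ is even we have $GF(4)\subseteq GF(2^n)$, so a primitive cube root of unity $\omega$ (with $\omega^3=1$, $\omega\ne 1$) lies in $GF(2^n)$. I would then verify that $Q_n$ is invariant under $x\mapsto \omega x$: for each monomial, $(\omega x)^{2^i+1}=\omega^{2^i+1}x^{2^i+1}=x^{2^i+1}$ because $3\mid 2^i+1$, and since this holds before applying $\mathrm{Tr}_n$ we get $Q_n(\omega x)=Q_n(x)$ for all $x\in GF(2^n)$.

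The remainder is a short counting argument. Multiplication by $\omega$ acts freely on $GF(2^n)\setminus\{0\}$, since $\omega x=x$ would give $(\omega+1)x=\omega^2 x=0$ and hence $x=0$; therefore $GF(2^n)\setminus\{0\}$ decomposes into orbits $\{x,\omega x,\omega^2 x\}$ of size $3$ (consistent with $3\mid 2^n-1$ for even $n$). Because $Q_n$ is constant on each orbit and $Q_n(0)=0$, the weight $wt(Q_n)$ equals three times the number of orbits on which $Q_n\equiv 1$, so $wt(Q_n)\equiv 0\pmod 3$. A balanced function would have weight $2^{n-1}$, a power of two and thus never divisible by $3$; this contradiction shows $Q_n$ is not balanced for even $n$, which completes the proof when combined with the odd case.

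I expect the one genuinely load-bearing step to be the symmetry observation (i)--(ii): the crux is recognizing that "all $i$ odd" is precisely the arithmetic condition $3\mid 2^i+1$ and that "$n$ even" is precisely what places $\omega$ inside $GF(2^n)$, after which the invariance and the divisibility-by-three bookkeeping are routine. I note that one could instead first invoke \Cref{th.val-iff} to reduce the even case to $n=2^\nu$, but the orbit argument applies verbatim to every even $n$, so this reduction is unnecessary and the direct route is cleaner.
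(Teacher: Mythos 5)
Your proposal is correct, and it shares the paper's key arithmetic insight while finishing the even case by a different mechanism. Like the paper, you dispose of odd $n$ by citing \Cref{cor.odd-n}, and you isolate the same load-bearing observation for even $n$: since every $i$ is odd, $3\mid 2^i+1$, and since $n$ is even, $\mathbb{F}_4\subseteq GF(2^n)$, so $Q_n$ is invariant under multiplication by any nonzero $a\in\mathbb{F}_4$. Where you diverge is in how this invariance yields unbalancedness. The paper uses it structurally: it shows the space $V^0(Q)$ of $Q$-parity-preserving vectors is closed under multiplication by $\mathbb{F}_4$, hence is an $\mathbb{F}_4$-vector space and therefore even-dimensional over $GF(2)$, and then invokes \Cref{le.parity-vectors} (balanced if and only if $\dim V^0(Q)$ and $n$ have opposite parities). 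You instead argue arithmetically: the $\langle\omega\rangle$-orbits on $GF(2^n)\setminus\{0\}$ all have size $3$, $Q_n$ is constant on orbits and vanishes at $0$, so $wt(Q_n)\equiv 0 \pmod 3$, which is incompatible with the balanced weight $2^{n-1}$. Your route is more elementary and self-contained --- it needs no input beyond \Cref{cor.odd-n} and counting, whereas the paper's argument leans on the parity-preserving-vector machinery it has set up for use throughout (e.g.\ in \Cref{th.qq'} and \Cref{th.val-iff}). The trade-off is that the paper's formulation exhibits the $\mathbb{F}_4$-module structure on $V^0(Q)$ explicitly, which is the kind of structural information that its surrounding results exploit, while your divisibility argument, though cleaner here, produces only the single bit of information needed. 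Your closing remark is also accurate: no reduction to $n=2^\nu$ via \Cref{th.val-iff} is needed, and indeed the paper does not perform one in this proof either.
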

\begin{proof}
  The fact that for odd $n$ being balanced is equivalent to having an odd number of terms is \Cref{cor.odd-n}, so it is enough to prove that $Q_n$ as in the statement cannot be balanced for even $n$. According to \Cref{le.parity-vectors} this is equivalent to showing that the space $V^0(Q)\le GF({2^n})$ of $Q$-parity-preserving vectors is even-dimensional. We will argue that in fact $V^0(Q)$ is a vector space over $\mathbb{F}_4\subseteq GF({2^n})$, which will imply the desired conclusion. 

  Factor $\mathrm{Tr}_n$ as $\mathrm{Tr}_2\circ \mathrm{Tr}_{mid}$, where
  \begin{equation*}
    \mathrm{Tr}_{mid}:GF({2^n})\to \mathbb{F}_4
  \end{equation*}
  is the intermediate trace. Let $x\in GF({2^n})$ and $a\in \mathbb{F}_4$. For every term $R(x)=x^{2^i+1}$ of $Q$ we have 
  \begin{equation*}
    R(xa) = (xa)^{2^i+1} = x^{2^i+1}a^{2^i+1} = x^{2^i+1}a^3
  \end{equation*}
  because $i$ is assumed odd. Now, $a^3$ is either $0$ or $1$ depending on whether $a\in \mathbb{F}_4$ vanishes or not. It follows that $Q_n(xa)$ is either $0$ on $a=0$ or $Q_n(x)$ otherwise. This implies that $V^0(Q)$ is invariant under multiplication by $\mathbb{F}_4\subseteq GF({2^n})$, which is what we sought to prove.
\end{proof}

Fix a function $Q$ as in \Cref{eq:4}. We will now see that \Cref{th.val-iff} imposes strong restrictions on the set of positive integers $n$ for which $Q_n$ is balanced. First, recall that by \cite[Theorem 1]{C18} the weights $w(n)$ of $Q_n$ satisfy a linear recurrence with integer coefficients. Now, weights are of the form
\begin{equation*}
  w(n) = 2^{n-1}\pm 2^{\frac {n+v(n)}2},
\end{equation*}
and hence the linearly recurrent sequence $nw(n)=\frac{w(n)}{2^{n-1}}$ (`$nw$' for `normalized weight') is of the form
\begin{equation*}
  nw(n)=1\pm 2^{\frac{v(n)-n+1}2}, 
\end{equation*}
and being balanced is equivalent to $nw(n)=1$.

Since the celebrated theorem of Skolem-Mahler-Lech (e.g. \cite[Theorem 5.1]{csl}) ensures that the level sets of a linearly recursive sequence are (essentially) finite unions of arithmetic progressions, we have

\begin{proposition}\label{pr.sml}
  Given $Q$, there is a positive integer $N=N(Q)$ and a set $\mathcal{R}=\mathcal{R}(Q)$ of residues modulo $N$ such that, for sufficiently large $n$, $Q_n$ is balanced if and only if $n~(\mathrm{mod}\ N)\in \mathcal{R}$.
\end{proposition}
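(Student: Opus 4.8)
The plan is to realize the balanced locus as the zero set of a linearly recurrent sequence over a field of characteristic zero and then invoke the Skolem--Mahler--Lech theorem. As recorded just above, the weights $w(n)=wt(Q_n)$ satisfy a linear recurrence with integer coefficients by \cite[Theorem 1]{C18}, and $Q_n$ is balanced precisely when the normalized weight $nw(n)=w(n)/2^{n-1}$ equals $1$. The balanced set is therefore exactly the set of zeros of $b(n):=nw(n)-1$, and everything reduces to describing that set.

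First I would verify that $b(n)$ is linearly recurrent over $\mathbb{Q}$. Writing $w(n)=\sum_{j=1}^k \alpha_j\,w(n-j)$ with $\alpha_j\in\mathbb{Z}$ and substituting $w(n)=2^{n-1}\,nw(n)$, division by $2^{n-1}$ shows that $nw(n)$ satisfies the recurrence with rational coefficients $\alpha_j 2^{-j}$ (equivalently, its characteristic roots are those of $w(n)$ divided by $2$). Subtracting the constant sequence $1$, which is itself recurrent, leaves $b(n)$ linearly recurrent over $\mathbb{Q}$. Since $\mathbb{Q}$ has characteristic zero, \cite[Theorem 5.1]{csl} applies to $b(n)$ and shows that its zero set
\[
  Z=\{n:\ Q_n\text{ is balanced}\}
\]
is a finite union of infinite arithmetic progressions together with a finite exceptional set.

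It then remains to repackage $Z$ as a single congruence condition. Let $b_1,\dots,b_r$ be the common differences of the arithmetic progressions furnished by Skolem--Mahler--Lech and put $N=\operatorname{lcm}(b_1,\dots,b_r)$. Apart from finitely many initial terms, the $i$-th progression is exactly $\{n:\ n\equiv a_i \pmod{b_i}\}$; since $b_i\mid N$, this congruence holds if and only if $n\bmod N$ lies in the set $\mathcal{R}_i$ of residues modulo $N$ that reduce to $a_i$ modulo $b_i$. Setting $\mathcal{R}=\bigcup_i \mathcal{R}_i$ and discarding the finite exceptional set of Skolem--Mahler--Lech together with the finitely many small initial terms of the progressions---all bounded, hence irrelevant once $n$ is large---we conclude that for all sufficiently large $n$ the function $Q_n$ is balanced if and only if $n\bmod N\in\mathcal{R}$, as claimed.

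The two substantive inputs, namely the weight recurrence of \cite{C18} and the Skolem--Mahler--Lech theorem, are both available off the shelf, so I expect no genuinely new idea to be required. The one step demanding care is the final bookkeeping: Skolem--Mahler--Lech returns progressions of a priori different moduli, and one must pass to the common modulus $N$, check that each progression becomes a union of residue classes modulo $N$, and only then absorb the bounded exceptional terms into the phrase \emph{sufficiently large $n$}. This is routine but is precisely where the clean formulation of the statement (one modulus $N$, one residue set $\mathcal{R}$) is produced.
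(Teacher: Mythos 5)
Your proposal is correct and follows essentially the same route as the paper: normalize the weights to $nw(n)=w(n)/2^{n-1}$, observe that balancedness means $nw(n)=1$, and apply Skolem--Mahler--Lech to the resulting linearly recurrent sequence, exactly as the paper does in the paragraph preceding the proposition. The only difference is that you spell out the routine details the paper leaves implicit (that dividing by $2^{n-1}$ scales the characteristic roots by $1/2$ so $nw(n)-1$ is still recurrent over $\mathbb{Q}$, and the passage from progressions of different moduli to a single modulus $N$ via their least common multiple), which is a faithful filling-in rather than a new argument.
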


\Cref{th.val-iff} supplements this picture considerably: it tells us that the set $N(Q)\mathbb{N}+\mathcal{R}(Q)$ of positive integers giving residues in $\mathcal{R}$ modulo $N$ is (except perhaps for finitely many terms) invariant under multiplication and division by odd positive integers. This implies the following

\begin{theorem}\label{th.pows2}
  Given $Q$, there are finite sets $\mathcal{S}=\mathcal{S}(Q)$ and $\mathcal{T}=\mathcal{T}(Q)$ of positive integers such that $Q_n$ is balanced if and only if
  \begin{equation*}
    n\equiv 2^{d-1}~ \mathrm{mod}~ 2^d\text{ for some }d\in \mathcal{S}
  \end{equation*}
  or
  \begin{equation*}
    n\equiv 0~ \mathrm{mod}~ 2^d\text{ for some }d\in \mathcal{T}.
  \end{equation*}
\end{theorem}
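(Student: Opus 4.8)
The plan is to combine the two structural facts already established. On one hand, \Cref{th.val-iff} tells us that whether $Q_n$ is balanced depends only on the $2$-adic valuation $\nu(n)$; on the other, \Cref{pr.sml} (via Skolem--Mahler--Lech) tells us that for all sufficiently large $n$ this balancedness is governed by the residue $n\bmod N$, for a fixed modulus $N$ and a fixed set $\mathcal R$ of residues. First I would record the valuation-only dependence in a clean form: let $\mathcal V\subseteq\mathbb Z_{\ge 0}$ be the set of valuations $\nu$ such that $Q_n$ is balanced for some (equivalently, by \Cref{th.val-iff}, every) $n$ with $\nu(n)=\nu$. The goal then becomes purely arithmetic, namely to show that $\mathcal V$ is \emph{eventually constant}: there is a threshold $s_0$ and a bit $\beta\in\{0,1\}$ with $\nu\in\mathcal V\iff\beta=1$ for all $\nu\ge s_0$. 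Everything else is bookkeeping.

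The heart of the argument is to reconcile the arithmetic-progression description coming from \Cref{pr.sml} with the valuation-only description. Write $N=2^{s}t$ with $t$ odd. The key observation I would prove is that, for any fixed $\nu\ge s$, the integers $n$ with $\nu(n)=\nu$ realize \emph{all} residues modulo $N$ that are divisible by $2^{s}$: such an $n$ is $2^{\nu}u$ with $u$ odd, automatically $\equiv 0\bmod 2^{s}$, while $2^{\nu}$ is a unit modulo $t$ and the odd multiplier $u$ can be chosen (using the Chinese Remainder Theorem on the coprime moduli $2$ and $t$) to hit any prescribed residue modulo $t$, and to be arbitrarily large. Now fix such a $\nu\ge s$ and consider the (arbitrarily large) integers $n$ with $\nu(n)=\nu$. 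By \Cref{th.val-iff} they all share one balancedness value, and by \Cref{pr.sml} each sufficiently large one has balancedness $[\,n\bmod N\in\mathcal R\,]$; consistency therefore forces $[\,r\in\mathcal R\,]$ to be constant as $r$ ranges over the residues divisible by $2^{s}$, and this common bit $\beta$ is manifestly independent of the chosen $\nu$. Thus $\nu\in\mathcal V\iff\beta=1$ for every $\nu\ge s$, with \Cref{th.val-iff} extending the conclusion from large representatives to all $n$ of the given valuation. This is the sought eventual constancy of $\mathcal V$, with threshold $s_0:=\max(s,1)$.

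It remains to translate eventual constancy of $\mathcal V$ into the stated form, using the dictionary $n\equiv 2^{d-1}\bmod 2^{d}\iff\nu(n)=d-1$ and $n\equiv 0\bmod 2^{d}\iff\nu(n)\ge d$. The valuations below the threshold are finite in number, so I would set $\mathcal S=\{\nu+1:\ 0\le\nu<s_0,\ \nu\in\mathcal V\}$, which contributes exactly the isolated balanced valuations $\nu<s_0$. For the tail I would set $\mathcal T=\{s_0\}$ when $\beta=1$ (so that $n\equiv 0\bmod 2^{s_0}$, i.e.\ $\nu(n)\ge s_0$, captures precisely the balanced high-valuation $n$) and $\mathcal T=\emptyset$ when $\beta=0$. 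Both sets are finite and consist of positive integers, and by construction the disjunction in the theorem holds exactly when $\nu(n)\in\mathcal V$, i.e.\ exactly when $Q_n$ is balanced.

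I expect the main obstacle to be precisely the reconciliation step in the second paragraph. A priori the Skolem--Mahler--Lech description only guarantees that balancedness is \emph{eventually periodic} in a way controlled by $N$, and since the powers of $2$ modulo the odd part $t$ are themselves periodic, one could fear genuinely periodic (non-constant) behavior in $\nu$. The coset-covering observation --- that a single valuation class already sweeps out a full $2^{s}$-divisible residue coset modulo $N$ --- is what collapses this potential periodicity to constancy, and getting it exactly right (in particular the parity constraint on the multiplier $u$) is the crux of the proof.
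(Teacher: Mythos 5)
Your proposal is correct and follows essentially the same route as the paper: the paper states \Cref{th.pows2} as a direct consequence of combining \Cref{pr.sml} (Skolem--Mahler--Lech) with the valuation-only dependence from \Cref{th.val-iff}, which is exactly your strategy. The paper leaves the bridging step implicit (phrasing it as invariance of $N\mathbb{N}+\mathcal{R}$ under multiplication and division by odd integers), and your CRT coset-covering argument is a correct and welcome fleshing-out of precisely that step.
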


\begin{remark}\label{re.t-sing}
  Note that the set $\mathcal{T}(Q)$ in \Cref{th.pows2}, when non-empty, might as well be a singleton. \Cref{cj.bal} implies in particular that it {\it is} always empty, which we prove in \Cref{th.no-t} below.
\end{remark}

\subsection{Balanced functions and linearized polynomials}\label{subse.lon}

Let $Q$ be a quadratic function of the form \Cref{eq:4}. 

Recall also the balancedness criterion used in the proof of \Cref{th.qq'} (and also the proof of \cite[Theorem 5.1]{cgl-sec}, adapted from \cite[Theorem 8.23, p. 312]{car-bool}): $Q$ is balanced on $GF(2^n)$ if and only if its restriction to the kernel of the additive polynomial
\begin{equation}\label{eq:19}
  F_Q(x):=\sum_i a_i\left(x^{2^{n-i}}+x^{2^i}\right). 
\end{equation}
(regarded as a $GF(2)$-endomorphism of $GF(2^n)$) vanishes identically.

Denoting by $F$ the Frobenius automorphism $x\mapsto x^{2}$ in characteristic two, \Cref{eq:19} is an application to $x$ of $A_n(F)$, where $A_n$ is the polynomial \Cref{eq:20} attached to $Q$. Equivalently, we can work with the {\it Laurent} polynomial $A$ defined in \Cref{eq:21} applied to $F$ (since the latter is invertible as an endomorphism of $GF(2^n)$).

On $GF(2^n)$ the Frobenius morphism $F$ is annihilated by $\psi_n(x):=x^n-1$ (i.e. $F^n=\mathrm{id}$). For that reason, the kernel of $A(F)$ will also coincide with the kernel of $\mathrm{gcd}(\psi_n,A_n)$ (the same polynomial appearing in \Cref{th.prd}). We will pass freely between plain and the Laurent polynomials $GF(2)[x^{\pm 1}]$ and hence work with $\mathrm{gcd}(\psi_n,A)$, etc. Note in particular that $\mathrm{gcd}(\psi_n,A_n)$ has degree
\begin{equation*}
v(n)\le \max\{2i\ |\ a_i\ne 0\}
\end{equation*}
and hence ranges over finitely many possibilities. 

This characterization of balancedness will allow us to eliminate one of the possibilities listed in \Cref{th.pows2} which would contradict \Cref{cj.bal} (see \Cref{re.t-sing}).

\begin{theorem}\label{th.no-t}
  For $Q$ as in \Cref{eq:4} there is some $\nu$ such that $Q_n$ is unbalanced as soon as the $2$-adic valuation of $n$ is $\ge \nu$.

  In particular, in \Cref{th.pows2} the set $\mathcal{T}$ is empty. 
\end{theorem}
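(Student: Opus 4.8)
The plan is to reduce everything to the case when $n$ is a power of $2$ and then exploit a subfield-vanishing phenomenon. By \Cref{th.val-iff} the balancedness of $Q_n$ depends only on $\nu(n)$, so it suffices to prove that $Q_{2^\nu}$ is unbalanced for all sufficiently large $\nu$. The statement about $\mathcal{T}$ in \Cref{th.pows2} then follows at once, since the condition $n\equiv 0 \bmod 2^d$ means precisely $\nu(n)\ge d$, so a non-empty $\mathcal{T}$ would force $Q_n$ to be balanced whenever $\nu(n)$ is large.

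Throughout I would use the balancedness criterion recalled in \Cref{subse.lon} (established in the proof of \Cref{th.qq'} via \Cref{le.parity-vectors}): $Q_n$ is \emph{unbalanced} exactly when $Q$ restricts to the zero map on $\ker A(F)$, where $F$ is the Frobenius $x\mapsto x^2$ and $A(F)$ is the additive polynomial \Cref{eq:19}. Recall also that $\ker A(F)$ coincides with the kernel of the operator $\gcd(\psi_n,A_n)(F)$ and has $GF(2)$-dimension $v(n)=\deg\gcd(\psi_n,A_n)$. Specializing to $n=2^\nu$, we have $\psi_n=x^{2^\nu}-1=(x+1)^{2^\nu}$, so $\gcd(\psi_n,A_n)$ is the pure power $(x+1)^{v(2^\nu)}$. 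First I would check that $v(2^\nu)$ stabilizes: writing $A_{2^\nu}(x)=P(x)+x^{2^\nu-J}\widetilde P(x)$ with $P=\sum_i a_ix^i$ and $\widetilde P$ its reversal, the substitution $x=1+y$ and the identity $(1+y)^{2^\nu}=1+y^{2^\nu}$ show that $A_{2^\nu}$ agrees, modulo $y^{2^\nu}$, with the Laurent polynomial $A$ of \Cref{eq:21}. Hence the order of vanishing of $A_{2^\nu}$ at $x=1$, i.e. $v(2^\nu)$, equals the fixed integer $v_\infty$, the multiplicity of $1$ as a root of $A$, as soon as $2^\nu>v_\infty$.

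With this in hand, set $K_\nu:=\ker A(F)=\ker(F+1)^{v_\infty}\subseteq GF(2^{2^\nu})$. Since $\ker(F+1)=\{a\in GF(2^{2^\nu}):a^2=a\}=GF(2)$ is one-dimensional and $F+1$ is nilpotent (as $(F+1)^{2^\nu}=F^{2^\nu}+1=0$), the operator $F+1$ is a single nilpotent Jordan block, whence $\dim_{GF(2)}\ker(F+1)^{v_\infty}=v_\infty$ once $v_\infty\le 2^\nu$. The crux is then a nesting observation: for $\nu$ large, both $K_{\nu-1}\subseteq GF(2^{2^{\nu-1}})$ and $K_\nu\subseteq GF(2^{2^\nu})$ are cut out by the \emph{same} equation $(F+1)^{v_\infty}=0$ and have the \emph{same} dimension $v_\infty$; since $GF(2^{2^{\nu-1}})\subseteq GF(2^{2^\nu})$ this forces $K_\nu=K_{\nu-1}$, so that $K_\nu$ in fact lies inside the subfield $GF(2^{2^{\nu-1}})$. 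Finally I would observe that $Q_{2^\nu}$ vanishes identically on this subfield: for $a\in GF(2^{2^{\nu-1}})$ each $a^{2^i+1}$ stays in it, and the relative trace $GF(2^{2^\nu})\to GF(2^{2^{\nu-1}})$ has the even degree $2$, so it annihilates every element of $GF(2^{2^{\nu-1}})$ and hence $Q_{2^\nu}(a)=\mathrm{Tr}_{2^\nu}\!\left(\sum_i a_i a^{2^i+1}\right)=0$. Combined with $K_\nu\subseteq GF(2^{2^{\nu-1}})$, this shows $Q_{2^\nu}$ is zero on $K_\nu$, i.e. unbalanced, for all large $\nu$, which is the desired conclusion.

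The routine parts are the stabilization of $v(2^\nu)$ to $v_\infty$ and the Jordan-block dimension count; I expect the genuine content to lie in the nesting step, namely recognizing that the radical $K_\nu$ does not grow with $\nu$ and is therefore trapped inside the index-$2$ subfield $GF(2^{2^{\nu-1}})$, on which the degree-$2$ relative trace forces $Q_{2^\nu}$ to vanish. The one place demanding care is verifying that $v(2^\nu)$ really is eventually constant and bounded by $2^{\nu-1}$ (so that the dimension comparison is legitimate for two consecutive large values of $\nu$), which is exactly what the $x=1+y$ valuation computation supplies.
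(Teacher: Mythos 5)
Your proof is correct, and its endgame coincides with the paper's: show that the kernel of the additive polynomial \Cref{eq:19} is trapped inside the index-two subfield $GF(2^{n/2})$, then use the fact that $\mathrm{Tr}_n$ kills that subfield (since the degree-two relative trace annihilates it) to conclude $Q_n$ vanishes on the kernel and is therefore unbalanced. Where you diverge is in how the containment is established. The paper's own proof is shorter and does not route through \Cref{th.val-iff}: it observes that all roots of the additive polynomial lie in one \emph{fixed} splitting field $GF(2^M)$, so for any $n$ with $\nu(n)>\nu(M)$ the kernel sits inside $GF(2^M)\cap GF(2^n)=GF(2^{\gcd(M,n)})\subseteq GF(2^{n/2})$ --- no reduction to powers of two is needed. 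You instead invoke \Cref{th.val-iff} to reduce to $n=2^\nu$, identify $\gcd(\psi_n,A_n)=(x+1)^{v(2^\nu)}$, prove stabilization $v(2^\nu)=d_Q$ (your $v_\infty$ is the paper's $d_Q$ from \Cref{subse.lon}), and then use the single-Jordan-block structure of $F+\mathrm{id}$ to force $K_\nu=K_{\nu-1}\subseteq GF(2^{2^{\nu-1}})$ by a dimension count. In effect your argument reconstructs the combination of \Cref{th.val-iff} with the paper's subsequent quantitative refinement \Cref{pr.bet-pow2} (balancedness at $n=2^\nu$ forces $\nu\le\nu_Q$), rather than the terse splitting-field proof the paper gives for this theorem itself. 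What your route buys is an explicit numerical threshold tied to $d_Q$ (the multiplicity of $x-1$ in the Laurent polynomial \Cref{eq:21}), which is exactly the quantity the paper's later sections organize around; what the paper's route buys is brevity and independence from the reduction to prime-power $n$. Both the stabilization step and the care you flag about needing $v_\infty\le 2^{\nu-1}$ for two consecutive levels are handled correctly.
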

\begin{proof}
  Indeed, choose $n$ so that the splitting field of the polynomial $F_Q$ defined in \Cref{eq:19} is contained in $GF(2^{\frac n2})$. Then, all elements $x\in GF(2^n)$ annihilated by the product of the finitely many polynomials (recall $\psi_n(x)=x^n-1$) 
  \begin{equation*}
    \mathrm{gcd}(\psi_n,A_n)(F)x
  \end{equation*}
  are contained in the subfield
  \begin{equation*}
    GF(2^{\frac n2})\subset GF(2^n)
  \end{equation*}
  (i.e. $GF(2^{\frac n2})$ is a splitting field for said product). The conclusion follows from the fact that the trace
  \begin{equation}\label{eq:22}
    \mathrm{Tr}_n:GF(2^n)\to GF(2)
  \end{equation}
  vanishes on $GF(2^{\frac n2})$.
\end{proof}

Consequently, we have the following improved version of \Cref{th.pows2}.

\begin{corollary}\label{cor.no-t}
  Given $Q$, there is a finite set $\mathcal{S}=\mathcal{S}(Q)$ of positive integers such that $Q_n$ is balanced if and only if
  \begin{equation*}
    n\equiv 2^{d-1}\ \mathrm{mod}\ 2^d\text{ for some }d\in \mathcal{S}
  \end{equation*}
\end{corollary}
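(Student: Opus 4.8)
The plan is to read off the corollary directly from the two results immediately preceding it, \Cref{th.pows2} and \Cref{th.no-t}. \Cref{th.pows2} already provides finite sets $\mathcal{S}$ and $\mathcal{T}$ for which $Q_n$ is balanced exactly when $n\equiv 2^{d-1}\bmod 2^d$ for some $d\in\mathcal{S}$ or $n\equiv 0\bmod 2^d$ for some $d\in\mathcal{T}$. The entire content of the corollary is therefore the assertion that the second family of congruences is superfluous, i.e.\ that $\mathcal{T}$ may be taken to be empty.

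First I would observe that a single congruence of the form $n\equiv 0\bmod 2^d$ is satisfied by infinitely many $n$ of arbitrarily large $2$-adic valuation (for instance all $n=2^{d+j}$, $j\ge 0$). For each such $d\in\mathcal{T}$, every one of these $n$ would have to yield a balanced $Q_n$. But \Cref{th.no-t} furnishes a threshold $\nu$ with the property that $Q_n$ is \emph{un}balanced whenever $\nu(n)\ge\nu$. Choosing $j$ large enough that $d+j\ge\nu$ then exhibits an $n\equiv 0\bmod 2^d$ for which $Q_n$ is unbalanced, contradicting $d\in\mathcal{T}$. Hence $\mathcal{T}=\emptyset$, and the balancedness criterion of \Cref{th.pows2} collapses to the single clause $n\equiv 2^{d-1}\bmod 2^d$, $d\in\mathcal{S}$, which is precisely the statement of the corollary.

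There is no genuine obstacle here: all the analytic work has already been done upstream---Skolem--Mahler--Lech together with the $2$-adic reduction of \Cref{th.val-iff} underlies \Cref{th.pows2}, while the linearized-polynomial and splitting-field/trace argument underlies \Cref{th.no-t}. The corollary is pure assembly, and the only thing to verify is that discarding the $\mathcal{T}$-clause loses no balanced values of $n$; this is immediate, since \Cref{th.no-t} guarantees that every $n$ picked out by a $\mathcal{T}$-congruence is unbalanced and so was never a balanced instance to begin with.
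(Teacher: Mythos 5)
Your proof is correct and takes essentially the same route as the paper: there the corollary is stated without proof as an immediate consequence of \Cref{th.pows2} combined with \Cref{th.no-t}, whose ``in particular'' clause is exactly the claim $\mathcal{T}=\emptyset$ that you re-derive. Your second paragraph (a nonempty $\mathcal{T}$ would force balancedness at $n=2^{d+j}$ of arbitrarily large $2$-adic valuation, contradicting \Cref{th.no-t}) just spells out the step the paper treats as immediate, so there is nothing to add.
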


With this in place, \Cref{cj.bal} says that $\mathcal{S}(Q)$ is a singleton if $Q$ has an even number of terms and an initial segment of $\mathbb{Z}_{\ge 0}$ otherwise.

Since we are concerned mostly with the question of whether $Q$ is balanced, \Cref{th.val-iff} allows us to assume that $n$ is a power of $2$: $n=2^{\nu}$; we do so throughout the present discussion, unless specified otherwise. We now record a number of additional remarks on the characterization of balancedness discussed here.

Under the assumption that $n=2^{\nu}$, $\psi_n(x)$ is simply $(x-1)^{2^{\nu}}$. In conclusion,
\begin{equation*}
  \mathrm{gcd}(\psi_n,A)=(x-1)^{v(n)},
\end{equation*}
where $v(n)$ is the plateau parameter that is the focus of \Cref{th.prd}. Note that in this particular case, where $n=2^{\nu}$,
\begin{equation*}
  v(n)=\max\{d\le n\ |\ (x-1)^d\text{ divides }A(x)\}.
\end{equation*}
In short, we are interested in whether or not the restriction of $Q$ to
\begin{equation*}
  x\in GF(2^n),\ (F-\mathrm{id})^{v(n)}x=0
\end{equation*}
(where the exponent on the right hand side denotes repeated composition) is identically zero.


Write $d_Q$ for the largest exponent such that $(x-1)^{d_Q}$ divides the Laurent polynomial $A(x)$ from \Cref{eq:21}. Regarding
\begin{equation*}
  \ker(F-\mathrm{id})^{d_Q}
\end{equation*}
as a subspace of a fixed algebraic closure $\overline{GF(2)}$, $Q$ is unbalanced if and only if it vanishes identically along 
\begin{equation*}
  \ker(F-\mathrm{id})^{d_Q}\cap GF(2^n). 
\end{equation*}
We also introduce the notation $\nu_Q$ for the number defined uniquely by
\begin{equation}\label{eq:28}
  2^{\nu_Q-1}< d_Q\le 2^{\nu_Q}. 
\end{equation}

The following result is a quantitative enhancement of \Cref{th.no-t}.

\begin{proposition}\label{pr.bet-pow2}
  Suppose $Q_n$ is balanced for some $n=2^{\nu}$. Then, $\nu \le \nu_Q$.  
\end{proposition}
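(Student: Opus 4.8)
The plan is to establish the contrapositive: assuming $\nu>\nu_Q$, I will show that $Q_n$ is \emph{un}balanced. The starting point is the criterion recorded just above \Cref{eq:28}, which says that $Q_n$ is unbalanced precisely when the quadratic form $Q$ vanishes identically on the radical $\ker(F-\mathrm{id})^{d_Q}\cap GF(2^n)$, where $F$ denotes the Frobenius $x\mapsto x^2$. So the whole task reduces to showing that, under the valuation hypothesis, $Q$ is forced to vanish on this subspace.

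First I would convert the hypothesis $\nu>\nu_Q$ into a size estimate on $d_Q$. The defining inequality \Cref{eq:28} gives $d_Q\le 2^{\nu_Q}$, while $\nu>\nu_Q$ means $\nu\ge\nu_Q+1$; combining these yields $d_Q\le 2^{\nu_Q}\le 2^{\nu-1}=n/2$. Thus the exponent $d_Q$ cutting out the radical is at most half of $n$.

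The crux is then to place the radical inside the proper subfield $GF(2^{n/2})$. Since $n/2=2^{\nu-1}$ is itself a power of two, over $GF(2)$ one has the identity $x^{n/2}-1=(x-1)^{n/2}$, so that $F^{n/2}-\mathrm{id}=(F-\mathrm{id})^{n/2}$ as operators on $GF(2^n)$. The fixed field of $F^{n/2}$ is exactly $GF(2^{n/2})$, giving $GF(2^{n/2})=\ker(F-\mathrm{id})^{n/2}$. Because $d_Q\le n/2$, nesting of the kernels yields $\ker(F-\mathrm{id})^{d_Q}\cap GF(2^n)\subseteq \ker(F-\mathrm{id})^{n/2}=GF(2^{n/2})$. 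Finally, for any $x$ in this radical we have $x\in GF(2^{n/2})$, hence $\sum_i a_i x^{2^i+1}\in GF(2^{n/2})$, and $\mathrm{Tr}_n$ annihilates $GF(2^{n/2})$ because $[GF(2^n):GF(2^{n/2})]=2$ is even. Therefore $Q_n(x)=0$ throughout the radical, so $Q_n$ is unbalanced, which is the contrapositive we wanted.

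I do not anticipate a single deep obstacle here: the statement is a quantitatively sharpened form of \Cref{th.no-t}, whose proof uses exactly the same subfield-and-trace mechanism. The only points requiring real care are bookkeeping ones, namely tracking the exponents so that the threshold emerges as precisely $\nu_Q$ (rather than an unspecified constant as in \Cref{th.no-t}), and justifying the two power-of-two identities $GF(2^{n/2})=\ker(F-\mathrm{id})^{n/2}$ and the resulting containment of the radical, both of which rely essentially on $n$ being a power of $2$.
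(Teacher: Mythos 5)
Your proof is correct and follows essentially the same route as the paper: both arguments use $d_Q\le 2^{\nu_Q}$ to trap $\ker(F-\mathrm{id})^{d_Q}$ inside a proper subfield of $GF(2^n)$ (the paper via $GF(2^{2^{\nu_Q}})$, you via $GF(2^{n/2})$) and then conclude unbalancedness from the vanishing of $\mathrm{Tr}_n$ on $GF(2^{n/2})$. The only difference is presentational (contrapositive versus contradiction), so nothing further is needed.
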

\begin{proof}
  Suppose not. All $x\in \overline{GF(2)}$ annihilated by $(F-\mathrm{id})^{d_Q}$ are contained in $GF(2^{2^{\nu_Q}})$, and our assumption is that the latter field is contained strictly in
  \begin{equation*}
    GF(2^n) = GF(2^{2^{\nu}}).
  \end{equation*}
  It follows that the trace \Cref{eq:22} vanishes on $\ker(F-\mathrm{id})^{d_Q}$ and hence $Q$ is not balanced. This provides the requisite contradiction.
\end{proof}

We also record the following variant (and consequence) of \Cref{pr.bet-pow2}.

\begin{corollary}\label{pr.2period}
Let $N$ be the period of the sequence $v(n)$ of plateau parameters and suppose $Q_n$ is balanced for some $n=2^{\nu}$. Then, $\nu \le \nu(N)$.     
\end{corollary}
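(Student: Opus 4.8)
The plan is to derive \Cref{pr.2period} as a quick consequence of \Cref{pr.bet-pow2}. By the latter, if $Q_n$ is balanced for $n=2^\nu$ then $\nu\le\nu_Q$, where $\nu_Q$ is defined in \Cref{eq:28} by $2^{\nu_Q-1}<d_Q\le 2^{\nu_Q}$ and $d_Q$ is the largest exponent with $(x-1)^{d_Q}\mid A(x)$. So it suffices to show $\nu_Q\le\nu(N)$, i.e. that $2^{\nu_Q}$ divides $N$, the (minimal) period of the sequence $\{v(n)\}$.

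First I would recall from the proof of \Cref{th.prd} the explicit description of the period: the smallest period is $K=2^t(2^k-1)$, where $2^k-1=m$ is the smallest integer of that form killing all roots of $A$ over $\overline{GF(2)}$, and $t$ is the smallest integer such that $2^t$ dominates the multiplicity of every root of $A(x)$. The key link is that $d_Q$ is precisely the multiplicity of the root $x=1$ of $A(x)$ (equivalently, $v(2^\nu)=\min(d_Q,2^\nu)$ from the computation $\mathrm{gcd}(\psi_{2^\nu},A)=(x-1)^{v(2^\nu)}$ recorded just before the statement). Hence $2^t\ge d_Q$ by the defining property of $t$, and since $2^t$ is the exact $2$-part of $N=K$ we get $\nu(N)=t$.

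The main step is therefore the inequality $t\ge\nu_Q$, i.e. $2^t\ge 2^{\nu_Q}$. From $2^{\nu_Q-1}<d_Q$ and $d_Q\le 2^t$ (the latter because $2^t$ dominates the multiplicity $d_Q$ of the root $1$) we obtain $2^{\nu_Q-1}<2^t$, hence $\nu_Q-1<t$, hence $\nu_Q\le t=\nu(N)$. Combining with $\nu\le\nu_Q$ from \Cref{pr.bet-pow2} yields $\nu\le\nu(N)$, as desired.

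The one point requiring a little care is the identification $\nu(N)=t$: I would note that $N=K=2^t(2^k-1)$ with $2^k-1$ odd, so indeed $\nu(N)=t$, and that $t$ dominating the multiplicity of \emph{every} root in particular dominates the multiplicity $d_Q$ of the root $1$, giving $d_Q\le 2^t$. This is the only place where one must invoke the internal structure of the period from \Cref{th.prd} rather than just its existence; everything else is immediate from \Cref{pr.bet-pow2} and the definition of $\nu_Q$. I expect this bookkeeping about the $2$-part of the period to be the sole (minor) obstacle, the rest being a one-line chain of inequalities.
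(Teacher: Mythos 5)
Your proof is correct and takes essentially the same route as the paper: both arguments reduce, via \Cref{pr.bet-pow2}, to showing $\nu_Q\le\nu(N)$, and both extract that inequality from the structure of the minimal period established in the proof of \Cref{th.prd}. The only cosmetic difference is that the paper phrases this step through the divisibility chain $(x-1)^{d_Q}\mid A(x)\mid x^N-1$ (using \Cref{Kprop1} and \Cref{Kprop2}), whereas you unpack the explicit construction $N=K=2^t(2^k-1)$ with $2^t$ dominating the multiplicity $d_Q$ of the root $1$ --- the same fact in slightly different clothing.
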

\begin{proof}
  It follows from \Cref{Kprop1} and \Cref{Kprop2} that the period $N$ is the smallest positive integer for which $x^N-1$ is divisible by the Laurent polynomial $A(x)$ in \Cref{eq:21}. This means in particular that
  \begin{equation*}
    (x-1)^{d_Q} \text{ divides } A(x) \text{ divides } x^N-1;
  \end{equation*}
  therefore $N$ is divisible by the smallest power of $2$ that dominates $d_Q$ (namely $2^{\nu_Q}$, by \Cref{eq:28}). We thus have $\nu_Q\le \nu(N)$, and the conclusion follows from \Cref{pr.bet-pow2}.
\end{proof}

A consequence of \Cref{pr.bet-pow2}:

\begin{proposition}\label{pr.cutoff}
  Suppose $Q_n$ is balanced for some $n=2^{\nu}$. Then, for any smaller power of two $n'=2^{\nu'}$, $\nu'<\nu$, $Q_{n'}$ is unbalanced if and only if it is identically $0$.
\end{proposition}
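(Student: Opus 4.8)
The plan is to play the balancedness criterion recalled just before the statement --- that $Q_{n'}$ is unbalanced exactly when $Q$ vanishes identically on $\ker(F-\mathrm{id})^{d_Q}\cap GF(2^{n'})$ --- against the numerical information supplied by \Cref{pr.bet-pow2}. The reverse implication is trivial: if $Q_{n'}\equiv 0$ its truth table has no ones, so $Q_{n'}$ cannot be balanced, i.e. it is unbalanced. All the work lies in the forward direction, showing that an unbalanced $Q_{n'}$ is forced to vanish identically.

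First I would extract the governing inequality. Since $Q_n$ is balanced with $n=2^{\nu}$, \Cref{pr.bet-pow2} gives $\nu\le \nu_Q$, where $\nu_Q$ is fixed by $2^{\nu_Q-1}<d_Q\le 2^{\nu_Q}$ as in \Cref{eq:28}. Feeding in the hypothesis $\nu'<\nu$ yields $\nu'\le \nu_Q-1$, and hence the crucial chain $2^{\nu'}\le 2^{\nu_Q-1}<d_Q$.

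Next I would transport this to the smaller field. On $GF(2^{n'})=GF(2^{2^{\nu'}})$ the Frobenius satisfies $F^{2^{\nu'}}=\mathrm{id}$, so the characteristic-two identity $(F-\mathrm{id})^{2^{\nu'}}=F^{2^{\nu'}}-\mathrm{id}$ shows that $(F-\mathrm{id})^{2^{\nu'}}$ is the zero operator on $GF(2^{n'})$. Because $d_Q>2^{\nu'}$, the larger power $(F-\mathrm{id})^{d_Q}$ vanishes on $GF(2^{n'})$ as well, and therefore $\ker(F-\mathrm{id})^{d_Q}\cap GF(2^{n'})$ is the entire field $GF(2^{n'})$.

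Substituting this back into the criterion finishes the argument: $Q_{n'}$ is unbalanced if and only if $Q$ vanishes on all of $GF(2^{n'})$, which is precisely the statement $Q_{n'}\equiv 0$. I do not expect a serious obstacle here; the only delicate point is bookkeeping the chain $2^{\nu'}\le 2^{\nu_Q-1}<d_Q$ so as to guarantee that $(F-\mathrm{id})^{d_Q}$ genuinely collapses to the zero operator on $GF(2^{n'})$, since that collapse is exactly the mechanism forcing the kernel to swallow the whole field and thereby degenerating the unbalancedness condition into identical vanishing.
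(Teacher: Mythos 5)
Your proposal is correct and takes essentially the same route as the paper's own proof: both invoke \Cref{pr.bet-pow2} to get $\nu\le\nu_Q$, deduce $2^{\nu'}\le 2^{\nu_Q-1}<d_Q$, conclude that $\ker(F-\mathrm{id})^{d_Q}$ contains all of $GF(2^{n'})$, and then let the vanishing-on-kernel criterion collapse ``unbalanced'' into ``identically zero.'' The only cosmetic difference is that the paper phrases the containment as $GF(2^{2^{\nu_Q-1}})\subseteq\ker(F-\mathrm{id})^{d_Q}$, which is exactly the mechanism you spell out via $(F-\mathrm{id})^{2^{\nu'}}=0$ on $GF(2^{n'})$.
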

\begin{proof}
  By \Cref{pr.bet-pow2} we have $\nu \le \nu_Q$ and hence $\nu-1\le \nu_Q-1$. But note that
  \begin{equation}\label{eq:23}
    \ker(F-\mathrm{id})^{d_Q}\subset \overline{GF(2)}
  \end{equation}
  {\it contains} $GF(2^{2^{\nu_Q-1}})$ (because by definition $d_Q> 2^{\nu_Q-1}$) and thus also $GF(2^{n'})\subseteq GF(2^{2^{\nu-1}})$. 

  Being unbalanced over $GF(2^{n'})$ is equivalent to vanishing on \Cref{eq:23} and hence identically by the previous paragraph.
\end{proof}

In fact, essentially the same argument proves

\begin{proposition}\label{pr.0orall}
  If $2^\nu\le d_Q$ then $Q_n$ is unbalanced if and only if it is identically zero.
\end{proposition}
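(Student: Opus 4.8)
The plan is to invoke the balancedness criterion recorded in this subsection in terms of $d_Q$: the function $Q_n$ is unbalanced if and only if $Q$ vanishes identically on $\ker(F-\mathrm{id})^{d_Q}\cap GF(2^n)$. Everything then reduces to showing that, under the hypothesis $2^\nu\le d_Q$, this intersection is the whole of $GF(2^n)$.

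The one substantive step is the containment $GF(2^n)\subseteq\ker(F-\mathrm{id})^{d_Q}$. This rests on the characteristic-two identity $(F-\mathrm{id})^{2^\nu}=F^{2^\nu}-\mathrm{id}$, valid because $F$ commutes with $\mathrm{id}$ and the intermediate binomial coefficients vanish modulo $2$. On $GF(2^n)=GF(2^{2^\nu})$ one has $F^{2^\nu}=\mathrm{id}$, so $(F-\mathrm{id})^{2^\nu}$ already annihilates $GF(2^n)$; and since $d_Q\ge 2^\nu$, the higher power $(F-\mathrm{id})^{d_Q}$ annihilates it as well. This is exactly the containment exploited (with $\nu$ in place of $\nu_Q-1$) in the proofs of \Cref{pr.bet-pow2} and \Cref{pr.cutoff}, which is why the present argument is essentially the same.

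With the containment in hand, $\ker(F-\mathrm{id})^{d_Q}\cap GF(2^n)=GF(2^n)$, so the criterion becomes: $Q_n$ is unbalanced if and only if $Q$ vanishes on all of $GF(2^n)$, that is, if and only if $Q_n$ is identically zero. The reverse implication is immediate, since an identically zero function has weight $0$ and is therefore unbalanced. There is no genuine obstacle beyond the containment above; the content of the proposition is entirely concentrated in the elementary fact that $GF(2^{2^\nu})$ lies in $\ker(F-\mathrm{id})^{d_Q}$ as soon as $d_Q\ge 2^\nu$.
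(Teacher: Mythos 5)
Your proof is correct and follows essentially the same route as the paper, whose own proof consists of the remark that the argument for \Cref{pr.cutoff} applies verbatim: both rest on the containment $GF(2^{2^{\nu}})=\ker(F-\mathrm{id})^{2^{\nu}}\subseteq\ker(F-\mathrm{id})^{d_Q}$ combined with the balancedness criterion of this subsection. Your explicit appeal to the characteristic-two identity $(F-\mathrm{id})^{2^{\nu}}=F^{2^{\nu}}-\mathrm{id}$ simply spells out what the paper leaves implicit.
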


Since \Cref{pr.cutoff} renders meaningful the question of whether or not $Q_n$ vanishes identically, we examine that problem in more detail. It will occasionally be convenient to work with RS functions of the form
\begin{equation}\label{eq:24}
  \sum_i \sum_{j~\mathrm{mod}~n} a_i x_j x_{j+i} = \sum_{i \neq (n/2)} a_i(0,i)_n
\end{equation}
defined on $GF(2)^n$. The omitted value of $i$ for $n$ even on the right-hand side corresponds to the short function $(0, \frac{n}{2})_n$ (see \Cref{short}) which would appear twice in \Cref{eq:24}. This value of $i$ corresponds to the "additional term" in \Cref{pr.when-v} below. It will be important to note that the correspondence between functions \Cref{eq:24} and \Cref{eq:4} preserves identical vanishing:

\begin{proposition}\label{pr.vanish}
  Let $Q$ be a quadratic function defined by \Cref{eq:4}. Then, $Q$ vanishes identically on $GF(2^n)$ if and only if \Cref{eq:24} vanishes identically on $GF(2)^n$. 
\end{proposition}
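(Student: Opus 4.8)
The plan is to collapse both vanishing conditions onto a single condition on the symmetric matrix $C=B+B^t$ attached to the form, reusing the apparatus set up in the proof of \Cref{th.qq'}. First I would observe that the two functions are related by the Frobenius-twisted substitution $x_j\mapsto x^{2^j}$: with $\iota(x)=(x,x^2,\ldots,x^{2^{n-1}})^t$ and $R$ the RS function \Cref{eq:24} written as $R(x)=x^tBx$ for a strictly upper triangular $B$ over $GF(2)$ (there are no squared terms, since every monomial $x_jx_{j+i}$ has $i\neq 0$), one has $Q(x)=\iota(x)^tB\,\iota(x)$, exactly as in that proof. Setting $C=B+B^t$, I claim that \emph{both} sides of the desired equivalence are equivalent to $C=0$.

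The RS side is immediate. Since $R$ is multilinear --- of degree at most one in each variable --- its algebraic normal form is the polynomial $x^tBx$ itself, so $R$ vanishes identically on $GF(2)^n$ if and only if $B=0$; and because $B$ is strictly upper triangular this is equivalent to $C=B+B^t=0$. Working with $C$ rather than the individual coefficients $a_i$ has the pleasant side effect of sidestepping the bookkeeping around gaps $i$ versus $n-i$.

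For the field side I would use that $\iota$ is $GF(2)$-linear, Frobenius being additive. Polarizing the identity $Q(x)=\iota(x)^tB\,\iota(x)$ gives, for all $x,y\in GF(2^n)$,
\[
  Q(x+y)=Q(x)+Q(y)+\iota(x)^tC\,\iota(y).
\]
Hence $Q\equiv 0$ on $GF(2^n)$ forces $\iota(x)^tC\,\iota(y)=0$ for all $x,y$. Here I would invoke the key fact, already used in the proof of \Cref{th.qq'} via \cite[Proposition 4.6]{wl-lin}, that $\iota(GF(2^n))$ is a $GF(2)$-structure on $W=V\otimes_{GF(2)}GF(2^n)$: a $GF(2)$-basis of its image is a $GF(2^n)$-basis of $W$. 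Since $(v,w)\mapsto v^tCw$ is $GF(2^n)$-bilinear and vanishes on all pairs drawn from such a basis, it vanishes identically, so $C=0$. The converse is trivial: $C=0$ gives $B=0$, hence $R=0$ and $Q=0$. Combining the two sides yields $Q\equiv 0$ on $GF(2^n)$ if and only if $C=0$, if and only if \Cref{eq:24} vanishes on $GF(2)^n$.

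The step I expect to be the crux is this last passage, from the vanishing of $\iota(x)^tC\,\iota(y)$ on the Frobenius-twisted image $\iota(GF(2^n))$ to the vanishing of $C$ on all of $W$. This is precisely where the equivalence acquires content: a priori the condition ``$Q$ vanishes on $GF(2^n)$'' concerns the twisted subset $\iota(GF(2^n))\subseteq GF(2^n)^n$, which bears no obvious relation to the original domain $GF(2)^n$, and only the $GF(2)$-structure property bridges the two. The one genuinely degenerate case, the short gap $i=n/2$ for even $n$, where the RS sum $\sum_j x_jx_{j+n/2}$ collapses to zero while the corresponding trace term does not, is exactly the value of $i$ excluded from \Cref{eq:24}.
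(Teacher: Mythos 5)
Your route is genuinely different from the paper's, which is a two-line Walsh argument: by \Cref{th.qq'} the function $Q$ of \Cref{eq:4} and the RS function \Cref{eq:24} have Walsh transforms at ${\bf 0}$ of equal absolute value, so their weights are either equal or sum to $2^n$; since both functions annihilate the zero input, neither weight can be $2^n$, hence the weights vanish simultaneously. Your linear-algebra alternative (polarization, plus the fact that $\iota(GF(2^n))$ is a $GF(2)$-structure on $W$, so a $GF(2^n)$-bilinear form vanishing on $\iota(GF(2^n))\times\iota(GF(2^n))$ vanishes identically) is sound in itself and more self-contained. But it has a genuine gap at its very first step, the identity $Q(x)=\iota(x)^tB\,\iota(x)$, which you assert holds ``exactly as in that proof.'' It does not: in \Cref{th.qq'} the trace form is \emph{defined} as the substitution of the RS form, so the identity is a tautology there, whereas here $Q$ is given by \Cref{eq:4} and the matrix $B$ is built from \Cref{eq:24}, which \emph{omits} every exponent $i\equiv n/2\pmod n$. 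Such exponents do occur in the proposition's scope (the exponents of $Q$ are fixed while $n$ varies; this is precisely the ``additional term'' branch of \Cref{pr.when-v}, and the term $(0,4)_8$ in \Cref{exev}), and for them your identity is equivalent to the unproved claim that $\mathrm{Tr}_n\left(x^{2^{n/2}+1}\right)$ vanishes identically on $GF(2^n)$.

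That claim is true, but your closing paragraph asserts its negation (``the RS sum collapses to zero while the corresponding trace term does not''), so this is not a mere omission of routine detail: if the trace term really were nonzero, then $Q=\mathrm{Tr}_n(x^{2^{n/2}+1})$ would be a counterexample to the proposition itself (\Cref{eq:24} identically zero, $Q$ not), and your step 1 would be false whenever $a_{n/2}\neq 0$. The missing lemma is a one-line computation: setting $y=x^{2^{n/2}+1}$ one has $y^{2^{n/2}}=x^{2^n+2^{n/2}}=x\cdot x^{2^{n/2}}=y$, so $y\in GF(2^{n/2})$, and therefore
\begin{equation*}
\mathrm{Tr}_n(y)=\mathrm{Tr}_{n/2}\left(y+y^{2^{n/2}}\right)=\mathrm{Tr}_{n/2}(0)=0.
\end{equation*}
With this inserted, $Q=\iota^tB\,\iota$ holds and the rest of your argument goes through. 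One smaller edge case to flag: exponents $i\equiv 0\pmod n$ turn the monomials $x_jx_{j+i}$ into squares $x_j^2=x_j$, so both \Cref{eq:4} and \Cref{eq:24} acquire linear parts that a strictly upper triangular $B$ cannot record, and your characterization ``vanishes iff $C=0$'' fails on both sides at once (e.g.\ $Q=\mathrm{Tr}_n(x^2)$ has $C=0$ but is not zero); the paper's Walsh-value argument is indifferent to this, but your quadratic-form framework must either exclude this case or treat the linear part separately.
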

\begin{proof}
  We know from \Cref{th.qq'} that the two functions have the same absolute value for their Walsh transform at $0$. Since the weight is
  \begin{equation*}
    2^{n-1}-\frac{W({\bf 0})}2,
  \end{equation*}
  there are two possibilities:
  \begin{itemize}
  \item the two weights are equal, meaning that they are simultaneously zero or not;    
  \item the two weights add up to $2^n$, in which case it would be impossible for either one of them to be zero: the other one would then be $2^{n-1}$, contradicting the fact that \Cref{eq:24} and $Q$ annihilate the zero vector in $GF(2)^n$ and the zero element of $GF(2^n)$ respectively.
  \end{itemize}
  This finishes the proof.
\end{proof}

\Cref{pr.vanish} allows for relatively simple characterizations of those situations when we do have identical vanishing.

\begin{proposition}\label{pr.when-v}
  Let $Q$ be as in \Cref{eq:4} and $n$ a positive integer. Then, $Q$ vanishes identically on $GF(2^n)$ if and only if one of the following occurs
  \begin{itemize}
  \item the non-zero coefficients $a_i$ in \Cref{eq:4} come in pairs
    \begin{equation*}
      a_i,\ a_{i'},\ i=\pm i'~\mathrm{mod}~n.
    \end{equation*}
  \item same as above, except there is also an additional term $x^{2^i+1}$ with $i=\frac n2~\mathrm{mod}~n$.
  \end{itemize}
\end{proposition}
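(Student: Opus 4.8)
The plan is to transport the problem from the trace form $Q$ on $GF(2^n)$ to the companion Boolean polynomial \Cref{eq:24} on $GF(2)^n$, where identical vanishing can be read directly off the algebraic normal form. By \Cref{pr.vanish}, $Q$ vanishes identically on $GF(2^n)$ if and only if the quadratic Boolean function
\begin{equation*}
  R=\sum_{i\neq n/2} a_i\,(0,i)_n=\sum_i a_i\sum_{j\bmod n}x_jx_{j+i},
\end{equation*}
with $i$ ranging over the nonzero residues modulo $n$ (so that $a_i$ and $a_{n-i}$ are \emph{independent} coefficients, even though $\mathrm{Tr}_n(x^{2^i+1})=\mathrm{Tr}_n(x^{2^{n-i}+1})$ as functions), vanishes identically on $GF(2)^n$. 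The advantage of working with $R$ is that it is already presented in algebraic normal form: since $i\not\equiv 0\bmod n$, every monomial $x_jx_{j+i}$ is a genuine product of two distinct variables, so no reduction modulo $x^2=x$ is required. Because the algebraic normal form is unique, $R$ vanishes as a function precisely when it vanishes as a multilinear polynomial, i.e. when every monomial coefficient is $0$ in $GF(2)$.

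First I would compute the coefficient of a fixed monomial $x_ax_b$ in $R$. Writing $b-a\equiv\pm\delta\bmod n$ with $1\le\delta\le n/2$, a monomial of difference $\delta\neq n/2$ is produced exactly twice: once inside $a_\delta\sum_j x_jx_{j+\delta}$ and once inside $a_{n-\delta}\sum_j x_jx_{j+(n-\delta)}$, reflecting the identity $(0,\delta)_n=(0,n-\delta)_n$. Its coefficient is therefore $a_\delta+a_{n-\delta}$. The exceptional difference $\delta=n/2$ (for even $n$) is self-paired: in $\sum_j x_jx_{j+n/2}$ the pair $\{a,a+n/2\}$ arises from both $j=a$ and $j=a+n/2$, so it occurs twice and contributes $0$. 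This is precisely why $i=n/2$ is omitted on the right-hand side of \Cref{eq:24} (cf. the short function \Cref{short}), and why the term $\mathrm{Tr}_n\!\left(x^{2^{n/2}+1}\right)$ is itself identically zero — a fact already absorbed into \Cref{pr.vanish} through \Cref{th.qq'}.

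Assembling these coefficient computations, I obtain that $R\equiv 0$ if and only if $a_\delta=a_{n-\delta}$ for every $\delta$ with $1\le\delta<n/2$, while $a_{n/2}$ remains entirely unconstrained (it never enters $R$). It then remains only to translate this back into the language of the statement. The equalities $a_\delta=a_{n-\delta}$ say exactly that the nonzero coefficients with index $\neq n/2$ occur in pairs $\{i,n-i\}$, i.e. with $i=\pm i'\bmod n$; and the freedom in $a_{n/2}$ produces the dichotomy between the two bullets — either $a_{n/2}=0$, giving the first alternative, or $a_{n/2}=1$, adjoining the additional self-paired term $x^{2^{n/2}+1}$ of the second alternative. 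Since each step is an equivalence, this establishes both directions of the proposition simultaneously.

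I expect the only delicate point to be the bookkeeping around the self-paired index $i=n/2$: one must verify that this term contributes nothing to $R$ (so that it can never obstruct vanishing), while at the same time it survives as a genuine summand of the trace form $Q$. Reconciling these two facts is exactly the role played by \Cref{pr.vanish}, so once that reduction is invoked the remaining coefficient count and the passage to the pairing condition are routine.
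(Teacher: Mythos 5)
Your proof follows the paper's own argument exactly: both reduce via \Cref{pr.vanish} to the Boolean function \Cref{eq:24} on $GF(2)^n$, and then observe that such a quadratic Boolean function vanishes identically precisely when every monomial of its (unique) algebraic normal form has coefficient zero, which forces the pairing $i \equiv \pm i' \bmod n$ while leaving the self-cancelling residue $n/2$ unconstrained. The only difference is that you spell out the monomial-coefficient bookkeeping (including the twice-counted difference $n/2$) that the paper's proof dismisses as ``immediate,'' so the two proofs are essentially identical.
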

\begin{proof}
  By \Cref{pr.vanish} we can consider the function \Cref{eq:24} instead, and determine when it can vanish identically on $GF(2)^n$. This happens if and only if it vanishes as a polynomial, i.e. every monomial $x_jx_{j+i}$ appears an even number of times. That this precisely matches the two possibilities in the statement is now immediate.
\end{proof}

It will be convenient to name the following properties appearing in \Cref{pr.when-v}. 

\begin{definition}\label{def.equit}
  A multiset of residues modulo $n$ is {\it equitable} if it can be partitioned into pairs $i$, $i'$ such that
  \begin{equation*}
    i\pm i'=0~\mathrm{mod}~n
  \end{equation*}

  The multiset is {\it semi-equitable} if it is a union of an equitable multiset and an odd number of residues $\frac n2~\mathrm{mod}~n$ (or equivalently, one such residue).  
\end{definition}

\subsection{Exact powers of $x-1$}

The preceding discussion makes it clear that given $Q$ defined by \Cref{eq:4}, it will be important to gain more information about the highest power $(x-1)^{d_Q}$ of $x-1$ dividing
\begin{equation}\label{eq:25}
  A(x) = \sum_{i}a_i(x^i+x^{-i}),\ a_i\in GF(2). 
\end{equation}

Our first remark is

\begin{proposition}\label{pr.odds}
  If all terms of $Q$ have odd subscripts $i,$ then $d_Q \equiv 2~\mathrm{mod}~4$. 
\end{proposition}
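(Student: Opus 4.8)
The plan is to exploit the inversion symmetry $x\leftrightarrow x^{-1}$ of the Laurent polynomial $A(x)$ in \Cref{eq:25}, so as to convert the multiplicity of the root $x=1$ (which is $d_Q$) into a multiplicity at $y=0$ for a single ordinary polynomial in $y=x+x^{-1}$. Since $A$ is invariant under $x\mapsto x^{-1}$, it lies in the subring of $GF(2)[x^{\pm1}]$ fixed by this involution, and that subring is exactly $GF(2)[y]$: the symmetric Laurent polynomials $x^k+x^{-k}$ are degree-$k$ polynomials $D_k(y)$ (the Dickson polynomials, determined by $D_k(x+x^{-1})=x^k+x^{-k}$ together with the recursion $D_{k+1}=yD_k+D_{k-1}$, $D_0=0$, $D_1=y$), and these have distinct degrees, hence form a triangular basis. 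Thus I would write $A(x)=P(y)$ with $P=\sum_i a_iD_i$.

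First I would record the valuation-doubling identity
\begin{equation*}
  y=x+x^{-1}=\frac{(x+1)^2}{x},
\end{equation*}
which shows that $y$ has a zero of order exactly $2$ at $x=1$, the factor $x$ being a unit there. Consequently $y^{\ell}$ has a zero of order $2\ell$ at $x=1$, and since these orders are distinct for distinct $\ell$, there is no cancellation between the different homogeneous pieces of $P(y)$. As $d_Q$ is by definition the multiplicity of $x=1$ in $A=P(y)$, this yields $d_Q=2r$, where $r$ is the least exponent occurring in the polynomial $P$ (equivalently, the multiplicity of $y=0$ as a root of $P$).

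It then remains only to show that $r$ is odd, after which $d_Q=2r\equiv 2\bmod 4$ is immediate. This is precisely where the hypothesis enters: the Dickson polynomial $D_i$ has the same parity as $i$, by an immediate induction on $D_{i+1}=yD_i+D_{i-1}$ over $GF(2)$, so for odd $i$ the polynomial $D_i(y)$ contains only odd powers of $y$. Hence $P=\sum_i a_iD_i$ is itself an odd polynomial in $y$, and the least exponent $r$ appearing in a nonzero odd polynomial is odd. (That $P\neq 0$, so that $r$ is well defined, follows from $A\neq 0$, which holds as soon as some $a_i\neq 0$, since the $x^i+x^{-i}$ are linearly independent; equivalently, the $D_i$ have distinct degrees.)

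The only genuine content is the parity statement for $D_i$, and that is exactly the point at which oddness of the subscripts is used; everything else is formal bookkeeping about valuations. As a sanity check one can note that an odd number of terms forces the coefficient of $y$ in $P$ to be $\sum_i a_i=1$, giving $r=1$ and $d_Q=2$, whereas for instance the two-term case $i\in\{1,3\}$ gives $P=D_1+D_3=y^3$, so $r=3$ and $d_Q=6$, consistent with the direct computation $(x+1)^6=x^6+x^4+x^2+1=x^3\bigl(x^3+x+x^{-1}+x^{-3}\bigr)$.
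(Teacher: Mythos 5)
Your proof is correct, and it takes a genuinely different route from the paper's. The paper's own argument clears denominators: multiplying $A(x)$ in \Cref{eq:25} by $x^m$ ($m=\max i$, odd) makes every exponent $m\pm i$ even, so the numerator \Cref{eq:26} is a square $p(x)^2$ in $GF(2)[x]$; the factor of $2$ in $d_Q$ comes from this characteristic-two square root, and the oddness of $d_Q/2$ comes from an induction on degree showing that a palindromic polynomial of odd degree with free term $1$ is divisible by an exactly odd power of $x-1$ (if $(x+1)^2$ divides $p$, the quotient is again palindromic of odd degree with free term $1$). You instead substitute $y=x+x^{-1}$: your factor of $2$ comes from the valuation identity $y=(x+1)^2/x$ together with the ultrametric property of the $(x-1)$-adic valuation, and the oddness comes from the parity of the characteristic-two Dickson polynomials $D_i$. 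A structural advantage of your route is that the identity $d_Q=2\,\mathrm{ord}_{y=0}P$ with $P=\sum_i a_iD_i$ holds for \emph{every} $Q$, with the hypothesis of odd subscripts entering only at the last step through the parity of the $D_i$; in the paper's proof, by contrast, oddness of the subscripts is needed already for the square-root step to make sense. Your formula thus gives a reusable recipe for computing $d_Q$ in general (it recovers, for instance, \Cref{cor.samenu} after an application of Frobenius), whereas the paper's argument is somewhat more self-contained, requiring no apparatus beyond the square root and the palindromic induction. Both proofs are sound; your bookkeeping of the degenerate case ($P\neq 0$ because the $D_i$ have distinct degrees) is also handled correctly.
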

\begin{proof} 
  Forming the smallest common denominator in \Cref{eq:25}, the numerator will be
  \begin{equation}\label{eq:26}
    \sum_i a_i (x^{m+i}+x^{m-i})
  \end{equation}
  where $m=\max i$. All exponents in \Cref{eq:26} are even and hence that expression is a square in $GF(2)[x]$. Taking a square root produces a polynomial
  \begin{equation*}
    p(x)\in GF(2)[x]
  \end{equation*}
  which is {\it palindromic} (i.e. the list of coefficients is left-right symmetric), has odd degree and free term $1$.

  The desired conclusion is that the exact power of $x-1$ dividing $p$ has odd exponent. To see this, simply note that if $(x-1)^2=x^2+1$ divides $p$ then the quotient $\frac{p(x)}{x^2+1}$ is again palindromic of odd degree with non-vanishing free term and hence we can proceed by induction on the degree.
\end{proof}

As an immediate consequence we obtain

\begin{corollary}\label{cor.samenu}
  If all subscripts $i$ appearing in the terms of $Q$ have the same $2$-adic valuation $\mu$ then $\nu(d_Q)=\mu+1$. 
\end{corollary}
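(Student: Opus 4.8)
The plan is to reduce to \Cref{pr.odds} by a rescaling trick that collapses all terms to odd subscripts. Write each nonzero subscript as $i=2^\mu i'$ with $i'$ odd; by hypothesis every term shares the common valuation $\mu$. The idea is that $A(x)$ in \Cref{eq:25} is essentially a function of $x^{2^\mu}$, so we can factor out the Frobenius power. Concretely, I would introduce the Laurent polynomial
\begin{equation*}
  B(y) = \sum_i a_i(y^{i'}+y^{-i'}),
\end{equation*}
whose subscripts $i'$ are all odd, and observe that $A(x)=B(x^{2^\mu})$. Since we are working over $GF(2)$, raising to the $2^\mu$ power is a ring endomorphism, so $A(x)=B(x)^{2^\mu}$ after substituting and using freshman's-dream; more carefully, $A(x)=B(x^{2^\mu})=\left(B(x)\right)^{2^\mu}$ because squaring is additive in characteristic two. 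This is the one identity I must get exactly right.

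The heart of the argument is then a valuation count for $x-1$. Let $d_B$ be the largest exponent with $(x-1)^{d_B}\mid B(x)$. By \Cref{pr.odds} applied to the all-odd Laurent polynomial $B$ (which is exactly the $A$-type polynomial of an auxiliary function $Q$ whose every term has odd subscript), we have $d_B\equiv 2~\mathrm{mod}~4$; in particular $d_B$ is even but $d_B/2$ is odd, so $\nu(d_B)=1$. Now from $A(x)=B(x)^{2^\mu}$ I read off $d_Q = 2^\mu\,d_B$, since the exact power of $x-1$ dividing a $2^\mu$-th power is $2^\mu$ times the exact power dividing the base. Taking $2$-adic valuations, $\nu(d_Q)=\mu+\nu(d_B)=\mu+1$, which is the claim.

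The main thing to verify carefully is the clean factorization $A(x)=B(x)^{2^\mu}$ and, correspondingly, that no cancellation across terms spoils the multiplicity count. The substitution $y=x^{2^\mu}$ is injective on exponents, so distinct monomials of $B$ stay distinct in $A$, and the coefficients $a_i\in GF(2)$ are preserved; hence $d_Q=2^\mu d_B$ holds with no collapse. I would also double-check the degenerate edge cases: if $B$ is the zero polynomial the statement is vacuous (there are no terms), and when $\mu=0$ the corollary reduces directly to \Cref{pr.odds} giving $\nu(d_Q)=1$, consistent with $d_Q\equiv 2~\mathrm{mod}~4$. The only genuine obstacle is bookkeeping the valuation of a perfect $2^\mu$-th power correctly; everything else is a direct invocation of \Cref{pr.odds}.
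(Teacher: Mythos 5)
Your proof is correct and matches the paper's intended argument: the paper derives \Cref{cor.samenu} as an immediate consequence of \Cref{pr.odds}, precisely via the factorization $A(x)=B(x^{2^\mu})=B(x)^{2^\mu}$ (Frobenius in characteristic two) and the resulting relation $d_Q=2^\mu d_B$ with $\nu(d_B)=1$. Your handling of the valuation of a perfect $2^\mu$-th power and the edge cases is exactly the bookkeeping the paper leaves implicit.
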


In general, we can partition the terms of $Q$ (and $A$) according to the $2$-adic valuation $\nu(i)$ of the exponents $i$ in \Cref{eq:25} (i.e. those corresponding to terms with $a_i\ne 0$). We write
\begin{equation*}
  {}_{\mu}Q\quad\text{and}\quad {}_{\mu}A(x)
\end{equation*}
for the partial sums of \Cref{eq:4,eq:25} collecting those terms for which the $2$-adic valuation $\nu(i)$ is $\mu$. We will similarly decorate other objects with left-hand $\mu$ subscripts when needed, indicating an analogous partitioning. For instance, ${}_{\mu}d_Q$ will denote the largest exponent of $x-1$ in ${}_{\mu}A(x)$. 

\begin{corollary}\label{cor.minmu}
  With the above notation and conventions all ${}_{\mu}d_Q$ are distinct and
  \begin{equation*}
    d_Q = \min_{\mu} {}_{\mu}d_Q. 
  \end{equation*}
\end{corollary}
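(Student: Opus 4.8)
The plan is to treat $d_Q$ as the $(x-1)$-adic valuation of the Laurent polynomial $A(x)$ and to exploit the additivity of the decomposition $A=\sum_{\mu}{}_{\mu}A$ together with \Cref{cor.samenu}. First I would observe that $x-1$ is a prime element of $L=GF(2)[x^{\pm 1}]$ (it does not divide the unit $x$, since $x-1$ vanishes at $x=1$ while $x$ does not), so the associated valuation $v:=v_{x-1}$ is a genuine ultrametric discrete valuation on $L$. By the very definitions we then have $d_Q=v(A)$ and ${}_{\mu}d_Q=v({}_{\mu}A)$ for each $\mu$ with ${}_{\mu}A\neq 0$.

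The key input is \Cref{cor.samenu}. Applied to the partial function ${}_{\mu}Q$, whose subscripts all share the $2$-adic valuation $\mu$, it yields $\nu({}_{\mu}d_Q)=\mu+1$. Since the map $\mu\mapsto\mu+1$ is injective, the integers $\nu({}_{\mu}d_Q)$ are pairwise distinct as $\mu$ varies; hence the ${}_{\mu}d_Q$ themselves are pairwise distinct (if ${}_{\mu}d_Q={}_{\mu'}d_Q$ then $\mu+1=\nu({}_{\mu}d_Q)=\nu({}_{\mu'}d_Q)=\mu'+1$, forcing $\mu=\mu'$). This already settles the first assertion of the corollary.

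With distinctness in hand, the formula $d_Q=\min_{\mu}{}_{\mu}d_Q$ follows immediately from the ultrametric inequality. Writing $A=\sum_{\mu}{}_{\mu}A$ and letting $\mu_0$ be the \emph{unique} index attaining $\min_{\mu}v({}_{\mu}A)=\min_{\mu}{}_{\mu}d_Q$, the summand ${}_{\mu_0}A$ has strictly smaller valuation than every other ${}_{\mu}A$. By the standard fact that $v(a+b)=\min(v(a),v(b))$ whenever $v(a)\neq v(b)$, applied inductively over the finitely many pieces, the minimal-valuation summand cannot be cancelled, so $v(A)=\min_{\mu}{}_{\mu}d_Q$, that is $d_Q=\min_{\mu}{}_{\mu}d_Q$.

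There is no substantive obstacle here: the whole argument reduces to \Cref{cor.samenu} plus the non-Archimedean behaviour of a discrete valuation. The one point I would emphasize is that the distinctness claim is not a mere companion statement but is exactly what powers the minimum formula. It is precisely because the ${}_{\mu}d_Q$ are forced apart---their $2$-adic valuations $\mu+1$ being all different---that no cancellation among the summands ${}_{\mu}A$ can push $v(A)$ below the smallest ${}_{\mu}d_Q$.
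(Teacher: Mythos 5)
Your proof is correct and follows essentially the same route as the paper: distinctness of the ${}_{\mu}d_Q$ comes from \Cref{cor.samenu} (their $2$-adic valuations are the pairwise distinct numbers $\mu+1$), and the formula $d_Q=\min_{\mu}{}_{\mu}d_Q$ follows from the non-archimedean property of the $(x-1)$-adic valuation applied to the decomposition $A=\sum_{\mu}{}_{\mu}A$. Your version merely spells out the details the paper leaves implicit (primality of $x-1$ in $GF(2)[x^{\pm 1}]$ and the inductive use of $v(a+b)=\min(v(a),v(b))$ when $v(a)\neq v(b)$), which is a fine elaboration rather than a different argument.
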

\begin{proof}
  \Cref{cor.samenu} says that ${}_{\mu}d_Q$ has $2$-adic valuation $\mu+1$, hence the conclusion that ${}_{\mu}d_Q$ are distinct. As for the last statement, this is basic polynomial arithmetic, expressing the non-archimedean-ness of the $(x-1)$-valuation on the ring of Laurent polynomials over $GF(2)$: the exact power of $x-1$ dividing a sum of terms with {\it distinct} $(x-1)$-adic valuations is the smallest exact power dividing one of the terms.
\end{proof}

\subsection{\Cref{cj.bal}, even number of nonzero $a_i$}\label{subse.even}

 \Cref{pr.when-v} gives:

\begin{proposition}\label{pr.when-v-ev}
  Define $Q$ by \Cref{eq:4} with an even number of terms and let $n=2^{\nu}$, $\nu\ge 0$. Then, $Q$ vanishes identically on $GF(2^n)$ if and only if its subscripts $i$ form an equitable set modulo $n$ in the sense of \Cref{def.equit}.
\end{proposition}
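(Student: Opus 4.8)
The plan is to deduce this directly from \Cref{pr.when-v}. That proposition already characterizes identical vanishing of $Q$ on $GF(2^n)$ as a disjunction of two conditions: either the subscripts form an equitable multiset modulo $n$, or they form a semi-equitable one (in the terminology of \Cref{def.equit}). The statement is phrased for an arbitrary positive integer $n$, so specializing to $n=2^{\nu}$ changes nothing in that characterization. The only work left, then, is to rule out the semi-equitable alternative under the hypothesis that $Q$ has an even number of terms.

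The key observation I would isolate is a parity count. An equitable multiset, being partitioned into pairs, has even cardinality; a semi-equitable one is by definition the union of an equitable part with an odd number of residues equal to $\frac n2 \bmod n$, so its cardinality is (even)$\,+\,$(odd)$\,=\,$odd. Thus the two alternatives of \Cref{pr.when-v} are distinguished purely by the parity of the number of subscripts: equitability can occur only for an even count, semi-equitability only for an odd count.

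With this in hand the argument is immediate. The subscripts reduce modulo $n$ one-for-one, so the number of residues counted with multiplicity equals the number of nonzero $a_i$, i.e. the number of terms of $Q$. When this number is even, the semi-equitable case of \Cref{pr.when-v} cannot arise on parity grounds, and the biconditional there collapses to the single equitable condition, which is exactly the claimed criterion.

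The only point demanding a little care---and the nearest thing to an obstacle---is confirming that passing to residues modulo $n$ really preserves the parity of the count, even when two distinct original subscripts collapse to a common residue (which does happen for small $n=2^{\nu}$). Such a collision merely produces a repeated residue $i$; but a repeated residue is itself equitable, since the pair $(i,i)$ satisfies $i-i\equiv 0 \bmod n$, so it is absorbed harmlessly into the equitable part and the total multiplicity remains the number of terms. Once this bookkeeping is checked, the deduction from \Cref{pr.when-v} is entirely combinatorial and requires no further field-theoretic input.
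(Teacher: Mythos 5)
Your proposal is correct and matches the paper's route: the paper states \Cref{pr.when-v-ev} as an immediate consequence of \Cref{pr.when-v} (with no written proof), and the justification is exactly your parity argument ruling out the semi-equitable alternative when the number of terms is even. Your explicit bookkeeping about residue collisions modulo $n=2^{\nu}$ is a sound, if routine, filling-in of what the paper leaves implicit.
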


In particular,

\begin{corollary}\label{cor.all-lower}
  If $Q$ as in \Cref{pr.when-v-ev} vanishes identically on $GF(2^{2^{\nu}})$ then it does on all of its subfields. 
\end{corollary}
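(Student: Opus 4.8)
The plan is to route everything through the combinatorial characterization of identical vanishing supplied by \Cref{pr.when-v-ev} and then observe that the property it isolates is automatically inherited by subfields for purely arithmetic reasons. First I would pin down which fields are in play: the subfields of $GF(2^{2^{\nu}})$ are precisely the fields $GF(2^{2^{\nu'}})$ with $0\le \nu'\le \nu$, since the subfields of $GF(2^m)$ correspond to divisors of $m$ and the only divisors of $2^{\nu}$ are the powers $2^{\nu'}$. Thus the corollary amounts to showing that if $Q$ vanishes identically on $GF(2^{2^{\nu}})$ then it vanishes on each $GF(2^{2^{\nu'}})$, $\nu'\le \nu$.

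Next I would invoke \Cref{pr.when-v-ev} in both directions. By hypothesis $Q$ vanishes identically on $GF(2^{2^{\nu}})$, which that proposition translates into the assertion that the multiset of subscripts $i$ of $Q$ is equitable modulo $2^{\nu}$, i.e. these subscripts can be partitioned into pairs $i,i'$ with $i\equiv \pm i' \pmod{2^{\nu}}$. To conclude that $Q$ vanishes on $GF(2^{2^{\nu'}})$ it then suffices, again by \Cref{pr.when-v-ev}, to exhibit a pairing witnessing equitability modulo $2^{\nu'}$. The essential point is that the \emph{same} pairing already works: since $2^{\nu'}$ divides $2^{\nu}$, any congruence $i\equiv \pm i' \pmod{2^{\nu}}$ reduces to $i\equiv \pm i' \pmod{2^{\nu'}}$. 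Hence a partition witnessing equitability modulo $2^{\nu}$ is simultaneously a partition witnessing equitability modulo every $2^{\nu'}$ with $\nu'\le \nu$, and \Cref{pr.when-v-ev} then returns the vanishing of $Q$ on each subfield.

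I do not expect any genuine obstacle here: the content is entirely carried by \Cref{pr.when-v-ev}, and the corollary is just the observation that equitability is monotone under reduction modulo a divisor. The only thing worth checking is the degenerate endpoint $\nu'=0$, corresponding to the prime field $GF(2)=GF(2^{2^0})$, where equitability modulo $2^0=1$ holds automatically for any even-sized multiset; this is consistent with the fact that $Q$, having an even number of terms, takes the value $\sum_i a_i=0$ on the nonzero element $1\in GF(2)$.
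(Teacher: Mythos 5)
Your proof is correct and follows exactly the paper's own argument: both translate identical vanishing into equitability of the subscript multiset via \Cref{pr.when-v-ev}, then observe that a pairing with $i\equiv\pm i'\pmod{2^{\nu}}$ automatically gives $i\equiv\pm i'\pmod{2^{\nu'}}$ for every $\nu'\le\nu$. Your added remarks (identifying the subfields explicitly and checking $\nu'=0$) are harmless elaborations of the same one-line idea.
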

\begin{proof}
  Indeed, for the pairs $i,i'$ in the statement of \Cref{pr.when-v-ev} we have $i\pm i'=0$ modulo every $2^{\nu'}$, $\nu'\le \nu$ if we do for $\nu$.
\end{proof}

\begin{proposition}\label{pr.ev-prel}
  Let $Q$ be as in \Cref{eq:4} with an even number of terms. Then, the set of $\nu$ such that $Q_{2^{\nu}}$ is balanced is an interval
  \begin{equation*}
    \nu_{min},\ \nu_{min}+1,\ \cdots,\ \nu_{max},
  \end{equation*}
  possibly empty, with $\nu_{min}>0$ if it exists. 
\end{proposition}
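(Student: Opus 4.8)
The plan is to prove that the set of $\nu$ with $Q_{2^\nu}$ balanced forms an interval by combining the two facts already at our disposal: the ``small $\nu$'' behavior captured by the identical-vanishing analysis, and the ``large $\nu$'' cutoff from \Cref{pr.bet-pow2}. Recall that by \Cref{th.val-iff} the question of balancedness depends only on $\nu(n)$, so restricting attention to $n=2^\nu$ loses nothing. The key structural input is \Cref{pr.cutoff}: if $Q_{2^\nu}$ is balanced, then for every smaller power of two $2^{\nu'}$ with $\nu'<\nu$, the function $Q_{2^{\nu'}}$ is \emph{either balanced or identically zero}. Thus once we understand the identically-vanishing range, the interval structure will essentially fall out.

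First I would isolate the identical-vanishing range. Since $Q$ has an even number of terms, \Cref{pr.when-v-ev} tells us $Q$ vanishes identically on $GF(2^{2^\nu})$ exactly when its subscripts form an equitable set modulo $2^\nu$, and \Cref{cor.all-lower} shows that if this holds for some $\nu$ it holds for all smaller ones. Hence there is a threshold $\nu_0$ (possibly $\nu_0=0$, i.e.\ the vanishing range is empty, or possibly everything up to some point) such that $Q_{2^{\nu'}}\equiv 0$ precisely for $\nu'\le \nu_0$ (with the convention that when no vanishing occurs at all we set $\nu_0=-\infty$ or simply treat the vanishing set as empty). An identically-zero function is certainly not balanced, so no $\nu'$ in this initial block contributes a balanced function.

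Next I would combine these observations. Define $\nu_{min}$ to be the smallest $\nu$ for which $Q_{2^\nu}$ is balanced (if none exists the set is empty and there is nothing to prove) and $\nu_{max}$ the largest such $\nu$, which exists and is finite by \Cref{pr.bet-pow2} (balancedness forces $\nu\le\nu_Q$). I must show that every $\nu$ strictly between $\nu_{min}$ and $\nu_{max}$ is also balanced. Fix such a $\nu$ and apply \Cref{pr.cutoff} with the balanced value $\nu_{max}$ in the role of the ``balanced $2^\nu$'': since $\nu<\nu_{max}$, the corollary says $Q_{2^\nu}$ is either balanced or identically zero. If it is balanced we are done; so suppose it is identically zero. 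But identical vanishing at level $\nu$ forces, by \Cref{cor.all-lower}, identical vanishing at every smaller level, in particular at $\nu_{min}$ --- contradicting that $Q_{2^{\nu_{min}}}$ is balanced (hence nonzero). This contradiction shows $Q_{2^\nu}$ must be balanced, establishing the interval property.

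Finally, the assertion $\nu_{min}>0$ when the set is nonempty follows from \Cref{cor.odd-n}: a balanced function with an even number of terms cannot occur for odd $n$, and $n=2^0=1$ is odd, so $\nu=0$ never gives a balanced function. The main obstacle, such as it is, lies in handling the bookkeeping of the ``identically zero'' alternative cleanly --- making sure that the vanishing block (governed by \Cref{cor.all-lower}) sits strictly below $\nu_{min}$ and cannot intrude into the open interval $(\nu_{min},\nu_{max})$. This is precisely what the contradiction argument above secures, since any zero in the interior would propagate downward and collide with the nonzero balanced function at $\nu_{min}$. I expect no serious computation; the proof is a matter of correctly orchestrating \Cref{pr.cutoff}, \Cref{cor.all-lower}, \Cref{pr.bet-pow2}, and \Cref{cor.odd-n}.
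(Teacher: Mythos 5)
Your proof is correct and follows essentially the same route as the paper: the interval structure comes from combining \Cref{pr.cutoff} (an unbalanced level below a balanced one must vanish identically) with \Cref{cor.all-lower} (identical vanishing propagates downward), and the claim $\nu_{min}>0$ comes from the even number of terms forcing $Q$ to be unbalanced at $n=1$. The only cosmetic differences are that you organize the middle step as a contradiction rather than a downward traversal, and you cite \Cref{cor.odd-n} and \Cref{pr.bet-pow2} where the paper invokes the vanishing of $Q$ on $GF(2)$ and \Cref{cor.no-t}; these are interchangeable.
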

\begin{proof}
  Suppose we do have such $\nu$, i.e. $Q$ is occasionally balanced. By \Cref{cor.no-t} there is a maximal $\nu_{max}$ for which this happens. Now begin traversing the interval
  \begin{equation*}
    0,\ 1,\ \cdots,\ \nu_{max}
  \end{equation*}
  downward.  We know from \Cref{pr.cutoff} that a jump from `balanced' to `unbalanced' entails identical vanishing, and \Cref{cor.all-lower} says that once we encounter such a $\nu$ we have identical vanishing of $Q$ on $GF(2^{2^{\nu'}})$ for all subsequent $\nu'\le \nu$.

  Finally, the fact that $Q$ vanishes on $GF(2)$ follows immediately from the assumption that we have an even number of terms. 
\end{proof}

In order to confirm \Cref{cj.bal} in this case we would have to argue that the interval from \Cref{pr.ev-prel} can only be a singleton or empty. 

\begin{proposition}\label{pr.ev-equit}
Suppose $Q$ has an even number of terms and $Q={}_{\mu}Q$ for some $\mu$. If $2^t\le d_Q$ then the exponents $i$ appearing in $Q$ form an equitable set modulo $2^t$ in the sense of \Cref{def.equit}.   
\end{proposition}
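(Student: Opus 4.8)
The plan is to translate the statement into parity conditions on how the subscripts of $Q$ distribute among residue classes modulo $2^t$, and then read off equitability directly. Write $S=\{i:a_i\neq 0\}$ and, for a residue $r$ modulo $2^t$, set $n_r=|\{i\in S:\ i\equiv r\bmod 2^t\}|$. The first step is to record what the hypothesis $2^t\le d_Q$ says in these terms. Since over $GF(2)$ we have $(x-1)^{2^t}=x^{2^t}-1$, the condition $2^t\le d_Q$ is exactly the divisibility $(x^{2^t}-1)\mid A(x)$ for the Laurent polynomial $A$ of \Cref{eq:25}; reducing $A$ modulo $x^{2^t}-1$ (i.e. collecting exponents modulo $2^t$) shows that the coefficient of $x^r$ is $n_r+n_{-r}\bmod 2$. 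Hence $2^t\le d_Q$ is equivalent to $n_r\equiv n_{-r}\ (\mathrm{mod}\ 2)$ for every residue $r$, the condition being vacuous when $r\equiv -r$ since it then reads $2n_r\equiv 0$.

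The second step is to express equitability (\Cref{def.equit}) in the same language. Grouping residues into $\pm$-orbits, for an orbit $\{r,-r\}$ with $r\not\equiv -r$ the $n_r+n_{-r}$ subscripts lying over it can be matched into admissible pairs precisely when $n_r+n_{-r}$ is even, i.e. when $n_r\equiv n_{-r}$; for a self-paired residue $r\in\{0,2^{t-1}\}$ the $n_r$ subscripts can be paired, each pair automatically satisfying $i+i'\equiv i-i'\equiv 0$, precisely when $n_r$ is even. Thus being equitable modulo $2^t$ is equivalent to the conjunction of $n_r\equiv n_{-r}$ for all non-self-paired $r$ together with $n_0$ and $n_{2^{t-1}}$ both even. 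Comparing with the first step, the hypothesis $2^t\le d_Q$ already supplies the first batch of conditions, so the whole proposition reduces to showing that $n_0$ and $n_{2^{t-1}}$ are even.

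This last point is where the two standing hypotheses (an even number of terms, and $Q={}_{\mu}Q$) enter, and it is the only genuinely arithmetic step. Because every $i\in S$ has $\nu(i)=\mu$, a subscript satisfies $i\equiv 0\bmod 2^t$ iff $\mu\ge t$ and satisfies $i\equiv 2^{t-1}\bmod 2^t$ iff $\mu=t-1$. Consequently $n_0$ is either $0$ (when $\mu<t$) or $|S|$ (when $\mu\ge t$), and $n_{2^{t-1}}$ is either $0$ (when $\mu\ne t-1$) or $|S|$ (when $\mu=t-1$); in every case the value is $0$ or the total number of terms, which is even by hypothesis. This closes the argument.

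I expect the main obstacle to be the careful bookkeeping at the two self-paired residues $0$ and $2^{t-1}$: this is precisely the gap between $(x^{2^t}-1)\mid A$ and equitability, and it is invisible at the level of $A$, whose coefficient conditions never constrain $n_0$ or $n_{2^{t-1}}$. One could instead route the argument through \Cref{pr.0orall} and \Cref{pr.when-v-ev}, reducing the claim to showing that $Q_{2^t}$ is unbalanced; but that reformulation still needs the same residue-counting input, so the direct approach above seems preferable. I would also double-check the edge cases (notably $\mu\ge t$, and $t=0$ where only the residue $0$ is self-paired), in which $2^t\le d_Q$ turns out to be automatic and equitability holds outright from an even count of subscripts all sharing one residue class, to confirm the statement is consistent rather than vacuous.
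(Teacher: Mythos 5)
Your proof is correct, and it takes a genuinely different route from the paper's. The paper argues through balancedness: by \Cref{pr.0orall} and \Cref{pr.when-v}, the claim reduces to showing $Q_{2^t}$ is unbalanced; the case $\mu=0$ is then settled by \Cref{pr.all-odd} (the $\mathbb{F}_4$-invariance argument), and the general case by an interleaving change of variables $x_{r+i2^{\mu}}\leftrightarrow y^{(r)}_i$ that splits $Q$ into $2^{\mu}$ disjoint-variable copies of an odd-exponent function, each unbalanced. You instead bypass the balancedness machinery entirely: using $(x-1)^{2^t}=x^{2^t}-1$ over $GF(2)$, you unpack both the hypothesis $2^t\le d_Q$ and the conclusion (equitability in the sense of \Cref{def.equit}) into parity conditions on the residue counts $n_r$, observe that they agree except at the self-paired residues $0$ and $2^{t-1}$ (which the divisibility condition cannot see, since their coefficients in $A\bmod (x^{2^t}-1)$ vanish identically), and then kill exactly those two counts using $Q={}_{\mu}Q$ together with the even-number-of-terms hypothesis. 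Your bookkeeping is right at every step: pairs in an equitable partition must stay within a $\pm$-orbit of residues, so equitability is precisely ``even count over each orbit,'' and under $Q={}_{\mu}Q$ the counts $n_0$ and $n_{2^{t-1}}$ are each either $0$ or the total number of terms. What your approach buys is self-containment and a sharper statement: it shows the implication is nearly an equivalence (equitability modulo $2^t$ always forces $(x^{2^t}-1)\mid A$, hence $2^t\le d_Q$, with no hypotheses on $Q$), whereas the paper's route keeps the proposition tied to the balancedness framework --- in particular it establishes along the way that $Q_{2^t}$ vanishes identically, and its interleaving trick echoes decompositions used elsewhere in \Cref{subse.even}. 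One could fairly say your argument exposes the proposition as pure polynomial-and-counting arithmetic, at the cost of not reproving the unbalancedness fact the paper gets for free.
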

\begin{proof}
  Suppose first that $\mu=0$, i.e. all exponents appearing in $Q$ are odd. In that case we know from \Cref{pr.all-odd} that $Q$ is never balanced. It follows that the restriction of $Q$ to $GF(2^t)$ is identically zero, and hence the set of exponents $i$ is equitable modulo $2^t$ by \Cref{pr.when-v}.

  The general case follows similarly, as we now describe. Once more, by \Cref{pr.0orall,pr.when-v} what we want to show is that $Q_{2^t}$ is not balanced. We abuse notation slightly and work with the RS Boolean function on $GF(2)^{2^t}$ associated to $Q$, denoting it by the same symbol; this will not make a difference by \Cref{th.qq'}. 

  We are assuming all $x_jx_{j+i}$ terms appearing in the expansion of $Q$ have $\nu(i)=\mu$, i.e. the exact power of $2$ dividing all $i$ is $2^\mu$. it follows that $2^{\mu}\le d_Q$, so we may as well assume $2^{\mu}\le 2^t\le d_Q$ (since clearly, if we prove equitability over some large $2^t$ we also prove it for its divisors).

  Let $R$ be the Boolean function associated to the trace function
  \begin{equation*}
    \sum_{i'}\mathrm{Tr}(x^{2^{i'}+1}),
  \end{equation*}
  where each $i'$ is $\frac{i}{2^{\mu}}$ for an $i$ appearing in $Q$ (in other words, we only keep the maximal odd divisors from the $i$s relevant to $Q$). 

  Now make the change of variables
  \begin{equation*}
    x_{r+i2^{\mu}}\leftrightarrow y^{(r)}_{i}
  \end{equation*}
  for all residues $0\le r<2^{\mu}$. Now $Q$ breaks up as a sum of copies of $R$, one for each residue $0\le r<2^{\mu}$ operating on the variables $y^{(r)}$. Since $R$ is unbalanced by the first paragraph of the present proof, so is $Q$.
\end{proof}

We can now prove the even half of \Cref{cj.bal}. 

\begin{theorem}\label{th.ev-fin}
  If $Q$ has an even number of terms then \Cref{cj.bal} holds. 
\end{theorem}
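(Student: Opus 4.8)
The plan is to reduce \Cref{cj.bal} (even case) to a single equitability statement and then establish it by partitioning $Q$ according to the $2$-adic valuations of its subscripts. By \Cref{cor.no-t} and \Cref{th.val-iff}, the theorem is equivalent to the assertion that
\begin{equation*}
  B:=\{\nu\ge 0\ :\ Q_{2^{\nu}}\text{ is balanced}\}
\end{equation*}
is empty or a singleton, for then $\mathcal S(Q)=\{\nu+1:\nu\in B\}$ is correspondingly empty or a singleton. Since \Cref{pr.bet-pow2} already gives $B\subseteq\{0,1,\dots,\nu_Q\}$, it is enough to show that $Q_{2^{\nu}}$ is \emph{un}balanced for every $\nu\le\nu_Q-1$; this confines $B$ to $\{\nu_Q\}$.

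For such $\nu$ we have $2^{\nu}\le 2^{\nu_Q-1}<d_Q$, so \Cref{pr.0orall} shows that $Q_{2^{\nu}}$ is unbalanced exactly when it vanishes identically, and \Cref{pr.when-v-ev} shows that this in turn happens exactly when the subscripts of $Q$ form an equitable set modulo $2^{\nu}$. As equitability modulo $2^{\nu}$ passes to all smaller moduli (\Cref{cor.all-lower}), everything reduces to one claim, which I will call the Key Lemma: the subscripts of $Q$ are equitable modulo $2^{\nu_Q-1}$.

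To prove the Key Lemma I would write $Q=\sum_{\mu}{}_{\mu}Q$ as in \Cref{cor.minmu}. The crucial point is that every class ${}_{\mu}Q$ with an \emph{odd} number of terms satisfies $\mu\ge\nu_Q-1$. This sharpens \Cref{pr.odds}: after dividing the subscripts of ${}_\mu Q$ by $2^{\mu}$ one is reduced to odd subscripts, and the palindromic square root $p(x)$ of the corresponding numerator (as in the proof of \Cref{pr.odds}) has some odd degree $m$, free term $1$, and exactly $2c$ monomials, where $c$ is the number of terms. Since the two exponents contributed by each subscript sum to the odd number $m$, exactly one of them is odd; hence $p(1)=0$ while $p'(1)\equiv c\pmod 2$. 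For odd $c$ this gives $p'(1)=1$, so $x-1$ divides $p$ exactly once, the reduced $(x-1)$-valuation is $2$, and therefore ${}_{\mu}d_Q=2^{\mu+1}$ (sharpening \Cref{cor.samenu}). Combining $d_Q\le{}_{\mu}d_Q=2^{\mu+1}$ with $2^{\nu_Q-1}<d_Q$ yields $\nu_Q-1\le\mu$, as claimed.

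Granting this, the pairing that witnesses equitability modulo $2^{\nu_Q-1}$ is straightforward: every class with $\mu<\nu_Q-1$ has an even number of terms and satisfies $2^{\nu_Q-1}<d_Q\le{}_{\mu}d_Q$, so \Cref{pr.ev-equit} makes its subscripts equitable modulo $2^{\nu_Q-1}$ on their own; the subscripts lying in classes with $\mu\ge\nu_Q-1$ are all divisible by $2^{\nu_Q-1}$ and are even in number (the total count is even and the classes with $\mu<\nu_Q-1$ we have set aside all have even size), so they pair off arbitrarily. Hence $Q$ is equitable modulo $2^{\nu_Q-1}$, which finishes the argument and shows $\mathcal S(Q)\subseteq\{\nu_Q+1\}$. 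I expect the real work — and the only genuinely delicate point — to be the sharpening of \Cref{pr.odds} just described, namely that an odd number of (reduced) terms forces the $(x-1)$-valuation to be exactly $2$ rather than merely $\equiv 2\pmod 4$; this is precisely what keeps the odd-parity classes at high $2$-adic level and prevents them from destroying equitability at the critical modulus $2^{\nu_Q-1}$.
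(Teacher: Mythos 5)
Your proposal is correct and takes essentially the same route as the paper's proof: both reduce to prime-power $n=2^\nu$ via \Cref{th.val-iff}, bound the balanced range by \Cref{pr.bet-pow2}, decompose $Q$ into the $2$-adic classes ${}_{\mu}Q$, and establish equitability modulo $2^{\nu}$ for $\nu<\nu_Q$ by applying \Cref{pr.ev-equit} to the even-size low-valuation classes and pairing off arbitrarily the remaining subscripts, which are divisible by $2^{\nu}$ and even in number. The one place you go beyond the paper is worth noting: where the paper simply asserts that an odd-size class satisfies ${}_{\mu}d_Q=2^{\mu+1}$ exactly (its Case 2 claim), your palindromic-derivative argument ($p(1)=0$ together with $p'(1)\equiv c \pmod 2$, so $x-1$ divides $p$ exactly once when $c$ is odd) actually proves this sharpening of \Cref{pr.odds}, and organizing the whole argument around it lets you dispense with the paper's two-case split.
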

\begin{proof}
  We consider two cases:

  {\bf Case 1: all ${}_{\mu}Q$ have an even number of terms.}  Consider the unique $\mu$ such that $d_Q={}_{\mu}d_Q$ as per \Cref{cor.minmu}. By the definition of $\nu_Q$ we have
  \begin{equation}\label{eq:27}
    2^{\nu_Q-1}<d_Q\le 2^{\nu_Q}
  \end{equation}
  and by \Cref{pr.bet-pow2} the only $n=2^\nu$ over which $Q$ stands a chance of being balanced are those with $\nu\le \nu_Q$.

  Now, $Q$ may or may not be balanced over $2^{\nu_Q}$ itself. As for the strictly smaller powers of two $n=2^{\nu}$, $\nu<\nu_Q$, \Cref{eq:27} implies that they fall under the scope of \Cref{pr.ev-equit} and hence the exponents of every ${}_{\mu}Q$ form an equitable set modulo $2^{\nu}$. By \Cref{pr.when-v-ev} every ${}_{\mu}Q_n$ vanishes, and hence so does $Q_n$ (thus failing to be balanced).

  {\bf Case 2: general.}  Now suppose there is some $\mu$ such that ${}_{\mu}Q$ has an odd number of terms and let $\mu_o$ (for `odd') be the smallest such $\mu$. We then have
  \begin{equation}\label{eq:29}
    d_Q = \min_{\mu\le \mu_o} {}_{\mu}d_Q. 
  \end{equation}
Indeed, since ${}_{\mu_o}Q$ has an odd number of terms, ${}_{\mu_o}d_Q=2^{\mu_o+1}$. On the other hand, for {\it larger} $\mu>\mu_o$ we have
  \begin{equation*}
    {}_{\mu}d_Q\ge 2^{\mu+1}>2^{\mu_o+1}={}_{\mu_o}d_Q
  \end{equation*}
  and hence \Cref{eq:29} follows from \Cref{cor.minmu}.

  \Cref{eq:29} implies that $d_Q$ is achieved as
  \begin{equation*}
    {}_{\mu}d_Q\le {}_{\mu_o}d_Q=2^{\mu_o+1}
  \end{equation*}
  for a unique $\mu\le \mu_o$. By \Cref{pr.bet-pow2}, the only powers of two $n=2^{\nu}$ for which $Q_n$ can be balanced are those with $\nu\le \nu_Q\le \mu_o+1$.

  As in the proof of Case 1, $Q$ may or may not be balanced over $2^{\nu_Q}$. Smaller powers of two $2^{\nu}$, $\nu<\nu_Q$ are all dominated by every ${}_{\mu}d_Q$, $\mu\le \mu_o$ and hence all ${}_{\mu}d_Q$, $\mu<\mu_o$ have equitable sets of exponents modulo $2^{\nu}$.

  On the other hand, all $\mu\ge \mu_o$ dominate $\nu<\nu_Q\le \mu_o+1$ and hence the exponents of ${}_{\mu}Q$, $\mu\ge \mu_o$ are all zero modulo $2^{\nu}$. Since the number of such exponents is even, the exponent set of $Q$ as a whole is equitable modulo $2^\nu$. This makes $Q_{2^{\nu}}$ unbalanced, concluding the proof.
\end{proof}

The following example illustrates \Cref{cj.bal} in the small case where there are two nonzero terms.
\begin{example} \label{exev}
Suppose $a_i \neq 0$ for $i= 3$ and $4.$ Thus $Q_n(x)$ in \Cref{eq:4} corresponds to $f_n= (0,3)_n+(0,4)_n$ when $n \geq 5.$ The algorithm of \cite{C18} shows that the recursion for the weights of $f_n,~n \geq 9,$ has order $15$.  If we let $u(n)$ denote the $n$-th term of the recursion sequence with $u(n)= wt(f_n)$ for $n \geq 9,$ then the recursion is
$$u(n)=2u(n-1)+16u(n-7)-32u(n-8)-128u(n-14)+256u(n-15)$$
for $n \geq 9$ (we begin at $n=9$ to avoid the short function $(0,4)_8;$ see the first paragraph in \Cref{genquad}).  Computation of $wt(f_n)$ for $9 \leq n \leq 23$ enables all of the terms $u(n)$ to be computed; in particular, the terms with $1 \leq n \leq 8$ can be found by extending the recursion backwards from $u(9).$ This gives the following initial segment $\{u(n): 1 \leq n \leq 18\}:$
$$\{0, 2, 6, 12, 20, 32, 0, 112, 240, 512, 1056, 2112, 4160, 8192, 16 256, 32 512,65 280, 131 072\}.$$
We see that the interval such that $Q_{\nu}$ is balanced  in \Cref{pr.ev-prel} is the single integer $1$ since $u(n) = 2^{n-1}$ (balanced) for $n=2$ but $u(4)=12.$ Thus $d(Q)=2$ in \Cref{cj.bal} and so the balanced functions $f_n$ are precisely those with $n \equiv 2 \bmod 4.$ 

The actual weight $wt(f_8)$ is $136$, but the recursion value $u(8)=112=wt((0,3)_8)$ is not equal to the weight. This is because \Cref{eq:24} reduces to the function $(0,3)_8$ since the short function $(0,4)_8$ vanishes identically in \Cref{eq:24}. Similarly $wt(f_6)=24$ but $u(6)=32=wt((0,4)_6)$ because the function $(0,3)_6$ is short. Also $u(7)=0$ since the function $f_7$ vanishes identically because $(0,3)_7 = (0,4)_7.$ We do obtain $u(5)=wt(f_5)=20,$ even tho $n<9$ in this instance.   
 \end{example}

\subsection{\Cref{cj.bal}, odd number of nonzero $a_i$}\label{subse.odd}

We now specialize to the titular case. The branch of \Cref{pr.when-v} valid here is

\begin{proposition}\label{pr.when-v-odd}
  Define $Q$ by \Cref{eq:4} with an odd number of terms and let $n=2^{\nu}$, $\nu\ge 0$. Then, $Q$ vanishes identically on $GF(2^n)$ if and only if its subscripts $i$ form a semi-equitable set modulo $n$ in the sense of \Cref{def.equit}.
\end{proposition}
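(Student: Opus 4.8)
The plan is to derive \Cref{pr.when-v-odd} as an immediate specialization of the general vanishing criterion \Cref{pr.when-v}, the only real ingredient being a parity count that selects the relevant branch. Recall that \Cref{pr.when-v} asserts that $Q$ vanishes identically on $GF(2^n)$ precisely when either (a) the nonzero subscripts $i$ can be partitioned into pairs $i,i'$ with $i\equiv\pm i'\bmod n$, or (b) the same holds after setting aside a single additional subscript congruent to $\tfrac n2\bmod n$. So the entire task is to read off which of these two branches is available under the present hypothesis and to match it against \Cref{def.equit}.

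First I would record the parity dichotomy between the two branches. In branch (a) every subscript lies in a pair, so the number of nonzero terms is even; in branch (b) there is exactly one unpaired subscript sitting on top of an even pairing, so the number of terms is odd. Consequently, under the standing assumption that $Q$ has an odd number of terms, branch (a) is impossible, and \Cref{pr.when-v} collapses to the single statement that $Q$ vanishes identically on $GF(2^n)$ if and only if branch (b) holds.

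It then remains to identify branch (b) with semi-equitability. Branch (b) says the multiset of residues (reduce each nonzero subscript modulo $n=2^{\nu}$, with multiplicity) is an equitable multiset together with one extra residue $\equiv\tfrac n2\bmod n$, while \Cref{def.equit} defines semi-equitable as an equitable multiset together with an \emph{odd} number of residues $\equiv\tfrac n2\bmod n$. The reconciling observation is that two copies of $\tfrac n2$ satisfy $\tfrac n2+\tfrac n2\equiv 0\bmod n$ and hence form an equitable pair, so any even surplus of $\tfrac n2$-residues may be absorbed into the equitable part; this is exactly the parenthetical ``(or equivalently, one such residue)'' in \Cref{def.equit}. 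Thus ``odd number of $\tfrac n2$-residues'' and ``exactly one $\tfrac n2$-residue modulo the equitable part'' describe the same multisets, and branch (b) is literally semi-equitability, completing the equivalence.

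I do not expect a genuine obstacle here, since the analytic content is already carried out in \Cref{pr.when-v} (through the identical-vanishing correspondence of \Cref{pr.vanish} and the reduction to counting monomials in \Cref{eq:24}); what is left is bookkeeping. The one point deserving a line of care is that the ``number of terms'' is a parity invariant under reduction modulo $n$: distinct integer subscripts may collapse to a common residue, but the resulting multiset keeps the same cardinality counted with multiplicity, so the odd-parity hypothesis transfers unchanged to the multiset of residues to which \Cref{def.equit} refers.
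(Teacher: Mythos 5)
Your proposal is correct and follows essentially the same route as the paper, whose proof is the one-line observation that the statement is just the branch of \Cref{pr.when-v} selected by the odd-number-of-terms hypothesis. Your write-up merely makes explicit the parity bookkeeping (branch (a) forces an even count, and surplus pairs of residues $\tfrac n2$ are absorbed into the equitable part), which the paper leaves implicit in the parenthetical of \Cref{def.equit}.
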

\begin{proof}
As observed, this is simply the variant of \Cref{pr.when-v} applicable here, given that we are assuming an odd number of terms. 
\end{proof}

This allows us to considerably narrow down the possibilities for when $Q$ is balanced.

\begin{proposition}\label{pr.odd-prel}
  Let $Q$ be as in \Cref{eq:4} with an odd number of terms. Then, the set of $\nu$ such that $Q_{2^{\nu}}$ is balanced is non-empty and takes one of these forms:
  \begin{enumerate}[(a)]
  \item\label{item:1} an initial segment
    \begin{equation*}
      \{0,\ 1,\ \cdots,\ \nu_{max}\}
    \end{equation*}
    of $\mathbb{Z}_{\ge 0}$;
  \item\label{item:2} an initial segment with one missing element.
  \end{enumerate}
\end{proposition}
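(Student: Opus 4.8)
The plan is to determine the set $S=\{\nu\ge 0:\ Q_{2^\nu}\text{ is balanced}\}$ by combining three facts already established and then isolating the single mechanism by which an initial segment can have a gap. \emph{Nonemptiness} is immediate: since $n=1$ is odd and $Q$ has an odd number of terms, \Cref{cor.odd-n} gives that $Q_1$ is balanced, so $0\in S$. \emph{Boundedness} is \Cref{pr.bet-pow2}, which says every $\nu\in S$ satisfies $\nu\le\nu_Q$; hence $\nu_{max}:=\max S$ exists.

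Next I would describe $S$ below $\nu_{max}$. Applying \Cref{pr.cutoff} with the balanced value $n=2^{\nu_{max}}$, for every $\nu'<\nu_{max}$ the function $Q_{2^{\nu'}}$ is unbalanced if and only if it vanishes identically. Consequently the only indices of $\{0,1,\dots,\nu_{max}\}$ that can fail to lie in $S$ are those $\nu'$ for which $Q_{2^{\nu'}}$ vanishes identically. Everything therefore reduces to bounding $Z:=\{\nu\ge 1:\ Q_{2^\nu}\text{ vanishes identically}\}$, and the proposition will follow once I show $|Z|\le 1$ (note $0\notin Z$, since $Q_1$ is balanced, and $\nu_{max}\notin Z$, since $Q_{2^{\nu_{max}}}$ is balanced).

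The heart of the argument, and the step I expect to be the main obstacle, is the bound $|Z|\le 1$. By \Cref{pr.when-v-odd}, $Q_{2^\nu}$ vanishes identically exactly when the multiset $I=\{i:a_i\ne 0\}$ of subscripts is semi-equitable modulo $2^\nu$ in the sense of \Cref{def.equit}. I would extract a valuation invariant that pins down $\nu$. For each $j\ge 0$ let $O_j$ be the number of subscripts $i\in I$ with $2$-adic valuation $\nu(i)=j$; the $O_j$ depend on $Q$ alone, not on $\nu$. A subscript has valuation exactly $\nu-1$ precisely when it is congruent to $2^{\nu-1}$ modulo $2^\nu$, and in a semi-equitable multiset such subscripts occur in odd number. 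On the other hand, every equitable pair $\{a,b\}$ satisfies $a\equiv\pm b\ (\mathrm{mod}\ 2^\nu)$, from which a short check gives $\nu(a)=\nu(b)$ whenever this common value is $<\nu$; hence the subscripts of each valuation $j<\nu-1$ pair off and $O_j$ is even. Thus a necessary condition for semi-equitability modulo $2^\nu$ is
\begin{equation*}
  O_0\equiv\cdots\equiv O_{\nu-2}\equiv 0\quad\text{and}\quad O_{\nu-1}\equiv 1\pmod 2.
\end{equation*}
This forces $\nu-1=\min\{j:\ O_j\text{ is odd}\}$, a quantity independent of $\nu$; since $\sum_j O_j=|I|$ is odd this minimum exists, and at most one $\nu$ can satisfy the condition. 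Therefore $|Z|\le 1$. This valuation bookkeeping plays the same role here that the index $\mu_o$ plays in the even case of \Cref{th.ev-fin}.

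Finally I would assemble the pieces: $\{0,1,\dots,\nu_{max}\}\setminus S=Z\cap\{0,\dots,\nu_{max}\}$ has at most one element, and that element, if present, is neither $0$ nor $\nu_{max}$. Thus $S$ is either the full initial segment $\{0,1,\dots,\nu_{max}\}$ (case \eqref{item:1}) or this segment with a single interior index removed (case \eqref{item:2}), which is exactly the claim.
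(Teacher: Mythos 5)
Your proof is correct, and its skeleton matches the paper's: non-emptiness from \Cref{cor.odd-n}, existence of $\nu_{max}$ from \Cref{cor.no-t} (you cite \Cref{pr.bet-pow2}, its quantitative refinement, which works equally well), reduction of ``unbalanced below $\nu_{max}$'' to ``identically vanishing'' via \Cref{pr.cutoff}, and finally uniqueness of the identically-vanishing $\nu$ via the semi-equitability criterion of \Cref{pr.when-v-odd}. The one place where you genuinely diverge is that last uniqueness step, and there your version is tighter. The paper argues that if the subscripts are semi-equitable modulo $2^{\nu}$, then the \emph{same} partition fails modulo any smaller $2^{\nu'}$: the paired-up indices still cancel, while the distinguished index $i\equiv 2^{\nu-1}\ \mathrm{mod}\ 2^{\nu}$ cannot satisfy the analogous congruence modulo $2^{\nu'}$. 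Strictly speaking this only rules out re-using that particular partition; it does not by itself exclude the existence of a \emph{different} semi-equitable partition modulo $2^{\nu'}$. Your invariant --- $O_j$, the number of subscripts of $2$-adic valuation exactly $j$, with semi-equitability modulo $2^{\nu}$ forcing $O_j$ even for $j\le \nu-2$ and $O_{\nu-1}$ odd, hence $\nu-1=\min\{j:\ O_j\ \text{odd}\}$ --- is independent of any choice of partition, so it closes that small gap cleanly. Both arguments rest on the same underlying idea (the valuation of the ``odd one out'' pins down $\nu$ uniquely), but yours is airtight where the paper's is slightly informal; it is also the natural odd-case analogue of the role played by $\mu_o$ in the proof of \Cref{th.ev-fin}, as you note.
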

\begin{proof}
  Non-emptiness follows from the odd number of terms assumption and \Cref{cor.odd-n}.
  
  We know from \Cref{cor.no-t} that there is a largest value $\nu_{max}$ such that $Q_{2^{\nu_{max}}}$ is balanced. Now note that there is at most one $\nu$ such that $Q_{2^{\nu}}$ vanishes identically on $GF(2^{2^{\nu}})$: if such a $\nu$ exists then the paired-up $i$ and $i'$ in \Cref{pr.when-v-odd} will still add up or subtract to zero modulo any smaller power of two, while the single index $i$ with
  \begin{equation*}
    i=2^{\nu-1}~\mathrm{mod}~2^{\nu}
  \end{equation*}
  can only satisfy that modular congruence for a single $\nu$, and hence no smaller $\nu'<\nu$ qualify.  

  In conclusion, in traversing the interval
  \begin{equation*}
    0,\ 1,\ \cdots,\ \nu_{max}
  \end{equation*}
  downward we can only make a transition from `balanced' to `unbalanced' at most once.
\end{proof}

Verifying \Cref{cj.bal} in the odd-number-of-terms case entails eliminating option \labelcref{item:2} in \Cref{pr.odd-prel}.

\begin{theorem}\label{th.od-fin}
  If $Q$ has an odd number of terms then \Cref{cj.bal} holds.
\end{theorem}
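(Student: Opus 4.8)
The plan is to eliminate possibility \labelcref{item:2} of \Cref{pr.odd-prel}; once that is done, the set of $\nu$ with $Q_{2^\nu}$ balanced is the initial segment $\{0,1,\dots,\nu_{max}\}$, and since balancedness of $Q_n$ depends only on $\nu(n)$ (by \Cref{th.val-iff}) this is exactly \Cref{cj.bal} with $c(Q)=\nu_{max}+1$. So I would argue by contradiction, assuming we are in case \labelcref{item:2}: there is a unique missing exponent $\nu_0$, which satisfies $1\le \nu_0<\nu_{max}$ (the lower bound because $Q_1$ is balanced by \Cref{cor.odd-n}). Since $Q_{2^{\nu_{max}}}$ is balanced and $\nu_0<\nu_{max}$, \Cref{pr.cutoff} forces the unbalanced function $Q_{2^{\nu_0}}$ to vanish identically, so by \Cref{pr.when-v-odd} the multiset of subscripts of $Q$ is semi-equitable modulo $2^{\nu_0}$ in the sense of \Cref{def.equit}.

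The crux is to decode this semi-equitability in terms of the $2$-adic valuations $\nu(i)$ of the subscripts, i.e. the decomposition $Q=\sum_\mu {}_\mu Q$. The lone ``half'' subscript $i_0\equiv 2^{\nu_0-1}\bmod 2^{\nu_0}$ has valuation exactly $\nu_0-1$, so it lives in the class ${}_{\nu_0-1}Q$. For a paired subscript relation $i\pm i'\equiv 0\bmod 2^{\nu_0}$ I would note that if either valuation is below $\nu_0$ then the two must be equal (otherwise $\nu(i\pm i')=\min(\nu(i),\nu(i'))<\nu_0$), so such a pair sits inside a single class ${}_\mu Q$ with $\mu<\nu_0$; the remaining pairs consist entirely of subscripts $\equiv 0\bmod 2^{\nu_0}$, i.e. of valuation $\ge\nu_0$. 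Consequently every class ${}_\mu Q$ with $\mu\le \nu_0-2$ has an even number of terms, while ${}_{\nu_0-1}Q$ (which also carries $i_0$) has an odd number. In the language of the proof of \Cref{th.ev-fin} this says precisely that $\mu_o$, the least $\mu$ for which ${}_\mu Q$ has an odd number of terms, equals $\nu_0-1$.

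With $\mu_o=\nu_0-1$ I would reuse the exact-power computation already invoked in \Cref{th.ev-fin}: because ${}_{\mu_o}Q$ has an odd number of terms, dividing its subscripts by $2^{\mu_o}$ produces an odd number of odd exponents, for which the exact power of $x-1$ dividing the associated Laurent polynomial is $2$ (expanding about $x=1$, the coefficient of $(x-1)^2$ is a sum of an odd number of $1$'s). Hence ${}_{\mu_o}d_Q=2^{\mu_o+1}=2^{\nu_0}$, and \Cref{cor.minmu} gives $d_Q\le {}_{\mu_o}d_Q=2^{\nu_0}$. The defining inequality \Cref{eq:28} for $\nu_Q$ then yields $2^{\nu_Q-1}<d_Q\le 2^{\nu_0}$, so $\nu_Q\le \nu_0$.

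This is the contradiction sought: \Cref{pr.bet-pow2} applied to the balanced $Q_{2^{\nu_{max}}}$ gives $\nu_{max}\le \nu_Q$, so we would have $\nu_{max}\le \nu_Q\le \nu_0<\nu_{max}$. Thus case \labelcref{item:2} cannot arise, the balanced set is the promised initial segment, and \Cref{cj.bal} follows. I expect the real work to be concentrated in the middle step: correctly translating semi-equitability modulo $2^{\nu_0}$ into the single parity statement $\mu_o=\nu_0-1$ requires keeping careful track of which pairs are allowed to cross valuation classes; once that bookkeeping is done, the rest is a short chain of already-established inequalities.
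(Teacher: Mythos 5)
Your proof is correct, and it reorganizes the paper's ingredients into a genuinely different logical shape. The paper proves the theorem directly, by replicating the Case 2 argument of \Cref{th.ev-fin}: it picks the least $\mu_o$ with ${}_{\mu_o}Q$ odd, and then shows for \emph{every} $\nu<\nu_Q$ that the exponent multiset of $Q$ fails to be semi-equitable modulo $2^{\nu}$ (the classes $\mu<\mu_o$ are equitable by \Cref{pr.ev-equit}, while the odd count of exponents in classes $\mu\ge\mu_o$, all $\equiv 0 \bmod 2^\nu$, ruins semi-equitability), so that $Q_{2^\nu}$ is nonvanishing and hence balanced by \Cref{pr.when-v-odd} and \Cref{pr.0orall}. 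You instead argue by contradiction from case \labelcref{item:2} of \Cref{pr.odd-prel}: the hypothetical gap $\nu_0$ forces identical vanishing via \Cref{pr.cutoff}, hence semi-equitability modulo $2^{\nu_0}$, which you then \emph{decode} (pairs crossing valuation classes must have both valuations $\ge\nu_0$, half-residues have valuation exactly $\nu_0-1$) into the single statement $\mu_o=\nu_0-1$; the chain $\nu_{max}\le\nu_Q\le\nu_0<\nu_{max}$ through \Cref{cor.minmu}, \Cref{eq:28} and \Cref{pr.bet-pow2} then closes the argument. So where the paper runs the semi-equitability implication forwards (valuation structure $\Rightarrow$ not semi-equitable) at every small $\nu$, you run it backwards (semi-equitable $\Rightarrow$ valuation structure) at the one hypothetical gap; your route bypasses \Cref{pr.ev-equit} entirely but consumes \Cref{pr.cutoff}, which the paper proves and never actually uses in this proof. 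The trade-off is that the paper's direct version yields slightly more explicit information (balancedness holds for all $\nu<\nu_Q$, so $c(Q)$ is $\nu_Q$ or $\nu_Q+1$), while yours localizes all combinatorial bookkeeping at $\nu_0$. Two cosmetic points: semi-equitability permits an odd number of half-residues rather than a literally ``lone'' one, though all of them have valuation $\nu_0-1$ so your parity count of ${}_{\nu_0-1}Q$ survives; and your expansion-about-$x=1$ justification of ${}_{\mu_o}d_Q=2^{\mu_o+1}$ is loose as stated, but that identity is exactly what the paper asserts and uses inside the proof of \Cref{th.ev-fin}, so citing it is legitimate.
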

\begin{proof}
  The argument will be very similar to that in \Cref{th.ev-fin}, and can in fact be replicated virtually verbatim. This time around, since $Q$ has an odd number of terms there must be some ${}_{\mu}Q$ with the same property and hence we can choose a smallest $\mu_o$ as before.

  The single point of divergence between the proof of \Cref{th.ev-fin} and the present one occurs in the very last paragraph of the former: here, the number of exponents of ${}_{\mu}Q$, $\mu\ge \mu_o$ is odd rather than even. This means that for $\nu<\nu_Q$ the set of exponents of $Q$ fails to be semi-equitable, making $Q_{2^{\nu}}$ balanced by \Cref{pr.when-v-odd}.

  In conclusion, $Q$ will be balanced precisely for $2^{\nu}$ for $\nu$ ranging over an initial interval of non-negative integers, as desired.
\end{proof}

The following example illustrates \Cref{cj.bal} in the simplest case for which there are three nonzero terms.
\begin{example} \label{exodd}
Suppose $a_i \neq 0$ for $i= 1, 2, 3.$ Thus $Q_n(x)$ in \Cref{eq:4} corresponds to $f_n= (0,1)_n+(0,2)_n+(0,3)_n$ when $n \geq 4.$ The algorithm of \cite{C18} shows that the recursion for the weights of $f_n,~n \geq 7,$ has order $9$.  If we let $u(n)$ denote the $n$-th term of the recursion sequence with $u(n)= wt(f_n)$ for $n \geq 7,$ then the recursion is
$$u(n)=2u(n-1)+4u(n-4)-8u(n-5)-16u(n-8)+32u(n-9)$$
for $n \geq 7$ (we begin at $n=7$ to avoid the short function $(0,3)_6;$ see the first paragraph in \Cref{genquad}).  Computation of $wt(f_n)$ for $7 \leq n \leq 21$ enables all of the terms $u(n)$ to be computed; in particular, the terms with $1 \leq n \leq 6$ can be found by extending the recursion backwards from $u(7).$ This gives the following initial segment $\{u(n); 1 \leq n \leq 16\}:$
$$\{1, 2, 4, 0, 16, 32, 64, 160, 256, 512, 1024, 2304, 4096, 8192, 16384,33280\}.$$
We see that the initial segment in \Cref{pr.odd-prel} is $\{0,1\},$ since $u(n) = 2^{n-1}$ (balanced) for $n=1,2$ but $u(4)=0.$  The actual weight $wt(f_4)$ is $6$ (since $f_4= (0,2)_4$ is a short bent function), but the recursion value $u(4)$ is not equal to the weight.  We do obtain $u(5)=wt(f_5)=16,$ even tho
 $n<7$ in this instance.   Note that $Q_4=0$ on $GF(2^4),$ so  \Cref{pr.when-v-odd} applies.  Indeed the set of subscripts is $\{1,2,3\}$ and this is a 
 semi-equitable set (\Cref{def.equit}) since $3+1 \equiv 0~\mathrm{mod}~4$ and $2 \equiv~\frac{4}{2}~  \mathrm{mod}~4.$
\end{example}

\section{Future work} \label{fut}

One important project is to extend the complete description of the affine equivalence classes for the MRS quadratic functions to the case of general RS functions. Because of \Cref{quadeq}, the results above can be used to decide whether two given quadratic RS functions are affine equivalent by means of a straightforward calculation of their nonlinearity and weight.  Obtaining a count of the equivalence classes (like \Cref{MRSeq} for the MRS quadratic functions)  for general RS functions in $n$ variables seems to require new ideas.

Given the algebraic normal form (ANF) of a RS Boolean function $f_n$ in $n$ variables with degree $d,$ the method described in \cite{C17, C18} for finding a linear recursion for $wt(f_n)$ gives the correct values for the Hamming weights only if $n$ is taken large enough to avoid including any short and bent functions in the calculation.  

If $f_n$ is quadratic, then we know from \Cref{n=2d+v} that $f_n$ is bent if and only if $v(n)=0,$ so it is easy to specify $B(f) = B$ such that $n \geq B$ is large enough.  Also, the ANF does not make sense if $n < d,$ but the weight recursion can be extended backwards to give values for any $n \geq 1.$  Given $f_n$ in the trace form $Q_n(x)$   (as defined in \Cref{se.prel}), we define the recursion values  for all $n \geq 1$ whatever the degree $d$ is.  In particular, \Cref{cj.bal} answers the question of when $Q_n(x)$ is balanced (that is, the corresponding recursion ``weight'' is $2^{n-1}$) for all 
$n \geq 1$ without worrying about the bent functions. Also, \Cref{cj.bal} shows that determining the ``weights'' for $1 \leq n < d$ is easy.  The reason why the bent functions are not a concern is that they disappear in the trace computations, since the only monomial bent quadratic RS functions  are the short ones $(0,t)_{2t},$ which are identically $0$ functions if their ANF is not reduced to $t$ monomials instead of $2t$ (see \Cref{pr.0orall}, \Cref{pr.vanish} and the discussion in between). These facts suggest that the algorithm as described in \cite{C17}, at least in the quadratic case, could be simplified by omitting any monomial bent function terms from the calculations.  

Computation of the roots of the recursion polynomials for various quadratic RS functions $f_n$   suggests that the roots of these polynomials are always algebraic integers with absolute value $\sqrt{2}.$   Neither the methods of \cite{C17,C18} nor the results in the present paper seem able to give any insight into this conjecture, but by using some new ideas we shall prove it and much more in a later paper \cite{CC20}. 

Another very interesting question is whether results similar to \Cref{cj.bal} are true for balanced functions of higher degree.



\begin{thebibliography}{20}
  
\bibitem{Meidl17}
N. Anbar, W. Meidl and A. Topuzoglu, Idempotent and p-potent quadratic functions: 
distribution of nonlinearity and co-dimension, {\it Des. Codes Cryptogr.} 82, pp. 265-291, 2017.

\bibitem{BCP}
M. L. Bileschi, T. W. Cusick and D. Padgett,
Weights of Boolean cubic monomial rotation symmetric functions,
{\it Cryptogr. Commun.} 4, pp. 105-130, 2012.

\bibitem{BC12}
A. Brown and T. W. Cusick,
Recursive weights for some Boolean functions,
{\it J. Math. Cryptol.} 6, pp. 105-135, 2012.

\bibitem{cgl-sec}
  {C. Carlet, G. Gao and W. Liu},
  \newblock {A secondary construction and a transformation on rotation symmetric functions, and their action on bent and semi-bent functions},
  \newblock {\it {J. Combin. Theory Ser. A}} 
  \newblock {127},
  \newblock pp. {161--175},
  \newblock {2014}.

\bibitem{csl}
{J. W. S. Cassels},
\newblock {\it {Local fields}}
\newblock ({Cambridge University Press, Cambridge}, {1986}).

\bibitem{CC20}
A. Chirvasitu and T. W. Cusick, Dynamical systems proof of existence of weight recursions
for rotation symmetric functions, to appear.

\bibitem{car-bool}
  {Y. Crama and P. Hammer},
  \newblock {\it {Boolean Models and Methods in Mathematics, Computer Science and Engineering}}
  \newblock ({Cambridge: Cambridge University Press}, {2010}).

\bibitem{Cu11}
T. W. Cusick,
Affine equivalence of cubic homogeneous rotation symmetric functions,
{\it Inform. Sci.} 181, pp. 5067-5083, 2011.

\bibitem{C13}
T. W. Cusick, Finding Hamming weights without looking at truth tables,
{\it Cryptogr. Commun.} 5, pp. 7-18, 2013.

\bibitem{IJCM}
T. W. Cusick, Permutation equivalence of cubic rotation symmetric functions,
{\it Int. J. Comput. Math.} 92, pp. 1568-1573, 2015.

\bibitem{C17}
T. W. Cusick, Weight recursions for any rotation symmetric Boolean functions,
https://arxiv.org/abs/1701.06648, 18 pp., 2017.

\bibitem{C18}
T. W. Cusick, Weight recursions for any rotation symmetric Boolean functions,
{\it IEEE Trans. Inform. Theory} 64, pp. 2962-2968, 2018.


\bibitem{CJ15}
T. W. Cusick and B. Johns,
Recursion orders for weights of Boolean cubic rotation symmetric functions,
{\it Discr. Appl. Math.} 186, pp. 1-6, 2015.



\bibitem{CLS09}
T. W. Cusick, Y. Li and P. St\u anic\u a,
On a conjecture for balanced symmetric Boolean functions,
{\it  J. Math. Cryptol.}   3, pp. 273-290, 2009.

\bibitem{CS}
T. W. Cusick and P. St\u{a}nic\u{a}, Fast evaluation, weights and nonlinearity of
rotation symmetric functions, {\it Discr. Math.} 258, pp. 289-301, 2002.

\bibitem{CBF}
T. W. Cusick and P. St\u{a}nic\u{a}, {\it Cryptographic Boolean Functions and Applications}, second ed. (San Diego: Academic Press, 2017).  First edition 2009.

\bibitem{recur} 
G. Everest, A. van der Poorten, I. Shparlinski and T. Ward:  \emph{Recurrence Sequences}. Math. Surveys Monographs
104 (Providence: American Mathematical Society, 2003).

\bibitem{Gao12}
G. Gao, X. Zhang, W. Liu and C. Carlet, Constructions of quadratic and cubic rotation
symmetric bent functions, {\it IEEE Trans. Inform. Theory} 58, pp. 4908-4913, 2012.

\bibitem{KGS06}
K. Khoo, G. Gong and D. Stinson,  A new characterization of semi-bent and bent functions 
on finite fields. Des. Codes Cryptogr. 38, pp. 279-295, 2006.

\bibitem{Kim09}
H. Kim, S.-M. Park and S. G. Hahn,
On the weight and nonlinearity of homogeneous rotation symmetric Boolean functions of degree $2$,
{\it Discr. Appl. Math.} 157, pp. 428-432, 2009.

\bibitem{Mc78} 
F.J. MacWilliams and N.J.A. Sloane,
{\it The Theory of Error-Correcting Codes}
(North-Holland, Amsterdam, 1978).

\bibitem{PQ}
J. Pieprzyk and C. X. Qu, Fast hashing and rotation-symmetric functions,
{\it J. Univers. Comput. Sci.} 5 (1), pp. 20-31, 1999.

\bibitem{ST15}
S. Su and X.Tang, On the systematic constructions of rotation symmetric bent functions with any possible algebraic degrees. http://eprint.iacr.org/2015/451 or https://arxiv.org/abs/1505.02875,
20 pp., 2015.

\bibitem{TQ15}
C. Tang, Y. Qi, Z. Zhou and C. Fan,  Two infinite classes of rotation symmetric bent functions with simple representation. https://arxiv.org/abs/1508.05674v2, 7 pp., 2015.

\bibitem{wl-lin}
{B. Wu and Z. Liu},
\newblock {Linearized polynomials over finite fields revisited},
\newblock {\it {Finite Fields Appl.}} 
\newblock {22},
\newblock pp. {79--100},
\newblock {2013}.


\end{thebibliography}
\end{document}